\PassOptionsToPackage{compress}{natbib}
\documentclass{article}
\usepackage{ijcai26}
\usepackage[T1]{fontenc}
\newcommand{\citet}[1]{\citeauthor{#1}~[\citeyear{#1}]}
\newcommand{\citep}[1]{\cite{#1}}
\usepackage{soul}
\usepackage{url}
\usepackage[hidelinks]{hyperref}
\usepackage[utf8]{inputenc}
\usepackage[small]{caption}
\usepackage{graphicx}
\usepackage{amsmath}
\usepackage{amsthm}
\usepackage{booktabs}
\usepackage[switch]{lineno}
\usepackage{times}
\usepackage{thmtools,thm-restate}
\usepackage{tabularx}
\usepackage{tikz}
\usetikzlibrary{calc}
\usepackage{graphicx}
\usepackage{caption}
\usepackage{subcaption}
\usepackage{amsmath,amsthm,mathtools}
\usepackage{amssymb}
\usepackage{comment}
\usepackage{tikz}
\usetikzlibrary{decorations}
\usetikzlibrary{decorations.pathreplacing}

\usepackage{algorithm}

\usepackage{newfloat}
\usepackage{listings}

\usepackage{enumitem}

\usepackage{pgfplots}
\pgfplotsset{compat=1.18}
\usepackage{pgfplotstable}

\usepackage{multirow}
\usepackage{amsmath, amsthm, amssymb, mathtools}
\usepackage{amsfonts}
\usepackage{thmtools}
\usepackage{dsfont}
\usepackage{tikz} 
\usepackage{makecell}
\usepackage[sort&compress,capitalise,nameinlink]{cleveref}
\usepackage{nicefrac}
\usepackage{comment}
\usepackage{tabularx}
\usepackage{csquotes}
\usepackage{thm-restate}
\usepackage[noend]{algpseudocode}
\usepackage{booktabs}
\newtheorem{example}{Example}
\newtheorem{theorem}{Theorem}
\renewcommand*{\le}{\leqslant}
\renewcommand*{\leq}{\leqslant}
\renewcommand*{\ge}{\geqslant}
\renewcommand*{\geq}{\geqslant}
\renewcommand{\epsilon}{\varepsilon}

\usepackage{resizegather}
\newcommand{\N}{V}
\newcommand{\x}{\boldsymbol{x}}
\newcommand{\Btau}{\boldsymbol{\tau}}
\newcommand{\p}{\boldsymbol{p}}
\renewcommand{\d}{\boldsymbol{d}}
\newcommand{\de}{\boldsymbol{d}}
\newcommand{\inst}{\mathcal{I}}

\newcommand{\ODP}{\texttt{ODP}}

\crefname{table}{Table}{Tables}
\crefname{figure}{Figure}{Figures}
\crefname{theorem}{Theorem}{Theorems}
\crefname{definition}{Definition}{Definitions}
\crefname{corollary}{Corollary}{Corollaries}
\crefname{observation}{Observation}{Observations}
\crefname{question}{Question}{Question}
\crefname{lemma}{Lemma}{Lemmas}
\crefname{claim}{Claim}{Claims}
\crefname{example}{Example}{Examples}
\crefname{reduction}{Reduction}{Reductions}
\crefname{construction}{Construction}{Constructions}
\crefname{subsection}{Section}{Sections}
\crefname{section}{Section}{Sections}
\crefname{proposition}{Proposition}{Propositions}
\crefname{algorithm}{Algorithm}{Algorithms}
\Crefname{equation}{Expression}{Expressions}
\crefname{lstlisting}{listing}{listings}
\crefname{appendix}{Appendix}{Appendices}

\renewcommand{\part}{{{\mathrm{part}}}}

\newcommand{\dist}{{{\mathrm{dist}}}}

\renewcommand{\phi}{\varphi}
\renewcommand{\leq}{\leqslant}
\renewcommand{\geq}{\geqslant}
\DeclareMathOperator*{\argmin}{argmin}
\DeclareMathOperator*{\argmax}{argmax}

\usepackage{listofitems}
\definecolor{darkred}{rgb}{0.64,0,0}
\definecolor{darkcyan}{rgb}{0,0.55,0.55}
\newcommand{\rowcolor}[1]{\textcolor{black}{#1}}
\newcommand{\columncolor}[1]{\textcolor{black}{#1}}

\newcommand{\nfgameN}[1]{%
\setsepchar{ }
\readlist\arg{#1}
\begin{tikzpicture}[scale=1.8]
	\node (RT) at (-2.2,1.1) [label=left:\rowcolor{ \arg[1] }] {};
\node (RB) at (-2.2,-1.1) [label=left:\rowcolor{\arg[2]}] {};
\node (CL) at (-1.1,2.2) [label=above:\columncolor{\arg[3]}] {};
\node (CR) at (1.1,2.2) [label=above:\columncolor{\arg[4]}] {};
\node (RTL) at (-1.1,1.1) {\rowcolor{\hspace{0.1cm}\arg[5]}}; 
\node (RBL) at (-1.1,-1.1) {\rowcolor{\hspace{0.1cm}\arg[6]}};
\node (RTR) at (1.1,1.1) {\rowcolor{\hspace{0.1cm}\arg[7]}};
\node (RBR) at (1.1,-1.1) {\rowcolor{\hspace{0.1cm}\arg[8]}};
\draw[-,very thick] (-2.2,-2.2) to (2.2,-2.2);
\draw[-,very thick] (-2.2,0) to (2.2,0);
\draw[-,very thick] (-2.2,2.2) to (2.2,2.2);
\draw[-,very thick] (-2.2,-2.2) to (-2.2,2.2);
\draw[-,very thick] (0,-2.2) to (0,2.2);
\draw[-,very thick] (2.2,-2.2) to (2.2,2.2);
\end{tikzpicture}}

\newcommand{\nfgametwothree}[1]{%
\setsepchar{ }
\readlist\arg{#1}
\begin{tikzpicture}[scale=0.9]
\node (RT) at (-3,1)  [label=left:\rowcolor{\arg[1]}] {};
\node (RB) at (-3,-1) [label=left:\rowcolor{\arg[2]}] {};
\node (CL) at (-2,2)  [label=above:\columncolor{\arg[3]}] {};
\node (CM) at (0,2)  [label=above:\columncolor{\arg[4]}] {};
\node (CR) at (2,2)  [label=above:\columncolor{\arg[5]}] {};

\node (RTL) at (-2.4,0.6) {\rowcolor{\arg[6]}}; 
\node (CTL) at (-1.6,1.4) {\columncolor{\arg[7]}};
\node (RBL) at (-2.4,-1.4) {\rowcolor{\arg[12]}};
\node (CBL) at (-1.6,-0.6) {\columncolor{\arg[13]}};

\node (RTM) at (-0.4,0.6) {\rowcolor{\arg[8]}}; 
\node (CTM) at (0.4,1.4) {\columncolor{\arg[9]}};
\node (RBM) at (-0.4,-1.4) {\rowcolor{\arg[14]}};
\node (CBM) at (0.4,-0.6) {\columncolor{\arg[15]}};

\node (RTR) at (1.6,0.6) {\rowcolor{\arg[10]}}; 
\node (CTR) at (2.4,1.4) {\columncolor{\arg[11]}};
\node (RBR) at (1.6,-1.4) {\rowcolor{\arg[16]}};
\node (CBR) at (2.4,-0.6) {\columncolor{\arg[17]}};

\draw[-,very thick] (-3,-2) to (3,-2);
\draw[-,very thick] (-3,0) to (3,0);
\draw[-,very thick] (-3,2) to (3,2);
\draw[-,very thick] (-3,-2) to (-3,2);
\draw[-,very thick] (-1,-2) to (-1,2);
\draw[-,very thick] (1,-2) to (1,2);
\draw[-,very thick] (3,-2) to (3,2);
\draw[-,very thin] (-3,2) to (1,-2);
\draw[-,very thin] (-1,2) to (3,-2);
\draw[-,very thin] (-3,0) to (-1,-2);
\draw[-,very thin] (1,2) to (3,0);
\draw[-,very thin] (-3,-2) to (-1,-2);
\end{tikzpicture}}

\usetikzlibrary{patterns} 

\usepackage{tablefootnote}
\theoremstyle{definition}
\newtheorem{definition}{Definition}

\theoremstyle{plain}
\newtheorem{lemma}[theorem]{Lemma}
\newtheorem{corollary}[theorem]{Corollary}

\newtheorem{proposition}[theorem]{Proposition}
\newtheorem{observation}[theorem]{Observation}
\newtheorem*{example1cont}{\Cref{Longterm:ex:start} Continued}

\newtheoremstyle{noparentheses}
  {\topsep}   
  {\topsep}   
  {\itshape}  
  {0pt}       
  {\bfseries} 
  {.}         
  {5pt plus 1pt minus 1pt} 
  {\thmname{#1}\thmnote{#3}}    
\theoremstyle{noparentheses}

\usepackage{soul}

\definecolor{dred}{rgb}{0.48,0.09,0.09}

\definecolor{brightube}{rgb}{0.82, 0.62, 0.91}

\newtheoremstyle{def}
  {}{}
  {\itshape}
  {}
  {\bfseries}
  {.}
  { }
  {%
   \thmname{#1}
   \thmnumber{ #2}
   \thmnote{ {\bfseries\ifdef(\fi#3\ifdef)\fi}}
  }

{\theoremstyle{def}{
}}

\newtheoremstyle{ex}
{}{}
{}
{}
{\bfseries}
{.}
{ }
{%
	\thmname{#1}
	\thmnumber{ #2}
	\thmnote{{\bfseries: \ifex(\fi#3\ifex) \fi}}
}
\newif\ifex

{\theoremstyle{ex}{
		}}
\Crefname{exof}{Example}{Examples}

\pgfplotsset{compat=1.11,
    /pgfplots/ybar legend/.style={
    /pgfplots/legend image code/.code={%
       \draw[##1,/tikz/.cd,yshift=-0.25em]
        (0cm,0cm) rectangle (3pt,0.8em);},
   },
}

\expandafter\def\expandafter\normalsize\expandafter{%
    \normalsize%
    \setlength\abovedisplayskip{-5pt}%
    \setlength\belowdisplayskip{2pt}%
    \setlength\abovedisplayshortskip{-8pt}%
    \setlength\belowdisplayshortskip{3pt}%
}
\title{Cycles in Liquid Democracy: A Game-Theoretic Justification} 
\author{
Markus Brill$^1$
\and
Rachael Colley$^2$\and
Anne-Marie George$^{3}$\and
Grzegorz Lisowski$^4$\and\\
Georgios Papasotiropoulos$^5$ \And
Ulrike Schmidt-Kraepelin$^6$\\
\affiliations
$^1$University of Oxford\\
$^2$University of Glasgow\\
$^3$University of Oslo\\
$^4$University of Groningen\\
$^5$University of Warsaw\\
$^6$TU Eindhoven\\
\emails
markus.brill@cs.ox.ac.uk,
rachael.colley@glasgow.ac.uk,
annemage@ifi.uio.no,
g.a.lisowski@rug.nl,
gpapasotiropoulos@gmail.com,
u.schmidt.kraepelin@tue.nl
}

\begin{document}

\maketitle


\begin{abstract}
A common criticism of liquid democracy within the relevant academic literature is that \textit{delegation cycles} can occur, seemingly resulting in unused voting power. Yet, practitioners argue that delegation cycles are not only unproblematic but are even formed intentionally by participants. 
To bring theory closer to reality, we introduce a model that captures this strategic behavior under uncertainty. We study the existence, structure and quality of Nash equilibria, revealing that delegation cycles naturally emerge. To complement these findings, we perform computational experiments using best-response dynamics.\looseness=-1
\end{abstract}

%

\section{Introduction}

Liquid democracy is a flexible voting system that allows voters to vote directly or \emph{delegate} their vote to another participant, who can vote on their behalf. Delegations are \emph{transitive}, meaning that delegated votes can be delegated further, creating \emph{delegation chains}. The voter at the end of a chain casts ballots on behalf of everyone in the chain.
This system combines the advantages of \emph{direct democracy} and \emph{representative democracy} by giving voters the freedom to choose their own mode of participation \citep{BlZu16x}.

Over the past decade, academic interest in liquid democracy has grown rapidly \citep{liquid_survey,papasotiropoulosliquid}. 
However, practitioners have noted that some parts of this literature overlook or misinterpret key aspects of liquid democracy as it is implemented in practice \citep{behrens2014principles,behrens2015circular}. While liquid democracy is often applied over an extended period of time to an ongoing stream of decisions, much of the literature models it as a one-time event. This divergence has led to differing views on the issue of \textit{delegation cycles}. In the literature, \textit{delegating} and \textit{voting} are frequently modeled as mutually exclusive options. Consequently, \textit{delegation cycles} (i.e., 
a situation where a voter~$i$ delegates to~$j$, and this delegation eventually returns to~$i$ through a chain of delegations) are seen as problematic because none of the voters in the cycle eventually casts a vote, leaving their collective voting weight unused. In contrast, \citet{BKNS21x} propose an alternative interpretation (informed by their experience with the \textit{LiquidFeedback} platform): voters specify \textit{default delegations} that remain fixed across multiple decisions. These default delegations serve as a fallback whenever voters do not cast a vote; if a voter casts a vote, their default delegation is ignored. 
Under this interpretation, like-minded voters who trust one another may intentionally form delegation cycles with their default delegations. These cycles rarely result in lost votes for actual decisions, since all the voting weight in the cycle is used as long as at least one of the involved voters casts a vote.

To bridge theory and practice, we introduce the \emph{default delegation model}. Voters declare delegations for a future election, which will be used only if they do not participate themselves. To model uncertainty about who will participate in this election, each voter has a fixed probability of participating in the election. A voter's \emph{utility} thus depends on both 
(i) other voters’ delegation choices and 
(ii) who actually turns out to vote. We assume that participants live in a (one-dimensional) metric space and prefer to be represented by those who are close to them. Moreover, voters can have different \emph{tolerance levels} towards being represented by far-away voters. While the model is not explicitly temporal, analyzing strategic behavior in this probabilistic setting can shed light on long-term delegation dynamics. In a world with a stream of similar elections, where each voter participates in only a fraction of them and participation is not highly correlated, equilibrium states in our model can serve as proxies for long-term behavior across the election stream.

We provide theoretical evidence supporting the practical observation that delegation cycles naturally arise among rational users of liquid democracy platforms. More concretely, 
within the default delegation model,
we provide a game-theoretic analysis of how voters can strategically set their default delegations to ensure that their voting power ends up with casting voters whose preferences closely align with their own.
We study (i) \emph{existence}, (ii) \emph{structure}, and (iii) \emph{quality} of (pure) Nash equilibria and demonstrate that, under mild assumptions, delegation cycles are necessarily formed. 
Our findings are complemented by computational experiments, providing additional insights into best-response dynamics.\looseness-1

\subsection{Related Work}
\label{sec:related}
Recent research in (computational) social choice and beyond shows a growing interest in liquid democracy, with various models and methodologies emerging. 
We focus on studies that are related to ours in terms of these two aspects.

\paragraph{The Role of Cyclic Delegations.}
A key branch of the literature addresses delegation cycles, typically viewing them as undesirable and proposing solutions to eliminate them \citep{brill2022liquidranked,cycleschristoff2016liquid,CGN22x,parameterized,cyclejain2022preserving,cyclekoppe2022fine,cycleskotsialou2018incentivising,MaPa25a,tyrovolas2024unravelling,utke2024anonymousu}. All of these works distinguish between delegating and casting voters and focus on axiomatic and algorithmic aspects, rather than strategic behavior. 
Notably, the work by \citet{MaPa25a}, like ours, was directly motivated by the study of \citet{BKNS21x}.
They study a temporal model with delegation updates over discrete time-steps. By contrast, in our paper, the temporal aspect is captured through the probabilistic model for ballot casting. 
Other approaches prevent cycles by design: 
\citet{FlexRep} 
disallow transitive delegations; 
\citet{kahng2021liquid} only allow delegations to voters with higher competence;
\citet{caragiannis2019contribution} assume the existence of a mechanism preventing cyclic delegations.
\vspace{-0.1cm}
\paragraph{One-Dimensional Spatial Models.}
To model voters' preferences over potential delegates, we assume that voters are positioned in a one-dimensional metric space. This is a common modeling choice (often representing ideological alignment), used in delegative voting settings by \citet{yamakawa2007toward}, \citet{Gree15x}, \citet{CMM+17an}, \citet{escoffier2020iterative}, and \citet{anshelevich2021representative}.

\vspace{-0.1cm}
\paragraph{Strategic Delegation Behavior.}
A prominent line of research in liquid democracy employs a game-theoretic perspective. Specifically, this includes work that analyzes Nash equilibria of delegation games \citep{NEbersetche2022generalizing,bloembergen2019rational,EGP19x,escoffier2020iterative}, provides worst-case guarantees \citep{noel2021pirates}, or studies voting power \citep{zhang2021power,CDG23}.
However, the games that these works consider are significantly different from the ones we analyze. None of them models the strategic choice of delegates under probabilistic voter participation. For instance, prior utility models are based on the effort of voting \citep{bloembergen2019rational} or are analyzed in purely deterministic settings \citep{EGP19x,escoffier2020iterative}.

\subsection{Our Contribution}  
Our central contribution is conceptual: we introduce the novel \emph{default delegation model} for liquid democracy. It captures and explains strategic delegation behavior under uncertain participation\,---\,a feature inherent in real-world liquid democracy systems.
\noindent To provide insights into the occurrence, merits, and soundness of cycles, and to establish the first principled justification for their emergence, we analyze the following:

\vspace{-0.1cm}
\paragraph{Existence of Nash Equilibria.} 
Our computational experiments suggest that Nash equilibria are prevalent across a broad range of synthetic instances.
In \Cref{sec:existence}, on the negative side, we identify instances where a Nash equilibrium does not exist, even in simple settings with only three voters, or where all voters have identical tolerance levels. On the positive side, we establish the existence of Nash equilibria in several special cases or slight variants of our original model.
\vspace{-0.1cm}
\paragraph{Structure of Nash Equilibria.} 
In \Cref{sec:structure}, we prove that, under mild assumptions, strategic voters form delegation cycles in equilibrium. More precisely, every non-trivial component of an equilibrium delegation graph contains exactly one cycle. Furthermore, we show that relaxing any of the assumptions invalidates the result, and we provide additional insights into the structure of delegations at equilibrium.
In more general settings, computational experiments reveal that the vast majority of components contain cycles. The width of these cycles appears to be proportional to voters' tolerance levels and inversely proportional to the number of voters.
\vspace{-0.1cm}
\paragraph{Quality of Nash Equilibria.} In \Cref{sec:qualityofNE} we evaluate the quality of Nash equilibria primarily in terms of their social welfare, i.e., the total utility they achieve, and we measure the \textit{Price of Anarchy (PoA)}, i.e., the ratio between the best possible social welfare and the social welfare of equilibria. 
While we prove that the PoA is generally unbounded, we also provide strong positive results: for non-degenerate instances, the difference between the two quantities is bounded, and as voting probabilities increase or tolerance levels decrease, the welfare in equilibrium approaches the optimal social welfare.
Moreover, our experiments show that Nash equilibria often achieve close-to-optimal social welfare.

\smallskip
\noindent All proofs and omitted details can be found in the Appendix.
To show how they complement the theory, our experimental findings are presented right after the results they support.

\section{The Default Delegation Model}
\label{sec:model}

We consider a finite set $\N$ of \emph{voters} using a liquid democracy platform. Each voter nominates a \emph{default delegate} for a future election, in which any voter may choose to vote or abstain.  
If a voter $i \in \N$ does \emph{not} vote, their voting power is passed to their default delegate, continuing transitively until a voter who casts a vote is reached\,---\,this voter is called $i$’s \emph{ultimate delegate}.  
If no one in the delegation chain votes, $i$ has no ultimate delegate and their voting power is lost.

\vspace{-0.1cm}
\paragraph{Default Delegations.}
For each voter $i$, we let {$d(i) \in \N$} denote their \textit{default delegate}. 
Self-nominations ($d(i)=i$) are allowed and interpreted as abstentions from nominating a default delegate. 
Each \textit{delegation profile} $\de=(d(i))_{i \in V}$ naturally corresponds to a (directed) \emph{delegation graph} $G_{\de} = (\N, \{(i,d(i)) \mid i\in \N\})$ whose edges correspond to default delegations. Thus, each vertex of $G_{\de}$ has out-degree exactly~$1$ and self-nominations correspond to self-loops. 

\vspace{-0.1cm}
\paragraph{Ultimate Delegates.}
Assume that the set of voters casting a vote is known, and let $X \subseteq V$ denote this set. The default delegations of voters in $X$ are irrelevant. Therefore, to resolve delegations for this election\,---\,that is, to determine which voters in $\N \setminus X$ are ultimately represented by which casting voters in $X$ under $\de$\,---\,it suffices to consider the subgraph of $G_{\de}$ that only contains delegations from non-casting voters. For each non-casting voter $i\in V\setminus X$, we can identify their ultimate delegate by following the (unique) directed walk in this graph starting from~$i$. If this walk leads to a casting voter $j \in X$, then $j$ is the ultimate delegate of~$i$. If the walk leads to a cycle or a self-loop,\footnote{
We use the term ``cycle'' exclusively for closed paths involving \textit{at least two vertices}, excluding self-loops from this definition.} then $i$ has no ultimate delegate. 
Each casting voter has a voting weight in the 
examined election equal to the number of voters for whom they are the ultimate delegate, themselves included.\looseness=-1

\vspace{-0.1cm}
\paragraph{Probabilistic Participation.}
A crucial ingredient of our model is the assumption that voters do not know which other voters are casting a vote in the future election. 
Rather, when choosing a default delegate, they need to consider different possibilities of where their vote ``ends up''. To capture this uncertainty, we use a simple probabilistic model. 
Namely, we assume that each voter $i \in \N$ casts a vote with a fixed probability $p_i \in [0,1]$.
We let $\p=(p_i)_{i \in \N}$ denote the profile of vote-casting probabilities. A voter $i$ with $p_i\in \{0,1\}$ is called \emph{deterministic}. 
This modeling choice reflects the idea that, in practice, many (independent) elections occur, and voters are not aware of or engaged in every single one of them. The probability $p_i$ then can be thought of as the fraction of elections in which voter~$i$ typically participates. 
This probabilistic approach provides a simple way to model varying voting behavior without requiring complex assumptions about individual awareness or decision-making for each election.

Given a delegation profile $\de$ and a voter $i \in \N$, we can now derive the probability distribution over $i$'s ultimate delegates. Let $\pi(\de, i)$ denote the longest \textit{simple path} in $G_{\de}$ starting at $i$.
Then, $\pi(\de, i)$ is the unique sequence $(y_0, y_1, \cdots, y_k)$ of distinct voters starting with $y_0=i$ and satisfying the following:

\begin{enumerate}
    \item[(i)] $d(y_{\ell -1})=y_\ell$
    and $y_\ell$ 
    $\notin \{y_0,y_1, y_2, \cdots, y_{\ell-1}\}$ for $\ell \le k$,
\item[(ii)] $d({y_k}) \in \{y_0,y_1, y_2, \cdots, y_k\}$. 
\end{enumerate}

\noindent Voter $i$'s ultimate delegate is the first casting voter along the path $\pi(\de, i)$. 
Therefore, for $\ell \in \{0,1,\cdots,k\}$, the probability that voter $y_\ell$ is the ultimate delegate of $i$ is given by 

\begin{equation} \label{eq:prob}p_{y_\ell}\cdot\prod_{r=0}^{\ell-1} (1-p_{y_r}).
\end{equation}
The ultimate delegate of $i$ is undefined with probability $\prod_{r=0}^{k} (1-p_{y_r})$, which we interpret as $i$'s ballot being lost.

\paragraph{Distance and Tolerance.}
 To evaluate and compare different delegation options, we assume that each voter's utility 
 depends on the alignment between their preferences and those of their ultimate delegate. Alignment is defined in terms of the Euclidean distance between voters along a one-dimensional ideological space, represented as the interval $[0, 1]$. Each voter $i \in \N$ is associated with a fixed position $x_i \in [0,1]$, reflecting their ideological stance, which remains constant throughout all elections. Let $\x=(x_i)_{i \in \N}$ denote \emph{positions}. The utility of a voter decreases as the distance between their position and that of their ultimate delegate increases, capturing the notion that voters prefer representatives who are ideologically closer to themselves.   
 Essentially, a voter’s utility is single-peaked \citep{elkind2022preference} at their own location, decreases linearly within a fixed radius, and drops to zero beyond that threshold. We view this as a natural first modeling step and refer to \cref{sec:concl} for extensions.
 
Each voter $i \in \N$ is associated with a tolerance parameter $\tau_i \ge 0$. 
This represents the maximum distance the voter is willing to accept between their own position and that of their ultimate delegate, while still deriving positive utility from delegating. 
Let  $\text{dist}(i,j)=|x_i-x_j|$ be the distance between voters $i$ and $j$, and let $\Btau = (\tau_i)_{i\in \N}$.
The \textit{acceptability set} of voter $i$ is given by:
$\mathcal{A}_i(\x,\Btau)= \{j \in \N\mid \text{dist}(i,j) \leq \tau_i\}.$

\vspace{-0.1cm}
\paragraph{Instances.} Given the voters' positions $\x$, their voting probabilities $\p$, and their tolerance parameters $\Btau$, we define an instance as the triple $\inst = \langle \x, \p, \Btau\rangle$. To avoid ties, we will always assume that our instances are in \emph{general position}, meaning that no two voters share the same position and that no voter $j \neq i$ is at distance exactly $\tau_i$ from voter~$i$.\footnote{If an instance is not in general position, then slightly perturbing some entries of $\x$ will bring the instance into general position.}
Further, we will sometimes assume that 
$j$ belongs to $\mathcal{A}_i(\x,\Btau)$ if and only if $i$ belongs to $\mathcal{A}_j(\x,\Btau)$.
If this holds for all pairs of voters, then we say that the instance satisfies \emph{mutual acceptance}. A special case of mutual acceptance instances are those where all voters have identical tolerance parameters ($\tau_i=\tau_j$ for all $i,j \in \N$); we will refer to such instances as \emph{symmetric}. 
We often assume that voters are ordered by increasing $x_i$, and then specify $\p$ and $\Btau$ as vectors corresponding to that order.

\vspace{-0.1cm}
\paragraph{Voter Utility.}
The utility of a voter from a realization of the random election is defined as their tolerance minus the distance to their ultimate delegate, or $0$ if the ultimate delegate is undefined. In other words, voters rank potential ultimate delegates by proximity and prefer abstaining over delegating to a voter outside their acceptability set. Formally, given positions $\x$, tolerances $\Btau$, and the set $X$ of casting voters, the utility of voter $i$ under a delegation profile $\de$ is $\tau_i - \text{dist}(i, j)$, where $j \in X$ is the ultimate delegate of $i$, or $0$ if no ultimate delegate is defined. 
Due to probabilistic participation, the ultimate delegate of voter $i$ is a random variable (distributed according to \Cref{eq:prob}). To account for this randomness, we define the \textit{expected utility} of voter $i$ as the weighted sum of their utilities over all possible ultimate delegates. Specifically, the expected utility of voter~$i$ (henceforth, simply \emph{utility}) in an instance $\inst$
can be expressed as

\begin{equation}
u_i(\de,\inst)  = \sum_{\ell=0}^k (\tau_i - \text{dist}(i, y_{\ell})) \cdot p_{y_{\ell}} \cdot \prod_{r=0}^{\ell-1}(1-p_{y_r})
\label{eq:utility}
\end{equation}
where $(y_0, y_1, \cdots, y_k)$ are the voters along the path $\pi(\de,i)$.  
When the instance $\inst$ is clear, we refer to $u_i(\de,\inst)$ as $u_i(\de)$. 
Normalization would not affect our results on NE existence and structure.
The \emph{social welfare} of a profile $\de$ in $\inst$ is the sum over voter utilities, 
$SW(\de,\inst) = \sum_{i\in \N}u_i(\de,\inst)$. The profile maximizing it among all possible profiles for $\inst$ will be called \emph{optimal} and denoted by $\de_{SW}(\inst),$ or simply $\de_{SW}$.

\begin{example}\label{Longterm:ex:start}
Consider an instance with six voters, $\N=\{A,B,C,D,E,F\}$. 
The positions~$\x$ and voting probabilities~$\p$ are depicted below, together with the delegation graph $G_{\de}$ of the delegation profile $\de$ with
$d(A)=B$,
$d(B)=C$,
$d(C)=A$,
$d(D)=E$,
$d(E)=D$,
$d(F)=F$.

\begin{figure}[h!]
\centering
\resizebox{0.95\columnwidth}{!}{
\begin{tikzpicture}[node distance=1cm,semithick, scale=1]
\node at  (2,0.5) {\small $0.2$};
\node at  (3,0.5) {\small $0.3$};
\node at  (4,0.5) {\small $0.4$};
\node at  (5,0.5) {\small $0.5$};
\node at  (6,0.5) {\small $0.6$};
\node at  (8,0.5) {\small $0.8$};
\node (A) at  (2,-0.3) {$A$};
\node (B) at  (3,-0.3) {$B$};
\node (C) at  (4,-0.3) {$C$};
\node (D) at  (5,-0.3) {$D$};
\node (E) at  (6,-0.3) {$E$};
\node (F) at  (8,-0.3) {$F$};

\node at  (0,-0.3) {$\de:$};
\node at  (0,0.5) {$\x:$};
\node at  (0,1) {$\p:$};
\node at  (2,1) {\small $0.8$};
\node  at  (3,1) {\small $0.3$};
\node at  (4,1) {\small $0.2$};
\node at  (5,1) {\small $0.3$};
\node at  (6,1) {\small $0.1$};
\node at  (8,1) {\small $0.3$};

%
%

\path[thick] (1,0.2) edge (8.5,0.2)
            (2,0.2) edge (2,0.15)
            (3,0.2) edge (3,0.15)
            (4,0.2) edge (4,0.15)
            (5,0.2) edge (5,0.15)
            (6,0.2) edge (6,0.15)
            (8,0.2) edge (8,0.15);
%
%
\draw[->] (F) to [loop left]  (F);

\draw[->]           (A) edge (B)
                        (B) edge (C)
                        (C) edge[bend left] (A)
                        (D) edge[bend left] (E)
                        (E) edge[bend left] (D);
%
\end{tikzpicture} }
\label{fig:longtermstart}
\end{figure}
Suppose that for a given election, the set of casting voters is $\{A,F\}$.
This situation happens with probability 
$p_A\cdot p_F \cdot \Pi_{i\in \{B,C,D,E\}} (1-p_i)= 0.8 \cdot 0.3 \cdot 0.7 \cdot 0.8 \cdot 0.7 \cdot 0.9  \approx 0.085.$
In this scenario, $A$ has a voting weight of $3$ and $F$ has a voting weight of $1$. The voting weights of $D$ and $E$ cannot be allocated to a casting voter and are lost.

Assume that $\tau_i=0.25$ for all $i \in \N$. Then the acceptability set of voter $D$ is $\mathcal{A}_D(\x,\Btau)=\{B,C,D,E\}$.
The expected utility of 
voters $A$ and $F$ with respect to the delegation profile $\de$ are 
$u_A(\de)=0.8\cdot0.25+0.2\cdot 0.3\cdot (0.25-0.1)+0.2\cdot 0.7 \cdot 0.2\cdot (0.25-0.2)\approx 0.21$ and $u_F(\de)=0.3\cdot 0.25=0.075$.
\end{example}

\section{Existence of Nash Equilibria}
\label{sec:existence}

We start our game-theoretic analysis of the default delegation model. 
We define the concept of best responses and profitable deviations in a standard way. 
We denote by $\de_{-i}$ the profile $\de$ not including the choice of $i$, and by $(\de_{-i},d'(i))$ the delegation profile in which all voters except $i$ delegate according to $\de$, whereas $i$ delegates to $d'(i)$. 

\begin{definition}\label{def:BR}
For a voter $i\in \N,$ $d'(i)$ is a \emph{best response} to delegation profile $\de$ if and only if 
it maximizes $u_i(\de_{-i},\cdot)$.
We say that $d'(i)$ is a \emph{profitable deviation} from $\de$ for voter $i$ if 
$u_i(\de_{-i},d'(i)) > u_i(\de)$.
\end{definition}

Building on the notion of profitable deviations, we are ready to define (pure) Nash equilibria. We focus on pure (rather than mixed) equilibria, 
not only because they are the standard starting point for strategic reasoning on a novel model, but also
because practical implementations of liquid democracy typically require deterministic delegations, making pure equilibria an appropriate model of real-world delegation behavior.

\begin{definition}\label{def:NE}
A delegation profile $\de$ is a \emph{Nash equilibrium (NE)} if no voter $i\in\N$ has a profitable deviation from $\de$. 
\end{definition}

Through our initial example, we highlight that equilibria are not necessarily unique and that different equilibria may have different graph-theoretic structures. 

\begin{example1cont}\label{ex:NE}

The delegation profile $\de$ from \Cref{Longterm:ex:start} is a Nash equilibrium, since each voter chooses a best response, as shown in 
\Cref{tab:NEexample}.
Note that equilibrium delegations need not go to the closest voter.
Interestingly, $\de$ is not the only NE of this instance. It can be verified that $\de'$, whose delegation graph $G_{\de'}$ is shown below, is also an NE. 
We highlight that two equilibria might significantly differ. For instance, $G_{\de'}$ has two (weakly) connected components, in contrast to $G_{\de}$. Moreover, $\de$ and $\de'$ differ in terms of social welfare, casting voters' weights and expected number of votes that are lost. 

\begin{figure}[h]
\centering
\resizebox{0.9\columnwidth}{!}{
\begin{tikzpicture}[node distance=1cm,semithick,scale=1]

\node (A) at  (2,-0.3) {$A$};
\node (B) at  (3,-0.3) {$B$};
\node (C) at  (4,-0.3) {$C$};
\node (D) at  (5,-0.3) {$D$};
\node (E) at  (6,-0.3) {$E$};
\node (F) at  (8,-0.3) {$F$};
\node (g) at  (1,-0.3) {$\de'$:};

\draw[->] (F) edge [bend left]  (E)
                (E) edge[bend left](F)
                (D) edge[] (C)
                (C) edge[] (B)
                (B) edge[] (A)
                (A) to[bend right] (C);

\end{tikzpicture}}%
\end{figure}

\begin{table}[t]
    \vspace{0.3cm}\centering
    \[
\scalebox{0.88}{\begin{tabular}{ccccccc}
    \toprule
      &$A$&$B$&$C$&$D$&$E$&$F$  \\
     \midrule
    $A$&0.200 &  $\mathbf{0.210}$  & 0.202 & 0.195 & 0.194 & 0.179\\
    $B$&0.159 & 0.075 &  $\mathbf{0.163}$  & 0.083 & 0.081 & 0.023\\
    $C$& $\mathbf{0.089}$  & 0.086 & 0.050 & 0.089 & 0.086 & 0.014\\
    $D$&0.052 & 0.086 & 0.075 & 0.075 &  $\mathbf{0.085}$  & 0.064\\
    $E$&-0.084 & -0.043 & -0.055 &  $\mathbf{0.066}$  & 0.025 & 0.039\\
    $F$&-0.134 & -0.102 & -0.111 & 0.067 & 0.069 &  $\mathbf{0.075}$ \\
    \bottomrule
\end{tabular} }
\]%
    \caption{Expected utility for deviations from $\de$ in \Cref{Longterm:ex:start}. The entry in cell $(i,j)$ corresponds to $u_i(\de_{-i},j)$ and the entries corresponding to best responses are indicated in bold.}
\label{tab:NEexample}
\end{table}
\end{example1cont}

\paragraph{Experimental Analysis of NE Existence.} To get a first impression of whether Nash equilibria exist in general, we carried out computational experiments using a \emph{best-response dynamic}. That is, the process starts with some delegation profile (e.g., a random one)
and then iterates over the voters, updating their delegation whenever there is a profitable deviation. The process stops when no voter can make a profitable deviation, which results in a Nash equilibrium by definition. Interestingly, running our best-response dynamics on 20,000 different (mutual acceptance) instances for various values of $n$, $\x$, $\p$, and $\Btau,$ and starting profiles, has always led to the identification of a Nash equilibrium. The details are deferred to the Appendix.\looseness-1

\smallskip \noindent \textbf{Non-Existence of Equilibria.} In contrast to what we observed in the computational experiments sketched above, Nash equilibria do not always exist in the default delegation model. To showcase this, we provide the following example
and later strengthen the result in \Cref{thm:no-NE}. 

\begin{example}
\label{ex: no-NE}
    The instance $\inst$ with $\N=\{A, B, C, D\}$,
$\x=(0, 0.05, 0.1, 0.5)$,
$\p=(0.4, 0.05, 0.2, 0.4)$ ,
$\Btau=(1, 0, 0.2, 0)$
does not admit an NE.
    Assume for contradiction that~$\d$ is an NE in $\inst$. Then, in $\d$, the voters $B$ and $D$ must delegate to themselves as $\tau_B=\tau_D=0$, i.e., delegating to any other voter lowers their utility compared to self-delegation. Further, since $\tau_C= 0.2,$ $\text{dist}(C,D)=0.4$ and $d(D)=D,$ voter $C$ does not delegate to $D$. Neither $A$ nor $C$ delegates to themselves. This is because $\tau_A - \text{dist}(A,B) >0$, $\tau_C - \text{dist}(C,B) >0$, and $d(B)=B$, implying that choosing to delegate to $B$ provides better utility than self-delegation. In the following table, we show the utilities of $A$ and $C$ in all profiles that were not ruled out by the previous reasoning.

 \vspace{-0.3cm}
 \begin{figure}[h]
 \centering
\resizebox{0.6\columnwidth}{!}{
\nfgametwothree{$d(C)=A$ $d(C)=B$ $d(A)=B$ $d(A)=C$ $d(A)=D$ .075 .428 .072 .508 .033 .52 .044 .428 .044 .53 .044 .52}}
\vspace{-0.3cm}
\end{figure}

\noindent The table depicts the normal form representation of the game induced by $\inst,$ 
where rows correspond to the possible choices of voter $C$ and columns to those of voter $A,$ for specifying $\de$. Observe that no possible delegation profile is an NE.

\end{example}

\noindent Given this impossibility result, we focus on subclasses of the default delegation model or slight variations of it, in order to obtain positive results on the existence of Nash equilibria. 

\subsection{Special Cases}
We discuss three special cases of the default delegation model and draw a complete picture of whether these restrictions suffice to guarantee the existence of equilibria. We study
(i) \textit{deterministic} instances, i.e., those  where $p_i \in \{0,1\}$ for all $i \in V$, 
(ii) \textit{mutual acceptance} instances, i.e., those  where $j \in \mathcal{A}_i(\x,\Btau)$ if and only if $i \in \mathcal{A}_j(\x,\Btau)$ for all $i,j \in V$, and (iii) instances with \textit{few voters}, i.e., those where $|V|$ is upper bounded by a constant. 
On the positive side, such restrictions can guarantee the existence of an NE. 

\begin{theorem} \label{thm:NE-existence}
    For each of the following restrictions, any instance $\inst$ is guaranteed to admit a Nash equilibrium, which is moreover computable in polynomial time: 
    \begin{enumerate}[label=(\roman*),leftmargin=*,topsep=2pt,itemsep=1pt,parsep=0pt]
        \item $\inst$ is deterministic, 
        \item $\inst$ has two voters, i.e., $|V| =2$, or
        \item $\inst$ satisfies mutual acceptance and $|V| \leq 3$.
    \end{enumerate}
\end{theorem}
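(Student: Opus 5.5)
The three restrictions need different arguments, so I will treat them separately. Polynomial-time computability will follow because each construction is explicit or involves checking only constantly many profiles.

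\textbf{(i) Deterministic instances.} Voters with $p_i=1$ always cast a vote, so their own delegation is irrelevant and their utility is $\tau_i$ under every profile. For a voter $i$ with $p_i=0$, \Cref{eq:utility} collapses to a single term. It equals $\tau_i-\dist(i,j)$ if the first voter $j$ with $p_j=1$ on $\pi(\de,i)$ exists, and $0$ otherwise. Hence every profile gives $i$ utility at most $\max\{0,\ \max_{j:\,p_j=1}(\tau_i-\dist(i,j))\}$.

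I define $\de$ as follows. Each $i$ with $p_i=0$ delegates directly to the closest voter $j$ with $p_j=1$ whenever $\tau_i-\dist(i,j)>0$, and to itself otherwise. Every other voter self-delegates. Every voter then attains the upper bound above regardless of the others' choices, so $\de$ is a Nash equilibrium. It is computable in $O(|V|^2)$ time.

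\textbf{(ii) Two voters.} I will show that each voter has a dominant strategy. Take $V=\{A,B\}$. If $d(A)=A$, then $u_A=p_A\tau_A$.

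If $d(A)=B$, then $\pi(\de,A)=(A,B)$ whether $d(B)=A$ or $d(B)=B$, since condition (ii) of the path definition stops the path at $B$ in both cases. Therefore
\[ u_A=p_A\tau_A+(1-p_A)\,p_B\,(\tau_A-\dist(A,B)), \]
independently of $d(B)$. So $A$ weakly prefers delegating to $B$ exactly when $\tau_A>\dist(A,B)$; this comparison is strict by general position. The symmetric statement holds for $B$, and the profile of dominant strategies is an NE.

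\textbf{(iii) Mutual acceptance with $|V|\le 3$.} The cases $|V|\le 2$ follow from (ii), so let $V=\{1,2,3\}$ with $x_1<x_2<x_3$. Mutual acceptance makes acceptability an undirected graph $H$. On the line, if $1$ and $3$ accept each other, then $1$ accepts $2$ and $3$ accepts $2$, since $2$ lies between them. By mutual acceptance, $2$ then accepts $1$ and $3$, so $H$ is complete. Up to symmetry, $H$ is therefore empty, a single edge, the path $1$--$2$--$3$, or a triangle.

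\emph{Voters with no acceptable neighbour.} If $i$ has no neighbour in $H$, every term of \Cref{eq:utility} with $\ell\ge1$ is non-positive, so self-delegation is a best response. This settles the empty case.

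\emph{Single edge.} Here the isolated voter self-delegates and the edge's two endpoints play the two-voter game of (ii). Their dominant strategies remain optimal: the only new option is delegating to the isolated voter, which yields only a non-positive extra term.

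\emph{Path $1$--$2$--$3$.} I first let $2$ choose its best response among $\{1,3\}$ assuming $d(1)=d(3)=2$, say $d(2)=1$. Then $3$'s path is $(3,2,1)$, and the term through $1$ may be negative. In that case $3$ may prefer self-delegation, which does not change $2$'s comparison, because $2$'s path through $3$ would then end at $3$. I would close this case by a short enumeration of best responses. Voter $1$ is then fine: its path $(1,2)$ ends at $2$ since $d(2)=1$, and this beats both self-delegation and delegating to $3$.

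\emph{Triangle.} This is the main obstacle, because all terms are positive and the choices genuinely interact. My first attempt is the profile $d(1)=2$, $d(3)=2$, $d(2)=\arg\max\{1,3\}$. I would then verify by explicit inequality comparisons that the endpoint not chosen by $2$ prefers its path through $2$ over its path through the other endpoint. If this fails, I would run best-response dynamics from this profile and prove, by case analysis over the finitely many orderings of the relevant utility differences, that they terminate. With only $27$ profiles, the profile found is also computable in constant time.
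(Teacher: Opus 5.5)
\textbf{Parts (i) and (ii).} These are correct and are essentially the paper's arguments: each delegator delegates directly to the closest acceptable casting voter, and a two-voter utility does not depend on the other voter's choice.

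\textbf{Part (iii), easy cases.} Classifying the mutual-acceptance graph $H$ is a legitimate alternative to the paper's single greedy, which fixes voter $2$ first, then the endpoint $2$ did not choose, then the other endpoint. Your empty, single-edge and path cases go through. In the path case, your ``short enumeration'' still needs one fact: voter $3$ never strictly prefers $1$ to $2$. This holds because $(3,2,1)$ weakly beats $(3,1,2)$ when $\dist(3,2)<\dist(3,1)$.

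\textbf{The gap: the triangle case.} This is the only case where existence is non-trivial, and you give no proof for it. Your candidate $d(1)=d(3)=2$, $d(2)=\argmax$ is in general not an NE.
\begin{itemize}
\item You check only the endpoint that $2$ did not pick. The endpoint it did pick (say $3$, now in a $2$-cycle with $2$) can switch to $1$ and close the $3$-cycle $3\to1\to2\to3$. Its gain is $(1-p_3)p_1\bigl[(\tau_3-\dist(3,1))-p_2(\tau_3-\dist(3,2))\bigr]$, which can be positive.
\item Concretely, take $\x=(0,0.1,0.2)$, $\p=(0.5,0.1,0.6)$, $\tau_i=1$. Voter $2$ picks $3$, and switching to $1$ raises $u_3$ from $0.636$ to $0.778$.
\item Your fallback (best-response dynamics plus an unspecified case analysis) is not an argument. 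Enumerating $27$ profiles finds an NE only once existence is known. By \Cref{thm:no-NE}(i), three-voter instances can lack an NE, so mutual acceptance must be used explicitly.
\end{itemize}

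\textbf{How to close it.} Keep $d(2)=3$, so that $p_3(\tau_2-\dist(2,3))\ge p_1(\tau_2-\dist(2,1))$, and $d(1)=2$. Let $3$ best-respond between $2$ and $1$. Self-delegation never helps here, since all terms are non-negative in the triangle.
\begin{itemize}
\item Voter $1$ stays. Its gain from $2$ over $3$ is $(1-p_1)p_2p_3(\dist(1,3)-\dist(1,2))$ if $d(3)=2$, and $(1-p_1)p_2\bigl[(\tau_1-\dist(1,2))-p_3(\tau_1-\dist(1,3))\bigr]$ if $d(3)=1$. Both are non-negative.
\item Voter $2$ stays in the $3$-cycle. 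Its gain from $3$ over $1$ is $(1-p_2)p_3\bigl[(\tau_2-\dist(2,3))-p_1(\tau_2-\dist(2,1))\bigr]\ge 0$. This follows from $\tau_2-\dist(2,3)\ge p_3(\tau_2-\dist(2,3))\ge p_1(\tau_2-\dist(2,1))$.
\end{itemize}
This is exactly how the paper settles the case.
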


In deterministic instances, the profile where every non-casting voter (i.e., with $p_i=0$) delegates to their closest casting voter (i.e., with $p_j=1$) in their acceptability set is an NE. 
For two-voter instances, we show that the expected utility of a voter is not influenced by the delegation choice of the other. 
For mutual acceptance instances with three voters, we propose a greedy algorithm for finding an NE. 
While these restrictions are strong, we complement \Cref{thm:NE-existence} by showing that relaxing them even slightly invalidates the result. 

\begin{theorem} \label{thm:no-NE}
    There exists an instance $\inst$ for which no Nash equilibrium exists if 
    \begin{enumerate}[label=(\roman*),leftmargin=*,topsep=2pt,itemsep=1pt,parsep=0pt]
        \item $\inst$ has three voters, i.e., $|V| = 3$, or
        \item $\inst$ satisfies mutual acceptance and $|V| =4$. 
    \end{enumerate}
\end{theorem}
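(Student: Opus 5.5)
Both parts are existence statements, so I will prove each by exhibiting a concrete instance and checking that every delegation profile admits a profitable deviation. I will follow the template of \Cref{ex: no-NE}: first use structural reasoning to cut down the strategy space, then inspect the small normal-form game that remains.

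\textbf{Part (i): three voters.} \Cref{ex: no-NE} already contains the key mechanism. Voters $B$ and $D$ are forced to self-delegate, and the game between $A$ and $C$ has no pure equilibrium. This happens because $A$'s best choice depends on whether $C$ points at $A$ (which would close a cycle and route $A$'s vote back through $C$), while $C$'s best choice depends on where $A$ points.

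To get down to three voters, I will try to remove $D$. In the example, $D$ mainly serves as an extra option for $A$. I would replace that option by $A$'s self-delegation and give $A$ a large tolerance, $\tau_A = 1$. Then $A$ compares self-delegation, $d(A)=B$ and $d(A)=C$, and $C$ compares $d(C)=A$ and $d(C)=B$, with $B$ fixed by $\tau_B=0$.

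The approach is to write down $u_A$ and $u_C$ for the resulting $3\times 2$ grid, each as a closed polynomial in $p_A,p_B,p_C$ and the distances using \Cref{eq:utility}. I then need the best-response pattern to form a cycle:
\begin{itemize}[leftmargin=*,topsep=2pt,itemsep=1pt,parsep=0pt]
  \item $C$ prefers pointing to $A$ exactly when $A$ does not point back to $C$;
  \item $A$ prefers pointing to $C$ exactly when $C$ points to $B$.
\end{itemize}
I will then choose numerical values that satisfy these strict inequalities. The instance must not satisfy mutual acceptance, since otherwise it would contradict \Cref{thm:NE-existence}(iii). Giving $\tau_B = 0$ while $A$ and $C$ accept $B$ guarantees this.

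\textbf{Part (ii): mutual acceptance, four voters.} Here $\tau_B=0$ is no longer available, because mutual acceptance would then forbid others from accepting $B$. Instead, I will use mutual acceptance with tolerances small enough that each voter's acceptability set consists of its neighbours, so that the geometry restricts the strategy space. Self-delegation by a voter with large $p$ can play the role that $B$ and $D$ played in the example.

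The plan is as follows:
\begin{itemize}[leftmargin=*,topsep=2pt,itemsep=1pt,parsep=0pt]
  \item Place four voters on a line, for example with the two middle voters $B$ and $C$ mutually accepting both outer voters.
  \item Choose $p$ so that the outer voters $A$ and $D$ have a dominant strategy. For instance, a high-$p$ outer voter self-delegates, or delegates inward whatever happens.
  \item This reduces the problem to a game between $B$ and $C$ with $2$ or $3$ options each.
  \item Tune the positions and probabilities so that the best responses of $B$ and $C$ cycle, mimicking the cycle in \Cref{ex: no-NE}.
  \item Verify the claim by listing the utilities in a table, as in the example.
\end{itemize}

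\textbf{Main obstacle.} The hardest step is finding concrete parameters, especially in part (ii). Mutual acceptance couples the tolerances, so I cannot simply freeze some voters by giving them zero tolerance. I also need to confirm that each voter I claim has a dominant strategy really has one against every profile of the others. My first move would be a small numerical search, i.e. a grid over $\x$, $\p$ and $\Btau$ that checks all $4^4$ profiles, to locate a non-existence instance. I would then round its parameters and present the reduced game table as the certificate.
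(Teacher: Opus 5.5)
\textbf{Part (i).} Both parts are pure existence claims, so the proof \emph{is} an explicit instance together with a check that every relevant profile admits a profitable deviation. You supply neither. Moreover, the reduction you commit to in part~(i) provably cannot yield one.

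Once $d(B)=B$ is frozen, $d(A)=A$ and $d(A)=C$ both give $C$ the path $(C,A)$. So $C$'s payoff depends on $A$ only through whether $d(A)=B$; your first bullet, which needs $C$ to distinguish these two choices, is therefore unrealizable. Symmetrically, $A$'s payoff depends on $C$ only through whether $d(C)=B$.

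Write $\beta=\tau_A-\text{dist}(A,B)$ and $\gamma=\tau_A-\text{dist}(A,C)$. After subtracting $p_A\tau_A$ and dividing by $1-p_A$, the advantage of $d(A)=C$ over $d(A)=B$ is $p_C\gamma-p_B\beta$ if $d(C)\neq B$, and $p_C(\gamma-p_B\beta)$ if $d(C)=B$. These differ by $(1-p_C)p_B\beta$, which is nonnegative whenever $d(A)=B$ is at least as good as self-delegation. Hence if $B$ is a best response for $A$ against $d(C)=B$, it stays one against $d(C)\neq B$. The identical computation holds for $C$.

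Both players therefore weakly anti-coordinate on the binary feature ``points to $B$''. A short case check shows that such a game always has a pure equilibrium. That equilibrium extends to the whole instance, because self-delegation is weakly dominant for $B$ when $\tau_B=0$. Your own bullets even exhibit one: $(d(A)=C,\,d(C)=B)$ is stable under them.

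What drives \Cref{ex: no-NE} is that $d(A)=D$ forwards $C$'s vote to a voter outside $\mathcal{A}_C(\x,\Btau)$. Once $D$ is removed and $B$ is frozen at a self-loop, no option of $A$ does this.

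\textbf{What would fix it.} In a three-voter witness, $B$ cannot simply be frozen at a self-loop. The paper keeps it strategic, with $\x=(0.1,0.9,0.95)$, $\p=(0.02,0.02,0.1)$, $\Btau=(0.9,0.1,0.9)$:
$B$ accepts $C$ but not $A$, which breaks mutual acceptance. $A$ and $C$ never self-delegate, and $B$ never delegates to $A$. The best-response cycle in the remaining $2\times 2\times 2$ game passes through $B$'s switches between its self-loop and $C$.

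For part~(ii), your outline has the right shape but again no content. The paper's symmetric witness is $\x=(0,0.05,0.15,0.3)$, $\p=(0.05,0.49,0.1,1)$, $\tau=0.2$, with voters named $A,B,C,D$ from left to right. There, $D$ may be taken to self-loop, since $p_D=1$ makes it indifferent and irrelevant to the others. $B$'s delegation to $D$ is strictly dominated, after which $A$'s delegation to $B$ is strictly dominant. The best responses of $B$ and $C$ then cycle. The paper certifies non-existence with a table giving a profitable deviation profile by profile.

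Your proposal stops before producing any such parameters or table. As it stands, neither part is proved.
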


We remark that statement (ii) of \cref{thm:no-NE} holds even for \emph{symmetric} instances, i.e., with voters of equal tolerance.

\subsection{Variants of the Model}

In response to these negative results, we discuss two variants of our model that guarantee the existence of NE. 
These are relevant in certain real-world contexts, but mainly serve a theoretical purpose: demonstrating the tightness and minimality of \cref{ex: no-NE} and complementing \cref{thm:no-NE}.

\medskip \noindent \textbf{Directional Acceptability.} \label{sec:left-right}
In the default delegation model, a voter accepts representation by voters positioned both to their left and to their right, only dependent on their distance. We introduce a variant of the model, where each voter selects a direction and accepts representation only by voters in that direction,
in which case the utility is still determined by the distance, in line with \cref{eq:utility}. 
Depending on the direction selected, we refer to a voter as \emph{leftist} or \emph{rightist}. 
To motivate the setting further, we note that beyond the political spectrum the positional line can instead represent a hierarchy or organizational seniority, where participants may delegate only upward and, e.g., set a personal cutoff allowing delegation to a supervisor or manager but not to executive levels.

In \cref{ex: no-NE}, voter $A$ can be considered a rightist
and
voter $C$ can be considered a leftist as $\mathcal{A}_C(\inst)$ only includes voters to their left.
Hence, NE may not exist in instances with both leftists and rightists.
However, \Cref{thm: NE-LR} shows that any instance with only leftists or only rightists admits an~NE.

For the sake of concreteness, we define an example of a utility function that induces leftist voters. Namely, replace $(\tau_i - \text{dist}(i, y_{\ell}))$ in \Cref{eq:utility} (intuitively, the utility of voter $i$ for being represented by $y_{\ell}$ in a specific election) by:
 
\[ \begin{cases} \tau_i - \text{dist}(i,y_{\ell}), & \text{ if } y_{\ell} \text{ is left of } i, \text{ i.e., $x_{y_\ell}<x_i,$}\\ 
- \text{dist}(i,y_{\ell}), & \text{ if } y_{\ell} \text{ is right of } i, \text{ i.e., $x_{y_\ell}>x_i$.}
\end{cases}\]

\noindent We remark that \Cref{thm: NE-LR} holds for any utility model that assigns negative utility to representation on the one side and utility equal to $\tau_i - \text{dist}(i,y_{\ell})$
to the other.

\begin{theorem}\label{thm: NE-LR}
Every instance in which the voters are all leftists or all rightists admits a Nash equilibrium, computable in polynomial time.
\end{theorem}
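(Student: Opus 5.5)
The plan is a sequential (greedy) construction. We treat the all-leftists case; the all-rightists case is symmetric after mirroring the line ($x_i \mapsto 1-x_i$). Order the voters by position, $x_1<x_2<\dots<x_n$ (general position gives distinct positions). We fix delegations one voter at a time from left to right. Voter $i$ picks $d(i)\in\{1,\dots,i\}$ maximizing $u_i$ given the already fixed choices $d(1),\dots,d(i-1)$.

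\textbf{Step 1: invariant and well-definedness.} Every constructed delegation goes weakly left, so every delegation path in the final profile moves strictly left until it ends at a self-loop. Hence for a target $j<i$, the path $\pi(\de,i)$ is $i$ followed by $\pi(\de,j)$. That path depends only on $d(1),\dots,d(i-1)$ and never revisits $i$. So voter $i$'s utility from any target in $\{1,\dots,i\}$ is unaffected by the later choices $d(i+1),\dots,d(n)$. Write
\[
u_i = p_i\tau_i + (1-p_i)\,W_i(d(i)),
\]
where $W_i(j)$ is the value of the continuation path from $j$, with $W_i(i)=0$ for self-delegation. Leftward options are therefore evaluated identically during the construction and in the final profile.

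\textbf{Step 2: rightward deviations are never profitable.} This is the main step. In the final profile, suppose $i$ deviates to some $j>i$. Every voter to the right of $i$ delegates leftward, so the path from $j$ visits voters right of $i$ and then either returns to $i$ (closing a cycle), or ends at a self-loop still right of $i$, or first enters some voter $k<i$.
\begin{itemize}[leftmargin=*,topsep=2pt,itemsep=1pt,parsep=0pt]
\item Each visited voter $y$ right of $i$ contributes $-\mathrm{dist}(i,y)\,p_y\cdot(\ldots)\le 0$.
\item In the first two cases, the continuation value is therefore at most $0=W_i(i)$.
\item In the third case, the continuation value is at most $q\,W_i(k)$, where $q\le 1$ is the probability that nobody on the right segment votes.
\item Since $q\in[0,1]$, we have $q\,W_i(k)\le \max\{0,W_i(k)\}\le \max_{j'\le i}W_i(j')$.
\end{itemize}
So the chosen leftward or self option weakly dominates every rightward one. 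The bound uses only the sign structure of leftist utilities: nonpositive for representatives on the right, and exactly $\tau_i-\mathrm{dist}$ on the left. This matches the remark that the theorem holds for any such utility model. One point to check carefully is that the path from $k$ in the deviated profile is exactly $\pi(\de,k)$. It is, because that path lies entirely left of $i$ and is untouched by the deviation.

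\textbf{Step 3: conclude.} Combining Steps 1 and 2, each $d(i)$ is a best response in the final profile, so the profile is a Nash equilibrium. For running time, there are $n$ choices. Each choice requires evaluating at most $n$ candidate targets, and each path utility is computed in $O(n)$ via \Cref{eq:utility}. This gives $O(n^3)$ time overall.
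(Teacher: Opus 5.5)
Your proposal is correct and is essentially the paper's proof in mirror image. The paper treats rightists and fixes best responses from the rightmost voter leftward, while you treat leftists from the leftmost voter rightward. Your Step~2 also makes rigorous a point the paper only asserts informally: a deviation toward the non-accepted side has continuation value at most $q\,W_i(k)\le\max\{0,W_i(k)\}$, so it is dominated by self-looping or by delegating directly to the re-entry voter $k$.
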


The proof of \Cref{thm: NE-LR} constructs a Nash equilibrium by starting from a profile where everyone delegates to themselves, and then finding best responses for all voters sequentially in order of their position.

\medskip \noindent \textbf{Proxy Voting.}
We now move to another variant of the model in which Nash equilibria are guaranteed to exist. 
In the \emph{proxy voting} setting, we restrict the number of voters on any path $\pi(\de, i)$ that leads to a casting voter. Specifically, no such path is allowed to contain more than two voters (including voter $i$ themselves). 
Hence, we effectively restrict the strategy space of the voters based on the actions of the other voters. 
This restriction is reminiscent\footnote{The settings are not identical as we allow for length 2 cycles.} of the well-established framework of proxy voting \citep{CMM+17an,anshelevich2021representative}, 
a variant of liquid democracy in which voters are divided into delegating and casting voters and delegation chains may contain at most $2$ voters. This setting aligns with implementations in blockchain governance and delegated Proof-of-Stake systems, where voters delegate to designated actors (e.g., active voters, token-holder representatives, or validators) who, by design, cannot delegate further, resulting in delegation chains of minimal length.

In \cref{ex: no-NE}, delegation chains of three voters arose. By forbidding such chains, we effectively eliminate the issue that leads to the non-existence of equilibria. In the proxy voting setting, we guarantee the existence of Nash equilibria, leading to a dichotomy in the maximum allowable delegation chain length to ensure the existence of Nash equilibria.

\begin{theorem}\label{thm: NE-proxy}
In the proxy voting setting, every instance admits a Nash equilibrium, computable in polynomial time.
\end{theorem}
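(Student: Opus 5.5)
The plan is to show that in the proxy voting setting every voter has a \emph{dominant strategy}. The profile where each voter plays it is then trivially a Nash equilibrium and can be computed in polynomial time.

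\textbf{Step 1: the utility formula.} First I make the utility explicit. Since no path leading to a casting voter may contain more than two voters, voter $i$'s ballot is handled as follows:
\begin{itemize}
\item it is cast by $i$ if $i$ votes;
\item otherwise it goes to $d(i)$ if $d(i)\neq i$ and $d(i)$ votes;
\item otherwise it is lost, because the chain is not continued beyond $d(i)$.
\end{itemize}
Specialising \Cref{eq:utility} to paths truncated after two voters gives
\[
u_i(\de)= p_i\tau_i + [d(i)\neq i]\,(1-p_i)\,p_{d(i)}\,\bigl(\tau_i-\text{dist}(i,d(i))\bigr).
\]
The second term is the only one depending on the profile. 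This holds regardless of whether $d(i)$ delegates back to $i$ (a $2$-cycle) or onward to a third voter, since the truncation makes $d(d(i))$ irrelevant for $i$. I would check this case distinction carefully against the formal definition, because it is exactly where the three-voter chains of \cref{ex: no-NE} are cut off.

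\textbf{Step 2: the dominant strategy.} The key observation is that $u_i(\de)$ depends only on $d(i)$ and not on $\de_{-i}$. Hence a best response for $i$ is any
\[
d^*(i)\in\argmax_{j\in\N}\; g_i(j), \qquad g_i(i)=0,\quad g_i(j)=p_j\bigl(\tau_i-\text{dist}(i,j)\bigr) \text{ for } j\neq i.
\]
In particular, $d^*(i)=i$ whenever no $j$ in the acceptability set $\mathcal{A}_i(\x,\Btau)$ has $p_j>0$. This choice is a best response to every $\de_{-i}$. So the profile $\de^*=(d^*(i))_{i\in\N}$ admits no profitable deviation for any voter, and is a Nash equilibrium by \Cref{def:NE}. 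Computing it takes $O(|\N|^2)$ time: evaluate $g_i(j)$ for all pairs and break ties arbitrarily.

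\textbf{Main obstacle.} The main point of care is the modelling step. The paper describes the restriction as acting on the strategy space ``based on the actions of the other voters'', rather than as truncating ballots. If instead a delegation to $j$ is only allowed when it does not create a chain of three or more voters leading to a casting voter, the game is no longer one with dominant strategies. In that case I would fall back on a constructive argument analogous to the proof of \Cref{thm: NE-leftist}:
\begin{itemize}
\item Sort voters by decreasing value of being a delegate.
\item Let voters sequentially pick their best feasible target among those that currently self-delegate or sit in a $2$-cycle with them.
\item Argue that the utility of a voter depends only on the first hop, so later choices can only shrink, never improve, earlier voters' options.
\item Show that voters who are targets best-respond by self-delegating or reciprocating.
\end{itemize}
I would nevertheless pursue the truncation reading first, since it makes the dichotomy with \cref{ex: no-NE} transparent: non-existence there hinges on $C$'s utility depending on $A$'s further delegation, which chains of at most two voters rule out.
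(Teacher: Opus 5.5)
\textbf{Verdict: your main argument (Steps~1--2) proves the statement for a different game than the paper's, so it has a genuine gap.} In the paper, the proxy restriction is a constraint on admissible profiles, not a truncation of ballots. The model section says the strategy space is restricted ``based on the actions of the other voters''. The paper's proof also states explicitly that delegating to a voter who already delegates onward is infeasible.

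Your Step~1 formula is still valid on admissible profiles. The problem is that what $i$ may choose now depends on $\de_{-i}$: $i$ can delegate to $j\neq i$ only if $d(j)\in\{i,j\}$ and no voter other than $j$ delegates to $i$. So there are no dominant strategies. Your profile $\de^*$ is in general not even admissible: if $d^*(a)=b$ and $d^*(b)=c\neq a$, then $\pi(\de^*,a)=(a,b,c)$ contains three voters. The argument that actually proves the theorem is the one you put in your fallback, and there the decisive step is only asserted.

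\textbf{What the paper does.} The paper's proof is such a construction, in two phases.
\begin{itemize}
\item In the first phase, voters are processed in arbitrary order. Each picks the maximiser of $p_j(\tau_i-\text{dist}(i,j))$ among voters not already delegating onward. The chosen voter becomes a ``representative'' and is not processed further.
\item In the second phase, each representative chooses between self-looping and delegating back to a voter that delegates to it.
\end{itemize}

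\textbf{What your fallback still needs.} The missing verification is the monotonicity you only state.
\begin{itemize}
\item Every target admissible for $i$ in the final profile was already available when $i$ chose. A voter delegating onward at that time never changes, a voter not yet processed was self-looping, and $i$'s set of delegators only grows.
\item $i$'s own choice remains admissible, because a voter that has received a delegation may afterwards only self-loop or reciprocate.
\end{itemize}
Since $u_i$ depends only on $d(i)$ on admissible profiles, no voter then has a profitable admissible deviation.

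Two further details:
\begin{itemize}
\item Your ordering ``by decreasing value of being a delegate'' is unnecessary; any order works.
\item Reciprocating is admissible only if the target has exactly one delegator. With two or more, the target must self-loop; otherwise the remaining delegators lie on a three-voter path. The paper's phase~2 is loose on this point as well.
\end{itemize}
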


\section{Structure of Nash Equilibria}
\label{sec:structure}

We now focus on the structural properties of equilibria. In particular, we are interested in the existence of cycles in delegation graphs corresponding to Nash equilibria. 
Our first result establishes that delegation cycles are the rule, rather than the exception. This aims to provide a compelling game-theoretic explanation for the prevalence of cycles and the behavior of voters observed in practice.

\begin{theorem}
    \label{obs:NEcycles}
    
        Consider a mutual acceptance instance $\inst$ without deterministic voters.
        Then, for every Nash Equilibrium $\de$ of $\inst$, it holds that every weakly connected component of $G_{\de}$ with more than a single vertex 
    has exactly one cycle. 
\end{theorem}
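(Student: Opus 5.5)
The plan is to reduce the statement to a purely graph-theoretic fact about functional graphs, and then rule out self-loops in non-trivial components by a direct deviation argument.

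\textbf{Step 1 (graph structure).} In $G_{\de}$ every vertex has out-degree exactly $1$, so $G_{\de}$ is a functional graph. I will use the standard fact that every weakly connected component of such a graph contains exactly one closed walk structure. That structure is either a self-loop or a cycle in the paper's sense (at least two vertices), and every other vertex of the component reaches it along a directed path. The argument is short: follow out-edges from any vertex until a vertex repeats, and note that two distinct terminal structures cannot lie in one weakly connected component, since the number of edges in the component equals the number of vertices. Hence it suffices to show the following. In a Nash equilibrium, no component with more than one vertex contains a self-loop. If such a component has a self-loop at $r$, then, because $r$ is its unique terminal structure, some other vertex $j \neq r$ satisfies $d(j)=r$.

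\textbf{Step 2 ($j$ accepts $r$).} Suppose $d(r)=r$ and $d(j)=r$ with $j\neq r$. Then $\pi(\de,j)=(j,r)$, so
\[
u_j(\de)=p_j\tau_j+(1-p_j)p_r(\tau_j-\dist(j,r)).
\]
Self-delegation gives $j$ utility $p_j\tau_j$. Since no voter is deterministic, we have $p_r>0$ and $1-p_j>0$. Therefore, if $\dist(j,r)>\tau_j$, then self-delegation is a strict improvement for $j$, contradicting equilibrium. So $\dist(j,r)\le\tau_j$, and general position makes this strict: $r\in\mathcal{A}_j(\x,\Btau)$. Mutual acceptance then gives $j\in\mathcal{A}_r(\x,\Btau)$, and again by general position $\dist(r,j)<\tau_r$.

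\textbf{Step 3 (profitable deviation for $r$).} Let $r$ deviate to $d'(r)=j$. In the new profile, the longest simple path from $r$ is $(r,j)$, because $d(j)=r$ closes a $2$-cycle. Hence
\[
u_r(\de_{-r},j)=p_r\tau_r+(1-p_r)p_j(\tau_r-\dist(r,j)) > p_r\tau_r = u_r(\de).
\]
The strict inequality uses $p_r<1$, $p_j>0$ and $\dist(r,j)<\tau_r$. This contradicts that $\de$ is a Nash equilibrium. Therefore the terminal structure of every non-trivial component is a genuine cycle, and by Step 1 it is the unique one.

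\textbf{Main obstacle.} The only delicate point is Step 2. We must ensure that the in-neighbour $j$ of $r$ actually has $r$ in its acceptability set. This is exactly where non-determinism (so that $p_r>0$ and $p_j<1$) and general position are needed. Mutual acceptance is then what transfers acceptability from $j$ to $r$.
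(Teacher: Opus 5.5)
Your proof is correct and follows essentially the same route as the paper. You rule out a self-looping vertex $r$ with an in-neighbour $j$ by combining $j$'s equilibrium condition, general position and mutual acceptance, which shows that delegating back to $j$ is strictly profitable for $r$. The paper orders this as deriving $\tau_r \le \text{dist}(r,j)$ from $r$'s non-deviation and then contradicting it, and it gets uniqueness from the same edge-count argument, since each component has as many edges as vertices.
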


When the assumptions of \cref{obs:NEcycles} do not hold, cycles do not necessarily exist in every equilibrium.

\begin{observation}
\label{prop:degenerate}
    Cycles are not guaranteed to exist in $G_{\de},$ where $\de$ is an NE of an instance that is not of mutual acceptance or where deterministic voters exist.
\end{observation}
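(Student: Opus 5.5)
This observation is proved by two small counterexamples, one for each failing assumption. In each, we exhibit an instance together with an NE $\de$ whose delegation graph $G_{\de}$ has a non-trivial component containing only self-loops, hence no cycle. Since every vertex has out-degree one, a non-trivial component without a cycle must end in a self-loop. So the aim is an NE in which some voter $j$ self-delegates and some $i\neq j$ delegates to $j$.

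\emph{Deterministic voters.} Take two voters $A,B$ with $x_A=0$, $x_B=0.1$, $\tau_A=\tau_B=0.5$ (mutual acceptance holds), $p_A=0$ and $p_B=1$. Let $d(A)=B$ and $d(B)=B$. Then $B$ votes with certainty, so its utility $\tau_B$ does not depend on its delegation. Voter $A$ obtains $\tau_A-0.1>0$ by delegating to $B$ and $0$ by self-delegating, so $A$ is best-responding. Hence $\de$ is an NE and its unique component $\{A,B\}$ has no cycle. (By the two-voter argument behind \Cref{thm:NE-existence}(ii), each voter's utility is unaffected by the other's choice, which makes this check immediate.)

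\emph{No mutual acceptance.} Here we use only non-deterministic voters. Take $A,B$ with $x_A=0$, $x_B=0.1$, $\tau_A=0.5$, $\tau_B=0$, and, say, $p_A=p_B=0.5$. Since $\tau_B=0$, voter $B$ gets strictly negative utility from any delegation that can be realized, so $d(B)=B$ is $B$'s unique best response. For $A$, delegating to $B$ yields $p_A\tau_A+(1-p_A)p_B(\tau_A-0.1)$, which exceeds the self-delegation utility $p_A\tau_A$. So $d(A)=B$, $d(B)=B$ is an NE with no cycle. The instance violates mutual acceptance because $B\in\mathcal{A}_A$ but $A\notin\mathcal{A}_B$, and it is in general position.

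The only real obstacle is bookkeeping: checking that each example is in general position and that each voter's choice is a best response. With two voters this is a direct comparison of two expressions from \Cref{eq:utility}. If three voters were wanted to make the observation more convincing, the same pattern works by adding a far-away self-delegating voter with tolerance $0$.
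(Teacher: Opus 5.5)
Your proof is correct and is essentially the paper's argument: a two-voter instance where one voter delegates to the other, who self-loops, with each voter's best response checked directly from \Cref{eq:utility}. The only difference is presentation. The paper verifies a single parametric template under each of $p_i=1$, $p_j=0$ and $j\notin\mathcal{A}_i(\x,\Btau)$, whereas you give concrete numerical instances; your deterministic example uses $p_A=0$ and $p_B=1$ together, which is equally valid for the statement.
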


However, in mutual acceptance instances, at least one equilibrium with a cyclic structure is guaranteed to exist, even with some deterministic voters. The proof is similar to that of \cref{obs:NEcycles} and includes the observation that deterministic voters may be indifferent between delegation options.

\begin{theorem}
    \label{obs:NEcycles_degenerate}
    Consider a mutual acceptance instance $\inst$ admitting an NE. Then, there exists an NE  $\de$ of $\inst$ in which every weakly connected component of $G_{\de}$ with more than a single vertex has exactly one cycle. 
\end{theorem}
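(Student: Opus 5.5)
The plan is to start from an arbitrary NE $\de$ of $\inst$ and repair it step by step until it has the required form. The key observation is that $G_{\de}$ is a functional graph, since every vertex has out-degree one. So every weakly connected component contains exactly one closed walk, and that walk is either a cycle (at least two vertices) or a self-loop. Hence a component with more than one vertex fails the statement only if it ends in a self-loop $d(s)=s$ with some $j\neq s$, $d(j)=s$. I will remove such ``bad'' self-loops one at a time without destroying the NE property. The potential is the number of self-loops at vertices of positive in-degree; each step lowers it by one and creates no new ones.

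\textbf{Invariance lemma.} The first tool is a lemma saying when rewiring one edge is harmless. Suppose we change $d(s)$ from $s$ to some $j$. If $p_s=1$, every walk that reaches $s$ stops there with probability one, so $d(s)$ affects no realized ultimate delegate. If $p_j=0$ and $d(j)=s$, then inserting $j$ after $s$ only adds a vertex that never casts a vote, after which the walk returns to the already visited $s$. In both cases the distribution over ultimate delegates is unchanged, for $\de$ and also for every profile $(\de_{-v},d'(v))$ with $v\neq s$. I will check this from \Cref{eq:prob}: a factor $p_j=0$ kills the added term, and truncation of the simple path happens only after the added vertex. For $s$ itself, its utility at the new strategy equals its old utility, and its deviation set is the same. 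Therefore every $u_v(\de_{-v},\cdot)$ is unchanged as a function, and the rewired profile is again an NE with $s$ now on a $2$-cycle.

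\textbf{Case analysis.} Fix a bad self-loop at $s$ and a predecessor $j$ with $d(j)=s$.
\begin{itemize}
\item If $p_s=1$ or $p_j=0$, rewire $d(s)=j$ using the invariance lemma.
\item Otherwise $p_s<1$ and $p_j>0$, and I aim for a contradiction with $\de$ being an NE. Suppose $p_j<1$ and $p_s>0$. Voter $j$ weakly prefers $s$ to self-delegation, which gives $p_j\tau_j+(1-p_j)p_s(\tau_j-\dist(j,s))\ge p_j\tau_j$. By general position this yields $s\in\mathcal{A}_j$, and mutual acceptance gives $j\in\mathcal{A}_s$. Then $s$ deviating to $j$ yields $p_s\tau_s+(1-p_s)p_j(\tau_s-\dist(s,j))>p_s\tau_s$, which is a profitable deviation.
\end{itemize}

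\textbf{Main obstacle.} The hard part is the remaining deterministic subcases, namely $p_j=1$ or $p_s=0$ with $0<p_j$, where the acceptance argument breaks down.
\begin{itemize}
\item If $p_j=1$, the choice $d(j)$ is irrelevant. I would first try choosing among all predecessors of $s$ one that is non-deterministic, or one with $p=0$. If none exists, I would rewire $j$ to a self-loop or into a $2$-cycle $d(s)=j$ after checking $s\in\mathcal{A}_j$. Again the invariance lemma applies because a voter with $p_j=1$ blocks all continuation.
\item If $p_s=0$, the edge $j\to s$ is worthless to $j$. I would redirect such indifferent edges so that a genuine cycle is formed.
\end{itemize}
In each subcase I will verify that the potential still strictly decreases, so the procedure terminates at an NE in which every nontrivial component contains exactly one cycle.
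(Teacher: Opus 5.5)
Your skeleton matches the paper's proof: the functional-graph observation, rewiring $d(s)=j$ when $p_s=1$ or $p_j=0$, and the contradiction when $0<p_s,p_j<1$. You are also right that the argument of \Cref{obs:NEcycles} needs $p_j<1$ and $p_s>0$. The paper's proof claims that argument covers every case with $p_j\neq 0$ and $p_i\neq 1$, which it does not. However, your treatment of the leftover subcases is not yet a proof, and your termination claim is false.

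\textbf{The case $p_j=1$ (with $p_s<1$).} The option ``$d(s)=j$ after checking $s\in\mathcal{A}_j$'' is vacuous. Voter $s$'s equilibrium condition gives $(1-p_s)(\tau_s-\dist(s,j))\le 0$. Hence $j\notin\mathcal{A}_s$, and by mutual acceptance $s\notin\mathcal{A}_j$, so closing the cycle at $s$ makes $s$ strictly worse off. What works is rewiring $j$ itself, whose delegation is irrelevant to everyone since $p_j=1$. Send it to one of its own predecessors, forming a $2$-cycle, or to itself if it has none. This need not remove the self-loop at $s$, since other predecessors may remain. Moreover, rewiring $j$ to a self-loop when it has predecessors creates a new self-loop with positive in-degree. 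So ``lowers it by one and creates no new ones'' fails. Use instead the number of vertices lying in components that have more than one vertex and end in a self-loop. This strictly drops, because $j$'s whole in-tree moves into a component with a cycle.

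\textbf{The case $p_s=0$, $0<p_j<1$.} This is where the idea is genuinely missing. ``Redirect such indifferent edges so that a genuine cycle is formed'' does not say where to redirect. The obvious target fails, since $s$'s equilibrium condition $p_j(\tau_s-\dist(s,j))\le 0$ forbids $d(s)=j$. You need three facts.
(a) $u_j(\de)=p_j\tau_j$, which is therefore $j$'s best-response value.
(b) Every predecessor $k$ of $j$ is deterministic. If $0<p_k<1$, then $k$'s own equilibrium condition forces $j\in\mathcal{A}_k$, so $k\in\mathcal{A}_j$. Then $j$ would gain $(1-p_j)p_k(\tau_j-\dist(j,k))>0$ by delegating to $k$, contradicting the NE.
(c) Now set $d(j)=k$ for a predecessor with $p_k=0$; or rewire a predecessor with $p_k=1$ as above; or set $d(j)=j$ if $j$ has no predecessors.

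In each move, whatever follows $j$ on a walk contributes nothing: either the zero-probability sink $s$, or the zero-probability $k$ leading back to $j$. This holds for every voter and every unilateral deviation, so the NE property is preserved and the measure above strictly decreases. With (a)--(c) your case analysis closes. Without them, it covers only the configurations the paper's proof actually handles.
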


Without mutual acceptance, the existence of equilibria exhibiting cycles is not guaranteed. For example, consider an instance with two non-deterministic voters such that $A$ accepts $B$, but $B$ does not accept $A$. Then, there is a unique equilibrium in which $A$ delegates to $B$ and $B$ self-loops.

\smallskip

Returning to the case where the assumptions of \Cref{obs:NEcycles} hold, we now aim to further analyze the structure of equilibria by turning our attention to delegations ``entering'' a cycle. Specifically, for a weakly connected component $W$, let $\mathcal{C}(W)$ denote the set of voters forming the cycle within that component, and let $\mathcal{L}(W)$ and $\mathcal{R}(W)$ denote the sets of voters of $W$ positioned to the left and right of the cycle, respectively. Formally, $\mathcal{L}(W) = \{i \in W: x_i < x_j \text{ for all } j \in \mathcal{C}(W)\}$ and
$\mathcal{R}(W) = \{i \in W: x_i > x_j \text{ for all } j \in \mathcal{C}(W)\}$.

\begin{theorem}\label{obs:entrypoints}
Consider a mutual acceptance instance $\inst$ without deterministic voters and a Nash equilibrium $\de$ of $\inst$.
Consider a weakly connected component $W$ of $G_{\de}$ that consists of more than a single vertex, and let $\mathcal{C}(W)$ denote the cycle in $W$. 
There is at most one vertex $v_L \in \mathcal{L}(W)$ with $d(v_L) \in \mathcal{C}(W)$ and at most one vertex $v_R \in \mathcal{R}(W)$ with $d(v_R) \in \mathcal{C}(W)$.
    Moreover, in $G_{\de}$, $\mathcal{L}(W)$ and $\mathcal{R}(W)$ form in-trees rooted at $v_L$ and $v_R$, respectively.
\end{theorem}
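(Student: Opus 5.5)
The plan rests on \Cref{obs:NEcycles}. Since every vertex of $G_{\de}$ has out-degree one and $W$ contains exactly one cycle $\mathcal{C}(W)$, the path $\pi(\de,i)$ of every $i\in W\setminus\mathcal{C}(W)$ runs through non-cycle vertices, enters $\mathcal{C}(W)$ at a unique \emph{entry vertex} $c(i)$, and then goes around the cycle. So $W\setminus\mathcal{C}(W)$ is a forest of in-trees hanging off the cycle. I will treat the left side; the right side is symmetric.

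\textbf{Step 1: left voters rank cycle entries identically.} The key tool is a linearity observation. Fix $i\in\mathcal{L}(W)$. Every cycle member $y$ satisfies $x_y>x_i$, so $\tau_i-\dist(i,y)=(\tau_i+x_i)-x_y$ (this may be negative, but \Cref{eq:utility} is linear regardless). Let $q=\prod_{y\in\mathcal{C}(W)}(1-p_y)$, which is independent of the entry point. Then the contribution of the cycle part of a walk entering at $c$ is
\[
(\tau_i+x_i)(1-q)-E_c,
\]
where $E_c$ is the sum of $x_y$ weighted by the probability that $y$ is the first caster when the cycle is traversed from $c$. Hence all left voters order cycle entry points in the same way, by $E_c$. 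A short side computation gives the normalized value of entering at $c$: it is $V_i(c)\le(1-q)\bigl(\tau_i-\dist(i,y)\bigr)$ for the closest cycle member $y$, and in particular it is smaller than $\tau_i-\dist(i,b)$ for any $b$ with $x_i<x_b<\min_{y\in\mathcal{C}(W)}x_y$. I expect to need $V_i(c)>0$ for best-responding non-deterministic voters. This should follow from the fact that a delegation with nonpositive continuation value would be weakly beaten by a self-loop, and by general position this becomes strict.

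\textbf{Step 2: at most one left entry.} Suppose $a,b\in\mathcal{L}(W)$ with $x_a<x_b$ both delegate directly into $\mathcal{C}(W)$. Voter $a$ can deviate to $b$. Since $d(b)\in\mathcal{C}(W)$, the path of $b$ does not pass through $a$, so $a$ gets
\[
p_b\bigl(\tau_a-\dist(a,b)\bigr)+(1-p_b)V_a(d(b)).
\]
Because $b$ best responds and all left voters share the ranking from Step 1, we have $V_a(d(b))\ge V_a(d(a))$. Step 1 also gives $\tau_a-\dist(a,b)>V_a(d(b))$. Together these make the deviation strictly profitable, since $p_b\in(0,1)$. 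This contradicts $\de$ being an NE, so at most one $v_L$ exists.

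\textbf{Step 3: $\mathcal{L}(W)$ is an in-tree rooted at $v_L$.} This is the step I expect to be the main obstacle. I need to show that for every $i\in\mathcal{L}(W)$ the path $\pi(\de,i)$ stays inside $\mathcal{L}(W)$ until it reaches the cycle; the last vertex before the cycle is then a left entry and hence equals $v_L$ by Step 2. If some path instead leaves $\mathcal{L}(W)$, it visits a non-cycle vertex $z$ with $x_z>\min_{y\in\mathcal{C}(W)}x_y$. My plan is to pick such a path and look at the last voter $i\in\mathcal{L}(W)$ before it leaves $\mathcal{L}(W)$. I would compare $i$'s current choice with delegating to $v_L$ (or directly to the leftmost-ranked cycle entry if $v_L$ does not exist), using the linearity of Step 1 extended to all vertices to the right of $i$. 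The intention is to show that a detour through farther-right vertices is dominated by an entry that avoids them. The delicate point is that the path through $z$ may reach the cycle at a better entry than $v_L$ does; there I would instead let $z$'s own predecessor, or $v_L$, deviate and mimic that entry, repeating the exchange argument of Step 2. Finally, weak connectivity together with out-degree one makes the resulting structure an in-tree rooted at $v_L$.
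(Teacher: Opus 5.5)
Your Steps 1 and 2 are correct and match the paper's proof of the first claim. The paper's auxiliary lemma writes the utility of a voter delegating to $w\in\mathcal{C}(W)$ as $p_v\tau+(1-p_v)\sum_{i\in\mathcal{C}(W)}\mathcal{P}_w(i)(\tau-\dist(v,i))$. From this it infers that all voters of $\mathcal{L}(W)$ rank entry points identically, and then lets the left-more of two entry voters deviate to the right-more one, which is your exchange. Your version is somewhat cleaner in three respects:
\begin{itemize}
\item you only need $V_a(d(b))\ge V_a(d(a))$, not that both entry voters use the same $w$;
\item you allow heterogeneous tolerances, whereas the paper writes a single $\tau$;
\item you make explicit the nonnegativity that the paper uses silently in $(\tau-\dist(v,v'))\sum_i\mathcal{P}_w(i)\le\tau-\dist(v,v')$.
\end{itemize}
One correction. $V_a(d(a))\ge 0$ follows from comparing with a self-loop and $p_a<1$, but it need not be strict, and general position does not make it so. Weak nonnegativity is enough: it forces $\tau_a-\dist(a,y)\ge 0$ for the leftmost cycle voter $y$, and then $V_a(d(b))\le\tau_a-\dist(a,y)<\tau_a-\dist(a,b)$.

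Step 3 is a genuine gap, and it cannot be closed, because the in-tree claim is false. The paper handles it in one sentence by invoking uniqueness of the cycle. As you suspected, that only shows $W\setminus\mathcal{C}(W)$ is a forest of in-trees hanging off $\mathcal{C}(W)$. Your planned domination argument fails for a specific reason. For $i\in\mathcal{L}(W)$, the normalized value of any continuation lying to its right is $(\tau_i+x_i)M-\sum_y P(y)x_y$, where $M$ is the probability that someone on it votes. Step 1 gives a voter-independent ranking only because every direct entry has $M=1-q$. A detour through one vertex $z$ before the cycle has $M=1-(1-p_z)q$, so the comparison depends on $\tau_i+x_i$, and a high-tolerance left voter can strictly prefer it.

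Here is a concrete counterexample. Take voters $b,a,c_1,c_2,r$ at positions $0,0.1,0.2,0.25,0.5$, with $\p=(0.01,0.01,0.1,0.1,0.5)$ and $\Btau=(0.3,1,0.301,0.251,0.45)$. Every pair of voters accepts each other except $b$ and $r$, who reject each other. So the instance has mutual acceptance, is in general position, and has no deterministic voters. The profile $d(b)=c_1$, $d(a)=r$, $d(c_1)=c_2$, $d(c_2)=c_1$, $d(r)=c_2$ is an NE. Writing $u_i=p_i\tau_i+(1-p_i)C$, the chosen continuation values $C$ against the best alternatives are:
\begin{itemize}
\item $0.383$ vs.\ $0.174$ for $a$;
\item $0.0145$ vs.\ $0.0140$ for $b$;
\item $0.0251$ vs.\ $0.0149$ for $c_1$;
\item $0.0201$ vs.\ $0.01991$ for $c_2$;
\item $0.0335$ vs.\ $0.0330$ for $r$.
\end{itemize}
The cycle voters do not absorb $r$ because their tolerances barely exceed their distance to $r$. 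Here $\mathcal{C}(W)=\{c_1,c_2\}$, $\mathcal{L}(W)=\{b,a\}$ and $v_L=b$. Yet $a$ delegates to $r\in\mathcal{R}(W)$ and never reaches $v_L$, while the low-tolerance $b$ enters the cycle directly. Deleting $b$ still leaves an NE, now with $\mathcal{L}(W)\neq\emptyset$ and no $v_L$ at all. So only the entry-point claim (your Steps 1--2) can be proved; the in-tree part needs to be dropped or weakened.
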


Thus, the cycle $\mathcal{C}(W)$ has a unique ``entry point'' $v_L$ for voters in $\mathcal{L}(W)$, and all voters in $\mathcal{L}(W)$ have delegation paths to $v_L$ (analogously for $v_R$ and $\mathcal{R}(W)$). 
It might be tempting to conjecture that these entry points $v_L$ and $v_R$ delegate to the leftmost and rightmost voters in $\mathcal{C}(W)$, respectively, or that all voters in  $\mathcal{L}(W)$ (respectively, $\mathcal{R}(W)$) form a simple delegation path. 
However, in the Appendix, we show that this is not generally the case. 
Therefore, a significant strengthening of the structural description offered by \cref{obs:entrypoints} is unlikely.

\smallskip \noindent \textbf{Experimental Analysis of NE Structure.} 
Our theoretical results do not specify how large delegation cycles are or how often they occur in instances not satisfying the assumptions of \Cref{obs:NEcycles}. We address these questions experimentally in general instances (see the Appendix for details). 
We examined the \textit{size} (i.e., number of vertices) and \textit{width} (i.e., maximum distance between two vertices) of cycles and weakly connected components and we observed that, as tolerance levels decrease, cycle size and width, as well as component width, decline gradually. 
Moreover, as $n$ increases, the average cycle and component width decreases, with voters in the same component\,---\,especially cycles\,---\,having closely aligned positions. The proportion of voters with self-loops remains stable at around 5\%. The average cycle size stays around $4.9$ across instances and grows as $n$ increases. 
Notably, nearly all weakly connected components with more than one vertex contain a cycle, indicating that the pattern identified theoretically for mutual acceptance instances (see \cref{obs:NEcycles}) also appears in randomly generated instances.

\section{Quality of Nash Equilibria}
\label{sec:quality}
\label{sec:qualityofNE}
We next focus on evaluating the quality of equilibria. For the theoretical analysis we use a Price-of-Anarchy approach to compare the social welfare of Nash equilibria to the optimal social welfare \citep{PoA}.
First, we observe a contrast between the structure of social-welfare-maximizing delegation graphs and of Nash equilibria.

\begin{observation}
\label{SCno-cycle}
    There exist mutual acceptance instances without deterministic voters in which 
    social welfare maximizing delegations do not induce cycles; e.g., in
    the symmetric instance with $\N=\{A,B,C\}$, $\x=(0.12, 0.5, 0.88)$, $\p=(0.1, 0.9, 0.1)$, and  
    $\tau_i=0.4$ for all $i \in \N$, social welfare is maximized if 
    $A$ and $C$ delegate to $B$, who self-loops. 
\end{observation}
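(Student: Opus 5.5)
The plan is a direct verification on the given three-voter instance. First record the data: $\text{dist}(A,B)=\text{dist}(B,C)=0.38<0.4$ and $\text{dist}(A,C)=0.76>0.4$. So $A$ and $C$ each accept only $B$ and themselves, and $B$ accepts everyone, consistent with the instance being symmetric. Next, use \Cref{eq:utility} to write each $u_i(\de)$ as $p_i\tau_i$ plus $(1-p_i)$ times the expected contribution of the rest of the path $\pi(\de,i)$. Since $u_i$ depends only on the path starting at $i$, the social welfare can be evaluated one voter at a time on each of the $27$ profiles. I will avoid listing all $27$ and instead argue which delegation choices can be part of a maximizer.

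For the candidate profile $\de^*$ with $d(A)=d(C)=B$ and $d(B)=B$, the utilities are
\[
u_A=u_C=0.1\cdot 0.4+0.9\cdot 0.9\cdot 0.02=0.0562, \qquad u_B=0.9\cdot 0.4=0.36 .
\]
The argument then proceeds by the out-edges of $A$ and $C$.

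\emph{Step 1.} If $A$ self-loops or delegates towards $C$, then $A$ loses the positive term $0.81\cdot 0.02$, or picks up a term involving the negative value $0.4-0.76$ from $C$. Changing $d(A)$ can only affect $u_A$ and the utilities of voters whose paths pass through $A$. So I will check that any gain this could give $B$ or $C$ (at most of order $p_Bp_A\cdot$ small, or $0.9\cdot 0.1\cdot 0.1\cdot 0.02$) is smaller than the loss of roughly $0.016$ to $A$. Hence $d(A)=B$ in every optimum, and symmetrically $d(C)=B$.

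\emph{Step 2.} With $d(A)=d(C)=B$ fixed, only $d(B)\in\{A,B,C\}$ remains to compare.
\begin{itemize}
\item If $d(B)=A$, then $u_B$ increases by $0.1\cdot 0.1\cdot 0.02=0.0002$, and $u_A$ is unchanged because the cycle closes at $B$. However, $C$'s path becomes $(C,B,A)$, which adds $0.9\cdot 0.1\cdot 0.1\cdot(0.4-0.76)=-0.00324$ to $u_C$.
\item The case $d(B)=C$ is symmetric.
\end{itemize}
Both changes therefore strictly lower social welfare, so the self-loop of $B$ is optimal. Consequently $\de^*$ is the unique maximizer, and its graph contains no cycle of length at least two.

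The main obstacle is Step 2. Here a cycle $A\leftrightarrow B$ genuinely helps $B$ a little, so the argument must track the externality on the third voter $C$, whose chain is lengthened into the unacceptable voter $A$, and not only the deviator's own utility. The second thing to watch is that the Step 1 bounds on such cross-effects are stated carefully, so that the reduction from $27$ profiles to these three is rigorous and not merely heuristic.
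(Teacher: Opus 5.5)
Your proposal is correct. It differs from the paper mainly in how the $27$ profiles are pruned.

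The paper rules out only the delegations $A\to C$ and $C\to A$, via a one-line claim that redirecting either to $B$ strictly raises welfare. It then tabulates the social welfare of the $12$ remaining profiles and reads off $[B,B,B]$, with $0.4724$, as the strict maximum.

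You apply the same switching idea more aggressively and also eliminate self-loops of $A$ and $C$, so only the three choices of $d(B)$ remain. Your Step~2 numbers match the paper's table entries ($0.4694$ for $[B,A,B]$ and $[B,C,B]$).

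What each route buys:
\begin{itemize}
\item The paper's route is purely mechanical and needs no bookkeeping of cross-effects.
\item Yours replaces the table with an argument that explains why $B$ self-loops: the $2$-cycle gains $B$ only $0.0002$, but it routes $C$'s chain to the unacceptable voter $A$, costing $0.00324$. It also gives uniqueness directly, and it makes explicit the externality check that the paper's pruning claim leaves implicit.
\end{itemize}

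When you write Step~1 out, note that $A$'s own gain is not uniformly about $0.016$. If $d(B)=C$, moving $A$ from a self-loop to $B$ yields the path $(A,B,C)$ and gains only $0.0162-0.00324=0.01296$. This still suffices, for two reasons:
\begin{itemize}
\item Moving $d(A)$ from $A$ to $B$ never changes $u_B$, because the chain closes at $B$. It also never hurts $C$; it adds $0.01458$ when $d(C)=A$.
\item Moving $d(A)$ from $C$ to $B$ costs $B$ at most $0.00018$, and only when $d(B)=A$. Against this, $A$ gains at least $0.05296-0.02218=0.03078$.
\end{itemize}
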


Recall that $\de_{\texttt{SW}}(\inst)$ is a profile maximizing social welfare and let $\de_{\texttt{NE}}(\inst)$ be a Nash equilibrium of $\inst$ achieving the \textit{lowest} social welfare among all NE. 
We define the  \emph{Price of Anarchy (PoA)} of an instance $\inst$ in the standard way

\[\textit{PoA}(\inst)=\frac{SW(\de_{\texttt{SW}}(\inst))}{SW(\de_{\texttt{NE}}(\inst))}\]
and show that this ratio can be arbitrarily large.

\begin{theorem}\label{thm: poa}
The Price of Anarchy 
of default delegation instances 
is unbounded.
\end{theorem}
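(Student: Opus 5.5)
The plan is to exhibit, for every $M>0$, an instance $\inst_\varepsilon$ depending on a small parameter $\varepsilon>0$ such that \emph{some} Nash equilibrium $\de_\varepsilon$ has $SW(\de_\varepsilon)=O(\varepsilon)$ while $SW(\de_{\texttt{SW}}(\inst_\varepsilon))=\Omega(1)$. Then $\textit{PoA}(\inst_\varepsilon)\geq SW(\de_{\texttt{SW}})/SW(\de_\varepsilon)\to\infty$. The statement quantifies over all default delegation instances, so I will use deterministic voters and instances violating mutual acceptance wherever convenient. Theorem~\ref{thm:NE-existence}(i) and the computations behind Observation~\ref{prop:degenerate} suggest that these are exactly the degrees of freedom that permit bad equilibria.

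\textbf{Structural constraint.} Two features of Equation~(\ref{eq:utility}) dictate the design of the gadget.
\begin{itemize}
\item A voter's utility depends only on the delegations \emph{downstream} of them.
\item Any voter can delegate directly to any reliable voter they accept.
\end{itemize}
Consequently, the loss in the bad equilibrium cannot simply be imposed on a passive ``victim''. It must come from voters who are forced, by their own tolerance, to keep voting power that others would like to route elsewhere. The fix is to let the harm fall on voters whose only acceptable delegates are unreliable.

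\textbf{Concrete construction.} I would build the instance as follows.
\begin{itemize}
\item Take a block of $m$ voters $A_1,\dots,A_m$ with $p_{A_i}=0$ and tolerance $\tau$. They are placed so that their acceptability sets contain only each other and a single voter $B$.
\item Voter $B$ has $p_B=\varepsilon$ and a \emph{tiny} tolerance, so that $B$ alone does not accept the voters on the far side.
\item A reliable voter $C$ with $p_C=1$ is placed so that $B$ finds $C$ only marginally acceptable, while the $A_i$ find $C$ only through chains ending near $B$.
\end{itemize}
The candidate bad equilibrium $\de_\varepsilon$ is: all $A_i$ point into $B$, and $B$ self-loops (or sits on a short cycle). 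Each $A_i$ then gets $\Theta(\varepsilon\tau)$, so $SW(\de_\varepsilon)=O(m\varepsilon)$. In the good profile, $B$ forwards to $C$ and the $A_i$ obtain constant utility. I will tune the parameters so that $B$'s individual gain from forwarding is nonpositive in $\de_\varepsilon$, while the total gain of the $A_i$ is $\Theta(m)$.

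\textbf{Order of verification.}
\begin{enumerate}
\item Compute every voter's utility in $\de_\varepsilon$ via Equation~(\ref{eq:utility}).
\item Enumerate each voter's possible deviations in a table like Table~\ref{tab:NEexample}, and check that none is profitable.
\item Exhibit a profile with welfare $\Omega(1)$, which lower-bounds the optimum.
\item Take $\varepsilon\to0$.
\end{enumerate}

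\textbf{Main obstacle.} Step~2 is where I expect the difficulty, and specifically the stability of the $A_i$. The condition that the $A_i$ cannot profitably bypass $B$ conflicts with the requirement that routing through $B$ to $C$ is valuable to them. Going through $B$ and going directly to $C$ yield the same ultimate delegate, so the naive gadget collapses: any $A_i$ who accepts $C$ simply delegates to $C$. My first attempt to break this tie is to make $C$'s value to the $A_i$ arise only in aggregate, through positive contributions from \emph{many} intermediate unreliable voters near $B$ rather than from $C$ itself. If that fails, I would fall back on a coordination gadget: a cycle among low-probability voters that is self-stabilizing, because each member prefers its close partner to a marginally acceptable reliable voter. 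I would pair it with an alternative profile in which a single reliable but distant voter's tolerance is made large, so that the optimum gains a constant term the equilibrium lacks.
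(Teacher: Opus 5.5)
\textbf{There is a genuine gap.} Your concrete gadget does not work, and the repair is only gestured at.

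\textbf{Why the gadget fails.} First, the role you assign to $B$ is inconsistent. If $C$ is even marginally acceptable to $B$, then $p_C=1$ and $p_B=\varepsilon<1$ mean that switching from a self-loop to $C$ raises $B$'s utility by $(1-\varepsilon)(\tau_B-\text{dist}(B,C))>0$. So you need $C\notin\mathcal{A}_B$.

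Second, and fatally, utility depends only on the ultimate delegate, not on the route. So ``the $A_i$ find $C$ only through chains'' has no effect:
\begin{itemize}
\item If $C\notin\mathcal{A}_{A_i}$, then in your ``good'' profile the $A_i$ are represented by $C$ with probability $1-\varepsilon$. They get \emph{negative} utility there.
\item If $C\in\mathcal{A}_{A_i}$, each $A_i$ secures a constant in every equilibrium by delegating to $C$ directly.
\end{itemize}
You noticed the second case yourself, but the proposal stops there. In general, no voter with $p=1$ may lie in the victim's acceptability set.

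Your ``aggregate'' fallback is never instantiated. The coordination fallback cannot work either. Tolerances belong to the instance, not the profile. Moreover, in every NE each voter earns at least $p_i\tau_i$ by self-looping. So a reliable voter with large tolerance adds a constant to the equilibrium welfare as well, which pushes the ratio back toward boundedness.

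\textbf{The missing construction.} Your diagnosis in ``Structural constraint'' and your first fallback point in the right direction. What is needed is to delete $C$ entirely and replace $B$ by many low-probability, zero-tolerance voters packed extremely tightly. This is the paper's instance:
\begin{itemize}
\item voter $1$ has $\tau_1=\lambda$;
\item voters $2,\dots,n$ have $\tau_i=0$;
\item all voters have $p_i=1/n$, and consecutive voters are at distance $1/n^3$.
\end{itemize}

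Because $\tau_i=0$, any forwarding by a voter $i\geq 2$ adds a strictly negative term, so self-looping is strictly dominant for them. The equilibrium is therefore $1\to 2$ with everyone else self-looping, giving $SW=u_1\leq 2\lambda/n$.

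The chain $1\to2\to\cdots\to n$ instead gives voter $1$ utility $\lambda\bigl(1-(1-1/n)^n\bigr)-O(1/n^2)\to\lambda(1-e^{-1})$. This is because $n$ voters of probability $1/n$ jointly cast a vote with constant probability. Meanwhile each forwarding voter loses at most $1/n^2$, since all pairwise distances are below $1/n^2$, for a total loss below $1/n$.

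The scale separation is the ingredient your proposal lacks. The intermediaries' cost of forwarding is strictly positive, so they refuse, but it is negligible. The victim's aggregate benefit from the chain is constant, while any single unreliable delegate gives only $\Theta(1/n)$.
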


At first glance, \cref{thm: poa} is a negative result concerning the quality of NE. However, the constructed instances have certain characteristics, such as voters with a very low voting probability 
and all but one voter having acceptability sets limited to themselves, while voter $1$ has $\mathcal{A}_1(\x,\Btau) = \N$.
Moreover, the social welfare values of the two delegation profiles in the proof of \cref{thm: poa} exhibit a relatively small absolute difference.
This suggests that measuring the difference rather than the ratio 
may yield more informative conclusions about the quality of equilibria. 
We define the \emph{additive Price of Anarchy} of an instance $\inst$ as 
$\textit{PoA}^+(\inst) =SW(\de_{\texttt{SW}}(\inst))-SW(\de_{\texttt{NE}}(\inst)),$
and 
proceed with the following positive results on both the multiplicative and the additive Price of Anarchy.\footnote{While $\textit{PoA}^+$ is not explicitly normalized by utility, the bound in \Cref{thm:poa upper} implicitly is, since $\sum_{i \in [n]} \tau_i$ is a trivial upper bound for social welfare.}
\begin{theorem}
\label{thm:poa upper}
    For every instance $\inst$,
    $\textit{PoA}(\inst) \leq \nicefrac{1}{p_{\min} }$ and $\textit{PoA}^+(\inst) \leq (1{-}p_{\min})\sum_{i\in \N}\tau_i,$ where $p_{\min}{=}\min_{i\in \N}\{p_i\}$.
\end{theorem}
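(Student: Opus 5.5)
The plan is to sandwich every Nash equilibrium's welfare between a lower bound $\sum_{i} p_i\tau_i \ge p_{\min}\sum_i\tau_i$ and the trivial upper bound $SW(\de_{\texttt{SW}}) \le \sum_i \tau_i$.

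\textbf{Upper bound.} By \Cref{eq:utility}, every realized utility is at most $\tau_i$ (nonnegative distance, or $0$ if the ballot is lost). The coefficients $p_{y_\ell}\prod_{r<\ell}(1-p_{y_r})$ form a sub-probability distribution. Hence $u_i(\de)\le\tau_i$ for every profile, and $SW(\de_{\texttt{SW}})\le\sum_i\tau_i$.

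\textbf{Lower bound in equilibrium.} Fix any NE $\de$ and any voter $i$. Consider the deviation $d'(i)=i$. Then $\pi((\de_{-i},i),i)=(i)$, so by \Cref{eq:utility} $u_i(\de_{-i},i)=p_i\tau_i$, regardless of what the others do. Since $\de$ is an NE, $i$ has no profitable deviation, so
\[u_i(\de)\ge u_i(\de_{-i},i)=p_i\tau_i\ge p_{\min}\tau_i.\]
Summing over $i$ gives $SW(\de)\ge p_{\min}\sum_i\tau_i$. This holds in particular for the worst equilibrium $\de_{\texttt{NE}}$.

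\textbf{Combining.} Dividing the two bounds gives $\textit{PoA}(\inst)\le 1/p_{\min}$. Subtracting them gives $\textit{PoA}^+(\inst)\le(1-p_{\min})\sum_i\tau_i$.

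\textbf{Caveats.} The statement implicitly assumes the instance admits an NE, since otherwise the PoA is undefined. The multiplicative bound also needs $p_{\min}>0$ and $\sum_i\tau_i>0$, so that the denominator is positive; with the convention $1/0=\infty$ the bound is vacuous when $p_{\min}=0$. There is no real obstacle: the key step is just noticing that self-delegation guarantees utility $p_i\tau_i$ independently of the other voters' choices.
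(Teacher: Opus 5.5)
Your proof is correct and follows the paper's argument. Like the paper, you combine the trivial upper bound $\sum_i\tau_i$ on optimal welfare with the fact that self-delegation guarantees each voter $p_i\tau_i$ in any equilibrium, and then divide and subtract. The only difference is how you justify $u_i(\de)\le\tau_i$: you read it off directly from the sub-probability weights in \Cref{eq:utility}, whereas the paper appeals to the individually optimal delegation path of \Cref{cor:optimal-utility-per-voter}, so your version is, if anything, more self-contained.
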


\cref{thm:poa upper} asserts that
higher voting probabilities correlate with better Nash equilibria in terms of social welfare. Furthermore, and perhaps surprisingly, the smaller the tolerance levels, the better the additive PoA bound.

We have assessed the expected number of votes cast in equilibria, showing that, unlike in other liquid democracy frameworks where cycles are criticized for resulting in ballot loss, in our setting, they effectively help mitigate lost voting power.
Also, results on the structure of optimal delegation profiles, complementing \cref{SCno-cycle}, can be found in the Appendix, further highlighting differences from NEs and vote-loss-minimizing profiles.

\smallskip \noindent \textbf{Experimental Analysis of NE Quality.}\label{sec:PoA-exp}
To complement our worst-case bounds, we move beyond PoA and examine how the social welfare achieved by NEs (in particular under the best-response dynamic) compares to the optimal social welfare in randomly generated instances. Since identifying $\de_{SW}$ is computationally infeasible when $n$ is large, we approximate it by the sum of each voter’s expected utility under their optimal delegation profile, denoted by $\ODP(\inst)$. Formally, $\ODP(\inst) = \sum_{i \in \N} u_i(\boldsymbol{d^{i*}})$, where $\boldsymbol{d^{i*}}$ is a delegation profile maximizing the utility of voter $i$. It holds that $G_{\boldsymbol{d^{i*}}}$ contains a path that starts in $i$ and passes through all vertices in $\mathcal{A}_i(\x,\Btau)$ in increasing order of distance to $i$. 
This value serves as an upper bound of $SW(\de_{SW})$ and does not necessarily correspond to a feasible delegation profile.

\newcommand{\dne}{\de_{\mathit{BR}}}

We generated 100 instances for each $n \in \{20, 50, 100, 200\}$, with  values for $\x,\p,\Btau$ chosen uniformly at random. For $n=50$, we also tested 5 tolerance vectors $\Btau$, scaling each by 0.75 and 0.5 to assess the effect of different tolerance levels (details on the instances can be found in the Appendix).
For each instance, we computed $\ODP(\inst)$ as an upper bound on the social welfare and an NE $\dne$ via best-response dynamics. \Cref{tab:SW_varyn} shows the average ratios $SW(\dne) / ODP(\inst)$.  As a baseline, we also include the social welfare achieved by the delegation profile $\d_{\texttt{dir}}$ (``direct voting'') in which every voter self-loops.

\begin{table}[t]
    \centering
     \resizebox{.999\columnwidth}{!}{
\begin{tabular}{cccccccc}
   \toprule
    & \multicolumn{4}{c}{Number of Voters} & \multicolumn{3}{c}{$\tau_i \in [0,\tau_{\max}]$ with $\tau_{\max} =$} \\
    \cmidrule(lr){2-5} \cmidrule(lr){6-8}
    & 20  & 50  & 100  & 200 & 1 & 0.75 & 0.5 \\
    \midrule
    $\dne$  & 97.6\%  & 98.8\%  & 99.4\%  & 99.7\%  & 98.9\% & 98.9\%  & 98.6\%  \\
    $\de_{\texttt{dir}}$  & 53.6\%  & 49.6\%  & 50.4\%  & 50.5\% & 51.2\%  & 51.6\%  & 52.3\%  \\
    \bottomrule
\end{tabular}
 }
    \caption{The average social welfare achieved by $\dne$ and $\de_{\texttt{dir}}$ in our experiments, as a percentage of $\ODP(\inst)$. 
    }
    \label{tab:SW_varyn}
\end{table}

As expected, the Nash equilibrium profiles
outperform the direct voting profiles, which consistently 
reach only around 50\% of  
$\ODP(\inst)$.
The average social welfare achieved by
$\dne$ is remarkably high ($\ge$98\% of  
$\ODP(\inst)$)
and gets closer to $\ODP(\inst)$ as $n$ increases. 
Given that $\ODP(\inst)$ upper bounds the optimal social welfare, we conclude that Nash equilibria in our model achieve almost optimal social welfare.

\section{Conclusion}
\label{sec:concl}
We introduced the default delegation model and used it to provide a novel game-theoretic perspective on strategic delegation decisions in liquid democracy.
We revealed how delegation cycles naturally emerge among rational participants, offering a justification for their existence. 

Our model leads to several avenues for future research. One of them is to explore the computational complexity of finding  equilibria or delegation profiles maximizing social welfare. In our experiments, we use best-response dynamics to find equilibria; however, these algorithms are not guaranteed to converge. 
It would also be interesting to define voters' alignment based on more general metric spaces, e.g., the Euclidean plane. 
Preliminary experiments reveal that the structure of equilibria becomes more complicated in that setting. 
Considering alternative utility functions could yield further insights. Examples include normalized utility, non-linear functions of distance, and functions incorporating voting costs, voting behavior of the ultimate delegate, or the resulting election outcome. 
The probability of casting a ballot could also be part of the strategy space of a voter.
Finally, while we primarily focused on cycles, \textit{long delegation paths} remain an important and underexplored aspect of liquid democracy, as they may be associated with eroding trust in ultimate delegates.

\section*{Acknowledgments}
 We thank Andreas Nitsche for helpful discussions. 
 Rachael Colley has been supported by the ANR JCJC project SCONE (ANR 18-CE23-0009-01) and the grant EP/X013618/1 from the Engineering and Physical Sciences Research Council. 
 This project has received funding from the European Research Council (ERC) under  the European Union’s Horizon 2020 research and innovation programme (grant  agreement No 101002854). Grzegorz Lisowski acknowledges support by the European Union under the Horizon Europe project Perycles  (Participatory Democracy that Scales).
 Georgios Papasotiropoulos was supported by the European Union
 (ERC, PRO-DEMOCRATIC, 101076570). Views and opinions expressed are, however, those of the
 authors only and do not necessarily reflect those of the European Union or the European Research
 Council. Neither the European Union nor the granting authority can be held responsible for them. 
 Ulrike Schmidt-Kraepelin was supported by the Dutch Research Council (NWO) under project number VI.Veni.232.254.
 
\hspace{-1.5em}
\includegraphics[width=0.6\columnwidth]{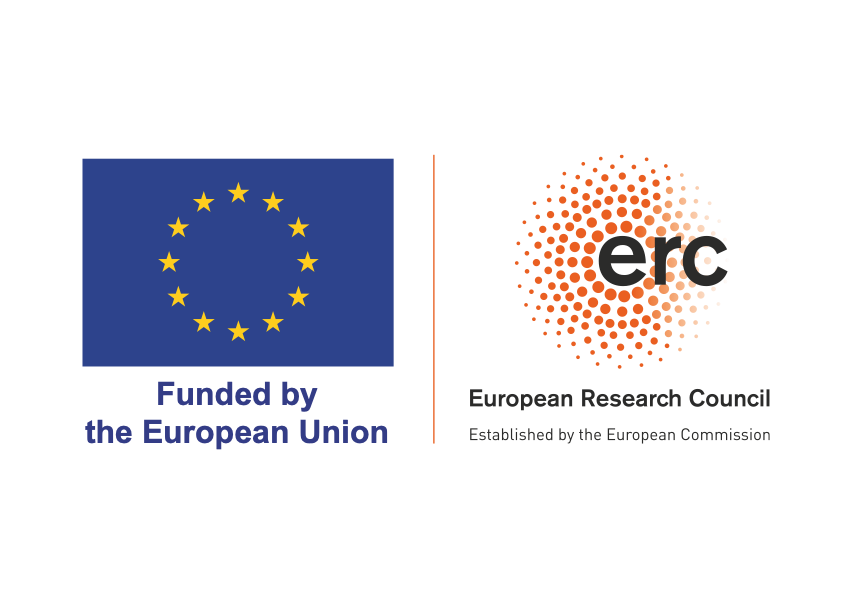}
~
\hspace{-0.2cm}
\raisebox{1.5cm}{\includegraphics[width=0.35\columnwidth]{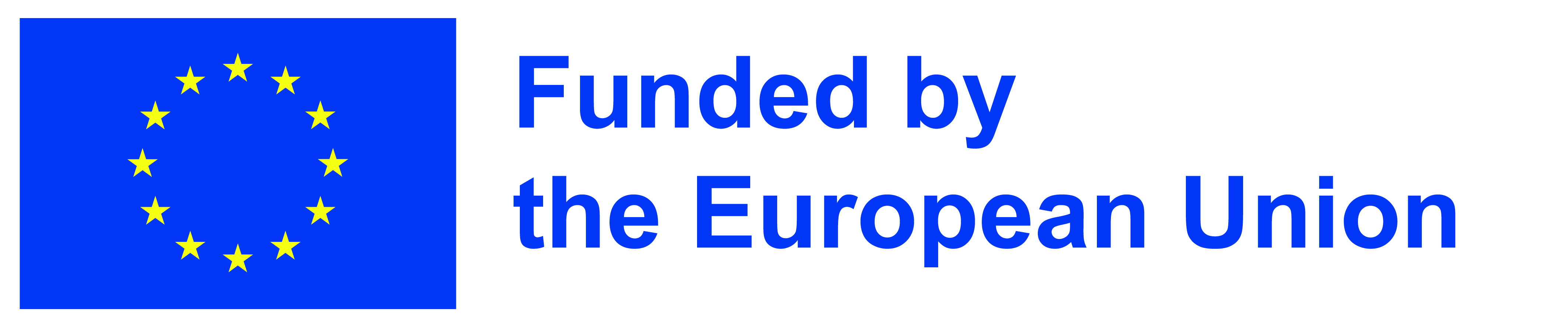}}

{\small{
\bibliographystyle{named}
\bibliography{references}
}}

\newpage

\appendix

\onecolumn
\section{Omitted Proofs}
\label{app:proofs}

\subsection*{Proof of \cref{thm:NE-existence}}

\begin{enumerate}
\item[(i)] Suppose that the voters in $\N$ are split into two disjoint groups: The casting voters, i.e., those in the set $X = \{i\in \N \mid p_i = 1\}$, and the delegators, i.e., those in the set $\N\setminus X = \{i \in \N \mid p_i = 0\}$.  
We set $d(i) = i = \argmin_{j \in \mathcal{A}_i(\x,\Btau) \cap X}(|x_i - x_j|)$ for all voters in $X$. For each $i \in \N\setminus X$, we also set  

$$d(i) = \argmin_{j \in \mathcal{A}_i(\x,\Btau) \cap X}(|x_i - x_j|),$$  

i.e., they delegate to their closest casting voter in their acceptability set. If $\mathcal{A}_i(\x,\Btau) \cap X = \emptyset$, then we simply set $d(i) = i$.  

We proceed to show that $\de$ is a Nash equilibrium. No voter $i \in X$ has a profitable deviation, as their expected utility will always be equal to $\tau_i$ regardless of the delegation they choose, due to $p_i = 1$. For voters $i \in \N\setminus X$, we consider two cases, i.e., when $d(i) = i$ and when $d(i) \neq i$.  

In the first case, namely when $d(i) = i,$ if $i$ deviates, they will finally be represented by a voter in $X,$ receiving a utility of $0,$ or by a voter in $\N\setminus X$. In the latter case they will necessarily be represented by a voter not in 
$\mathcal{A}_i(\x,\Btau)$, because $X \cap \mathcal{A}_i(\x,\Btau)=\emptyset$. Thus, their expected payoff would be non-positive in both cases, and consequently it will be no more than the utility they receive with $d(i) = i$, which gives a utility of $0$.  

In the second case, when $d(i) \neq i$, delegating to any other voter in $\mathcal{A}_i(\x,\Btau) \cap X$ will not increase their expected utility, because $d(i)$ is the closest casting voter to $i$.
Similarly, delegating to a voter $j \in \N\setminus X$ or to a voter $j \in X \setminus \mathcal{A}_i(\x,\Btau)$ might result in $i$ being represented by a voter $k$ satisfying
$k=\argmin_{j \in \mathcal{A}_i(\x,\Btau) \cap X}(|x_i - x_k|),$  
or not, depending on where $j$ delegates. In both cases, this is again not a profitable deviation from  $\de$ for $i$.  
\item[(ii)] We let $\N=\{A,B\}$ and we denote by $d=|x_B-x_A|$. First, we examine the expected utility for voter $A$, assuming that $d(B)=B$. If $d(A)=B$, then $u_A(d(A),d(B))=\tau_Ap_A+(1-p_A)p_B(\tau_A-d)$, and if $d(A)=A$, then $u_A(d(A),d(B))=\tau_Ap_A.$ Crucially, the utility of $A$ remains exactly the same even when $d(B)=A$. Therefore, in an equilibrium, voter $A$ decides to delegate to $B$ if $(1-p_A)p_B(\tau_A-d)\geq0,$ or equivalently if $\tau_A\geq d$, and self-loops otherwise, regardless of what strategy voter $B$ selects. The analogous arguments can determine the delegation of voter $B$ in an equilibrium.
\item[(iii)] We call the set of the three voters $\N=\{1,2,3\}$ and we assume that the voters are named in increasing order of their positions in the line. We can assume that $p_i > 0$ for all $i \in \{1, 2, 3\}$. Otherwise, if there is a deterministic voter $i$ for which this condition does not hold, we do the following: first, we exclude voter $i$ from consideration and then we find a NE in the remaining instance that has at most two voters, which can be done as previously shown. The exclusion of $i$ is safe because it is never profitable for a voter to delegate to $i$ regardless of the choice of voter $i$. Finally, having fixed the delegation for each voter $j$ such that $p_j>0,$ we fix a delegation for $i$ by determining the optimal delegation for them, given the choices of others.

At a high level, the proof for finding a Nash equilibrium $\de$, goes as follows: First, we fix a delegation for voter $2$ according to some criterion and, depending on this choice, we then fix the delegation for a (specific) second voter. Finally, we fix the delegation for the remaining voter as well. We will prove that by following this process and the specified criteria we obtain a delegation profile that is a Nash equilibrium. Specifically, we will show that no voter will have an incentive to deviate after fixing their choice, no matter the choices of the rest.

Let us compute the utility that voter $2$ will get from each possible delegation, assuming that $d(i)=i$ for $i\in \{1,3\}$. Then, for each such $i$, it holds $u_2(1,i,3)=p_2\tau_2+(1-p_2)p_i(\tau_2-\text{dist}(2,i)).$ 
We set $d(2)=\argmax_{i\in \{1,3\}} \{p_i(\tau_2-\text{dist}(2,i))\},$ 
assuming that $\max_{i\in \{1,3\}}\{p_i(\tau_2-\text{dist}(2,i))\}>0$. The case where $\max_{i\in \{1,3\}}\{p_i(\tau_2-\text{dist}(2,i))\}\leq 0$ will be addressed separately later. 

Assume without loss of generality that $\argmax_{i\in \{1,3\}}\{p_i(\tau_2-\text{dist}(2,i))\}=3$ and so that $d(2)=3.$ In that case, we continue with fixing the delegation for voter $1$, and finally for voter $3$ (we reverse the order of examination if $d(2)=1$), and we claim, and we will soon prove, that regardless of those, voter $2$ will not have an incentive to deviate from $d(2)=3$. We begin with examining voter $1$, i.e., the voter that voter $2$ did not delegate to. We compute the utility that voter $1$ would get if $d(2)=3=d(3),$ and based on this, we will fix $d(1)$. It holds that:

 \begin{align*}
u_1(1,3,3)&=p_1\tau_1\\ 
u_1(2,3,3)&=p_1\tau_1+(1-p_1)p_2(\tau_1-\text{dist}(2,1))+(1-p_1)(1-p_2)p_3(\tau_1-\text{dist}(3,1))\\ u_1(3,3,3)&=p_1\tau_1+(1-p_1)p_3(\tau_1-\text{dist}(3,1)).
\end{align*}
Since $\text{dist}(2,1)<\text{dist}(3,1)$ we also have that 

 \begin{align*}
u_1(2,3,3)&>p_1\tau_1+(1-p_1)p_2(\tau_1-\text{dist}(3,1))+(1-p_1)(1-p_2)p_3(\tau_1-\text{dist}(3,1))
\\ 
&=p_1\tau_1+(1-p_1)(\tau_1-\text{dist}(3,1))p_3=
u_1(3,3,3).
\end{align*}
Hence, we can assume that $d(1) \neq 3$. Then, we can determine the $\argmax_{i\in \{1,2\}}\{u_1(i,3,3)\},$ and set $d(i)$ to be equal to that value.
Note that whatever the determined delegation choice for voter $1$ is, it does not affect the utility of voter $2$, so, for now, voter $2$ does not have an incentive to deviate.

We continue with fixing a delegation for voter $3$ and we now have two cases to consider, i.e., $d(1)=1$ and $d(1)=2$. 
\begin{itemize}
    \item We first examine the case where $d(1)=1$. Then, by mutual acceptance, $u_3(d(1),d(2),1)<u_3(d(1),d(2),3),$ or otherwise voter $1$ would prefer a delegation to voter $3$ than $d(1)=1$. Similarly, $u_3(d(1),d(2),2)>u_3(d(1),d(2),3),$ or otherwise voter $2$ would also not prefer a delegation to voter $3$. Consequently, we always fix $d(3)=2$ in this case. It is easy to check that no voter has an incentive to deviate, or otherwise voters would not have fixed the corresponding choices in the previous steps of the procedure.
    \item Suppose now that $d(1)=2.$ There are two possible options for voter $3$ to consider: delegating to voter $1$ and hence forming a 3-cycle, or delegating to voter $2$ and hence forming a 2-cycle. We select for voter $3$ the option among the two that maximizes their own utility, given that $d(2)=3$ and $d(1)=2$. We will show that in both cases neither voter $1$ nor voter $2$ have an incentive to deviate. More precisely, say first that $\de$ induces a 3-cycle, and voter $1$ wants to deviate. This can only be when $\tau_1<dist(1,j)$ for some $j\in \{2,3\}$. This cannot hold for $j=2$ or otherwise voter $1$ wouldn't have delegated to $2$ beforehand and it cannot also hold for $j=3$ or otherwise voter $3$ wouldn't have delegated to $1$. The argument for rejecting a potential deviation of voter $2$ is similar.   
    Now say that $\de$ induces a 2-cycle between voters $2$ and $3$ and $d(1)=2$. The fact that at a previous step voter $2$ selected to delegate to $3$ means that a deviation of voter $2$ will not be profitable for them. It remains to show that voter $1$ doesn't have a profitable deviation either. To that end, we will prove that $u_1(\de_{-1},2)>u_1(\de_{-1},3).$ This is true due to the following equivalent expressions: 

      (i) $(\tau_1-\text{dist}(1,2))p_2+(\tau_1-\text{dist}(1,3))p_3(1-p_2)    > 
                (\tau_1-\text{dist}(1,3))p_3+(\tau_1-\text{dist}(1,2))p_2(1-p_3)$,\\ (ii) $
                (\tau_1-\text{dist}(1,2))p_2 p_3> (\tau_1-\text{dist}(1,3))p_3 p_2$, and\\ (iii)     
                $\text{dist}(1,2)< \text{dist}(1,3)$.
\end{itemize}
Thus, the sequence in which voters are examined, along with the criteria that led to the specification of their delegations, resulted in a Nash equilibrium.

It remains to consider the case where $\max_{i\in \{1,3\}}\{p_i(\tau_2-\text{dist}(2,i))\}\leq 0$. Equivalently, $(\tau_2-\text{dist}(2,i))\leq 0,$ for $i\in \{1,3\}$. If this is the case, we set $d(i)=i$ for every $i\in \{1,2,3\}$. Then, each voter $i$ experiences a utility of $p_i\tau_i$. Consider a possible deviation for voter $2$, say by delegating to some voter $i\in \{1,3\}$. Then their new utility becomes $p_2\tau_2+(1-p_2)p_i(\tau_2-\text{dist}(2,i))\leq p_2\tau_2$. Therefore, such a deviation is not profitable for voter $2$. 
The same argument holds for all the remaining possible deviations, since, by mutual acceptance, if voter $1\notin \mathcal{A}_2(\x,\Btau)$ and $3\notin \mathcal{A}_2(\x,\Btau)$, then also 
$2\notin \mathcal{A}_1(\x,\Btau)$ and $2\notin \mathcal{A}_3(\x,\Btau),$ so both voters $1$ and $3$ would prefer to loop rather than delegating to voter $2$, let alone to voter $3$ and $1$ respectively.
\hfill $\qed$
\end{enumerate}

 \subsection*{Proof of \cref{thm:no-NE}}

 \begin{enumerate}
 \item[(i)] Let us first show that there are instances without a NE even with only three voters present. Take an instance with three voters, $A, B$, and $C,$ and say that $\x = (0.1, 0.9, 0.95)$, $\p = (0.02, 0.02, 0.1)$, and $\Btau = (0.9, 0.1, 0.9)$. Assume that there exists a NE profile $\d$ in that instance. First, notice that $B$ does not choose to delegate to $A$ in $\d$, as $\text{dist}(A,B) > \tau_B$. Second, we observe that $A$ and $C$ do not self-delegate, as $\text{dist}(A,B) < \tau_A$ and $\text{dist}(C,B) < \tau_C$. \Cref{fig:NoNEnfg3ag} depicts the utilities of the voters in all the remaining strategy profiles. It is routine to check that there is no NE in this instance.

\begin{figure}[t]
\centering
\scalebox{.7}{\nfgameN{$d(A)=B$ $d(A)\rightarrow C$ $d(B)=C$ $d(B)=B$ 0.0247,\,\,\,-0.0054,\,\,\,0.1082 0.0229,\,\,\,-0.0054,\,\,\,0.0909 0.0199,\,\,\,0.002,\,\,\,0.1058 0.0229,\,\,\,0.002,\,\,\,0.0909 }}
~
\hspace{1cm}
\scalebox{.7}{\nfgameN{$d(A)=B$ $d(A)=C$ $d(B)=C$ $d(B)=B$ 0.0247,\,\,\,0.0069,\,\,\,0.1053 0.0246,\,\,\,0.0069,\,\,\,0.1053 0.0199,\,\,\,0.002,\,\,\,0.1053 0.0246,\,\,\,0.002,\,\,\,0.1053 }}
\caption{Normal form representation of the game induced by the instance described in the proof of \cref{thm:no-NE}. The table on the left corresponds to $C$ delegating to $A$, and the one on the right to $C$ delegating to $B$. Rows correspond to the choices of voter $A$ and columns to the choices of voter $B$. The numbers in each cell show the utility of $A$, $B$, and $C$ respectively, for the delegation profile under examination.
}\label{fig:NoNEnfg3ag}
\end{figure}
     \item [(ii)]
Consider the following instance $\inst$: 

 \begin{align*}
    p&=(0.05, 0.49, 0.1, 1.0),\\ x&=(0.0, 0.05, 0.15, 0.3),\\ \tau&=0.2.
\end{align*}

We name the voters of the instance as $\N=\{0,1,2,3\}$.
We first observe that it is sufficient to focus on delegation profiles where voter $3$ loops. This is because if there is a Nash equilibrium where voter $3$ doesn't loop, there is another delegation profile where they loop and that it is a Nash equilibrium as well.
All possible delegation profiles $\de$ in which $d(3)=3$ are depicted in (the first column of) \cref{tab:thm5}. For each such $\de$ we identify a voter $i\in \N$ for which there exists a profitable deviation, proving that no Nash equilibrium exists in $\inst$. \hfill \qed 
\end{enumerate}

\begin{table*}[t]
\resizebox{1\columnwidth}{!}{
\centering
\begin{tabular}{ccccc}
\toprule
\textbf{Delegation Profile} & \textbf{Deviating Voter} & \textbf{Current Delegation} & \textbf{Profitable Deviation} & \textbf{Utilities for Deviating voter} \\ \midrule
$[0, 0, 0, 3]$ &   0 & 0 & 1 & [0.0100, 0.0798, 0.0148, -0.0850] \\ 
$[0, 0, 1, 3]$ &   0 & 0 & 1 & [0.0100, 0.0798, 0.0776, -0.0850] \\ 
$[0, 0, 2, 3]$ &   0 & 0 & 1 & [0.0100, 0.0798, 0.0148, -0.0850] \\ 
$[0, 0, 3, 3]$ &   0 & 0 & 1 & [0.0100, 0.0798, -0.0707, -0.0850] \\ 
$[0, 1, 0, 3]$ &   0 & 0 & 1 & [0.0100, 0.0798, 0.0148, -0.0850] \\ 
$[0, 1, 1, 3]$ &   0 & 0 & 1 & [0.0100, 0.0798, 0.0776, -0.0850] \\ 
$[0, 1, 2, 3]$ &   0 & 0 & 1 & [0.0100, 0.0798, 0.0148, -0.0850] \\ 
$[0, 1, 3, 3]$ &   0 & 0 & 1 & [0.0100, 0.0798, -0.0707, -0.0850] \\ 
$[0, 2, 0, 3]$ &   0 & 0 & 1 & [0.0100, 0.0822, 0.0148, -0.0850] \\ 
$[0, 2, 1, 3]$ &   0 & 0 & 1 & [0.0100, 0.0822, 0.0776, -0.0850] \\ 
$[0, 2, 2, 3]$ &   0 & 0 & 1 & [0.0100, 0.0822, 0.0148, -0.0850] \\ 
$[0, 2, 3, 3]$ &   0 & 0 & 1 & [0.0100, 0.0386, -0.0707, -0.0850] \\ 
$[0, 3, 0, 3]$ &   0 & 0 & 1 & [0.0100, 0.0314, 0.0148, -0.0850] \\ 
$[0, 3, 1, 3]$ &   0 & 0 & 2 & [0.0100, 0.0314, 0.0340, -0.0850] \\ 
$[0, 3, 2, 3]$ &   0 & 0 & 1 & [0.0100, 0.0314, 0.0148, -0.0850] \\ 
$[0, 3, 3, 3]$ &   0 & 0 & 1 & [0.0100, 0.0314, -0.0707, -0.0850] \\ 
$[1, 0, 0, 3]$ &   1 & 0 & 2 & [0.1018, 0.0980, 0.1065, 0.0725] \\ 
$[1, 0, 1, 3]$ &   1 & 0 & 2 & [0.1018, 0.0980, 0.1031, 0.0725] \\ 
$[1, 0, 2, 3]$ &   1 & 0 & 2 & [0.1018, 0.0980, 0.1031, 0.0725] \\ 
$[1, 0, 3, 3]$ &   2 & 3 & 1 & [0.0641, 0.0652, 0.0200, 0.0650] \\ 
$[1, 1, 0, 3]$ &   1 & 1 & 2 & [0.1018, 0.0980, 0.1065, 0.0725] \\ 
$[1, 1, 1, 3]$ &   1 & 1 & 2 & [0.1018, 0.0980, 0.1031, 0.0725] \\ 
$[1, 1, 2, 3]$ &   1 & 1 & 2 & [0.1018, 0.0980, 0.1031, 0.0725] \\ 
$[1, 1, 3, 3]$ &   1 & 1 & 0 & [0.1018, 0.0980, 0.0802, 0.0725] \\ 
$[1, 2, 0, 3]$ &   2 & 0 & 3 & [0.0641, 0.0641, 0.0200, 0.0650] \\ 
$[1, 2, 1, 3]$ &   2 & 1 & 3 & [0.0641, 0.0641, 0.0200, 0.0650] \\ 
$[1, 2, 2, 3]$ &   2 & 2 & 3 & [0.0641, 0.0641, 0.0200, 0.0650] \\ 
$[1, 2, 3, 3]$ &   1 & 2 & 0 & [0.1018, 0.0980, 0.0802, 0.0725] \\ 
$[1, 3, 0, 3]$ &   1 & 3 & 2 & [0.1018, 0.0980, 0.1065, 0.0725] \\ 
$[1, 3, 1, 3]$ &   0 & 1 & 2 & [0.0100, 0.0314, 0.0340, -0.0850] \\ 
$[1, 3, 2, 3]$ &   1 & 3 & 2 & [0.1018, 0.0980, 0.1031, 0.0725] \\ 
$[1, 3, 3, 3]$ &   1 & 3 & 0 & [0.1018, 0.0980, 0.0802, 0.0725] \\ 
$[2, 0, 0, 3]$ &   0 & 2 & 1 & [0.0100, 0.0798, 0.0148, -0.0850] \\ 
$[2, 0, 1, 3]$ &   0 & 2 & 1 & [0.0100, 0.0798, 0.0776, -0.0850] \\ 
$[2, 0, 2, 3]$ &   0 & 2 & 1 & [0.0100, 0.0798, 0.0148, -0.0850] \\ 
$[2, 0, 3, 3]$ &   0 & 2 & 1 & [0.0100, 0.0798, -0.0707, -0.0850] \\ \bottomrule
\end{tabular}}
\caption{For each delegation profile $\de$ satisfying $d(3)=3$ (column 1), we identify a voter $i\in N$ (column~2) who can improve their expected utility by unilaterally changing their delegation from $d(i)$ (column 3) to $d(i)'$ (column~4). Column~5 shows the expected utility voter $i$ would obtain by delegating to each voter in $\N$, while keeping the other voters' delegations unchanged.}
\label{tab:thm5}
\end{table*}

%
 
 \subsection*{Proof of \cref{thm: NE-LR}}

We focus on rightists; the case for leftists is analogous. The proof proceeds by a greedy method. We consider the voters in order of decreasing position, to be denoted as $v_1, v_2, \cdots, v_n$. Voter $v_n$, which corresponds to the voter at position $x_{\max}=\argmax_{i\in \N}\{x_i\}$, will self-loop in an equilibrium, simply due to the fact that, as long as $p_n\neq1,$ any delegation will result in being represented (with some probability) by someone in their left, which gives a negative utility to a rightist voter. If $p_n=1,$ self-looping, trivially, doesn't admit a profitable deviation. 

Consider now voter $v_{n-1}$. Being a rightist, $v_{n-1}$ can only delegate to themselves or to $v_n$ to obtain a non-negative utility. The decision of this voter is not affected by delegations to them, or by edges to voters on their left. Therefore, the delegation choices of voters $v_1, \cdots, v_{n-2}$, which will be fixed in a later step of the procedure, cannot influence the decision of $v_{n-1}$. Thus, for $v_{n-1},$ we only need to check whether delegating to $v_n$ or to themselves produces more expected utility (knowing that $v_n$ loops) and fix the corresponding choice. 

We proceed similarly for the remaining voters in order. Notice that the decision of $v_{n-1}$ impacts only the decisions of voters positioned before $v_{n-1}$, who will be examined next. In the specified order, when it is time to consider a voter with a lower position, they will choose the best delegation option available to them, taking into account the already-fixed delegations of voters positioned on their right and the fact that they cannot delegate to anyone positioned on their left. Once such a delegation is chosen, it becomes fixed, as changes in the delegations of voters positioned on their left will not affect their expected utility and make them interested in deviating. \hfill $\qed$

 \subsection*{Proof of \cref{thm: NE-proxy}}

We will present a procedure that identifies a Nash equilibrium $\de,$ in a given instance of the proxy voting setting. Initially we call all voters as \emph{unclassified}. Additionally, we will refer to the following sets of voters: \emph{represented} and \emph{representatives}, initially both empty.

For each unclassified voter $i$ in an arbitrary order, we determine the voter $j$ maximizing the quantity $p_j(\tau_i-\text{dist}(i,j))$ among all $j$ such that $j$ doesn't yet belong to the set of represented voters ($i$ included). Note that if $j$ is represented, a delegation from $i$ to $j$ would create a chain of $3$ voters, hence such a delegation is infeasible. Then, we fix $d(i)=j$ and we remove both $i$ and $j$ from the set of unclassified voters, labeling $i$ as a represented and $j$ as a representative. Subsequently, we repeat for another unclassified voter. When all unclassified voters have been examined, we move to the second phase of the algorithm. 

At the end of the described first phase, all voters are either represented or representatives but representatives do not have an outgoing edge yet in $G_{\d}$ (or they have a self-loop). For those without a loop, it remains to determine their delegation. Then, we consider each such a voter in arbitrary order. 
Say we examine voter $i$, the only choices for $i$ are to delegate to one of the voters in $\{k \in \N \mid d(k)=i\},$ i.e. those who already delegated to $i$ or to self-loop, due to the constraint of the proxy voting setting. Among those options, $i$ picks the delegation that maximizes their utility, which again corresponds to the voter $j\in \{k \in \N \mid d(k)=i\}\cup \{i\},$ maximizing the quantity $p_j(\tau_i-\text{dist}(i,j))$. 

To prove that the constructed delegation forms indeed a Nash equilibrium, we consider a voter $i$ and we assume that they prefer to deviate, towards a contradiction. First, we examine a voter $i$ that self-delegates. By construction, there are no feasible choices for $i$ that would improve their utility. Assume now that $i$ is a voter that is in the set of represented voters and didn't self-loop. Then, $i$ is delegating to the one giving them the maximal utility, at the moment we considered $i$ in the first round. It holds that $i$ cannot delegate to anyone that delegates further, so every other voter in the set of represented voters is not a feasible choice. Among the representatives, $i$ chose the best for themselves option. Suppose now that $i$ is in the set of representatives and didn't self-loop. Then, simply, no deviation to a voter that doesn't delegate to $i$ is feasible for $i$, and among those, $i$ has been assigned to their preferred one.
\hfill $\qed$


 

 \subsection*{Proof of \cref{obs:NEcycles}}

Consider a weakly connected component of a Nash equilibrium $\de$ of $\inst$ that has at least $2$ vertices, or, in other words, at least one (directed) edge. Towards a contradiction, suppose that it does not have a directed cycle. Then, its undirected variant must be a tree, meaning it has a sink vertex that self-delegates. Let this vertex correspond to a voter $i$, and let $j$ be a voter such that $d(j) = i$. The utility of $i$ under $\de$ equals $\tau_i p_i$. 

If voter $i$ were to delegate back to voter $j$, then $i$ would obtain a utility of $\tau_i p_i + (1-p_i)(\tau_i - \text{dist}(i,j))p_j$. Since the delegation from $i$ to $j$ was not selected in the equilibrium $\de$ (otherwise $i$ would not have been a sink), it must hold that  

\begin{equation*} 
\tau_i p_i + (1-p_i)(\tau_i - \text{dist}(i,j))p_j \leq \tau_i p_i
\Leftrightarrow (1-p_i)(\tau_i - \text{dist}(i,j))p_j \leq 0.
\end{equation*}  

Therefore, it must hold that $p_i = 1$, or $\tau_i \leq \text{dist}(i,j)$, or $p_j = 0$. The first and third cases do not hold by assumption. Consequently, it is necessarily true that  

\begin{equation}  
\label{eq:contr}  
\tau_i - \text{dist}(i,j) \leq 0.  
\end{equation}  

Since $j$ delegates to $i$ at $\de$, the utility of $j$ equals $\tau_j p_j + (1-p_j)(\tau_j - \text{dist}(i,j))p_i$, and this value must be no less than $\tau_j p_j$, or $j$ would prefer to self-loop. Therefore,  

\begin{equation*}
    (1-p_j)(\tau_j - \text{dist}(i,j))p_i \geq 0 \Rightarrow \tau_j - \text{dist}(i,j) \geq 0.
\end{equation*}  

By assumption, $\tau_j - \text{dist}(i,j) \neq 0$, and hence, $\tau_j - \text{dist}(i,j) > 0$. Consequently, $i \in \mathcal{A}_j(\x,\Btau)$, and by the assumption of mutual acceptance between voters, $j \in \mathcal{A}_i(\x,\Btau)$. Therefore, $\tau_i - \text{dist}(i,j) > 0$, which contradicts \cref{eq:contr}.  
Thus, the considered weakly connected component must have at least one directed cycle. 

We will now prove that no more than a single cycle may appear in a weakly connected component. Consider such a component with at least one cycle and let it contain $k$ vertices. Since each vertex has out-degree 1, the component also has $k$ edges. Removing now one edge from the cycle leaves the component connected, with $k$ vertices and $k-1$ edges. This structure is necessarily a tree, hence, the original component can be viewed as a tree plus one additional edge, and adding a single edge to a tree creates no more than a single cycle, proving the claim.
\hfill $\qed$


 \subsection*{Proof of \cref{prop:degenerate}}

We will show that there exists an instance $\inst$ with a Nash equilibrium $\de^*$ such that there is a weakly connected component of at least two vertices in $G_{\de^*}$ that does not have a cycle. The instance $\inst$ will have at least one of the following properties: (1) for some voter $i\in \N$, we have that $p_i=1$, (2) for some voter $j\in \N$, we have that $p_j=0$, (3) there is a pair of voters $(i,j) \in \N\times \N$ such that $i\in \mathcal{A}_j(\x,\Btau)$ but $j\notin \mathcal{A}_i(\x,\Btau)$.

Consider an instance $\inst$ with only two voters, namely $i$ and $j$, where $i \in \mathcal{A}_j(\x,\Btau)$. We call $\de^*$ the profile in which $i$ self-delegates and $j$ delegates to $i$. Observe that no cycle appears in the connected component of size $2$ of $G_{\de^*}$. Note that $j$ receives a utility under $\de^*$ that is at least as large as under $(\d^*_{-j}, j)$, if any of the conditions (1)-(3) hold. It remains to prove that for $i$ it is not better to delegate to $j$ instead. 
It is first immediate that if $p_i=1$, then $i$ receives the same expected utility by self-delegating as by delegating to $j$. Also, if $p_j=0$, then delegating to $j$ provides the same utility to $i$ as self-delegation. Then, if $j\notin \mathcal{A}_i(\x,\Btau)$, $i$  receives strictly lower utility in $(\d^*_{-i}, j)$ than in $\d^*$. 
Therefore, in all of the examined cases $\d^*$ is a Nash equilibrium.
\hfill $\qed$

 \subsection*{Proof of \cref{obs:NEcycles_degenerate}}

  Suppose that $\de$ is a Nash equilibrium and that $G_{\de}$ contains a weakly connected component $W$ of more than one vertex that does not have a directed cycle. Then there exists a voter $i$ in $W$ that is self-delegating under $\de$ and a voter $j$ such that $d(j) = i.$ 
 If $p_j\neq 0$ and $p_i\neq 1,$ the proof of \cref{obs:NEcycles} works. So, we first assume that $p_i=1$. We claim that there is another NE profile $\de'$, where $d(j)'=i$ and $d(i)'=j$. The fact that $p_i=1$ means that voter $i$ as well as every voter delegating directly or indirectly to voter $i$ under $\de$ is not affected by whether $i$ self-delegates or not. The same holds trivially for the rest of the voters, so $\de'=(\de_{-i},j)$ is a Nash equilibrium as well. 
The proof is similar for the case where for the voter $j$ it holds that $p_j=0$. Specifically, we claim again that there is another NE $\de'$ such that $d'_j=i$ and $d'_i=j$. The fact that $p_j=0$ means that the change of delegation for voter $i$ from self-looping to $j$ will not affect the utility of any voter as $j$ will never cast a ballot. Therefore, the expected utility of every voter is the same under both delegation profiles.  
\hfill $\qed$

 \subsection*{Proof of \cref{obs:entrypoints}}

Take a weakly connected component $W$ of a graph $G_{\de}$ where $\de$ is a Nash equilibrium of an instance $\inst$ satisfying the conditions of the statement. 
For a voter $v \in \mathcal{L}(W) \cup \mathcal{R}(W)$ and a voter $w \in \mathcal{C}(W)$ we denote as $\mathcal{Q}^v_w(w')$ the probability that a vertex $w' \in \mathcal{C}(W)$ corresponds to the ultimate delegate of $v,$ (to be called $\mathcal{P}^v_w(w')$) times $(1-p_v)$ in the case that $v$ delegates to $w$.
We note that $\mathcal{P}^v_w$ is actually the same for every voter $v \in \mathcal{L}(W) \cup \mathcal{R}(W)$ and hence we will drop the superscript. 

The following lemma is a direct consequence of the definition of expected utility together with the fact that no voter in $\mathcal{C}(W)$ delegated to a voter outside of $\mathcal{C}(W)$ according to $\de$.

\begin{lemma}\label{lem:CycleSimpleUti}
For a voter $v \in \mathcal{L}(W) \cup \mathcal{R}(W)$ that delegates to a voter $w \in \mathcal{C}(W)$, the expected utility of $v$ is $p_v \tau + \sum_{i \in \mathcal{C}(W)} \mathcal{Q}_{w}^v(i) \cdot (\tau - \text{dist}(v,i))= p_v \tau + (1-p_v)\sum_{i \in \mathcal{C}(W)} \mathcal{P}_{w}(i) \cdot (\tau - \text{dist}(v,i))$.
\end{lemma}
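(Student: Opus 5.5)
The plan is to write out the path $\pi(\de,v)$ explicitly and substitute it into \cref{eq:utility}; everything then follows by regrouping terms. Let the cycle be $\mathcal{C}(W)=(c_0,c_1,\dots,c_{m-1})$, labelled so that $c_0=w$ and $d(c_s)=c_{s+1 \bmod m}$. By \cref{obs:NEcycles}, $W$ contains exactly one cycle. Since every vertex has out-degree one, each cycle vertex delegates to its successor on the cycle, so no voter of $\mathcal{C}(W)$ delegates outside $\mathcal{C}(W)$.

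\textbf{Step 1: identify the path.} Because $v\in\mathcal{L}(W)\cup\mathcal{R}(W)$, voter $v$ is not on the cycle, and by assumption $d(v)=w=c_0$. Following delegations from $c_0$ visits $c_1,\dots,c_{m-1}$, and then $d(c_{m-1})=c_0$ is already visited. Hence $\pi(\de,v)=(v,c_0,c_1,\dots,c_{m-1})$: the simple path leaves $v$ and then goes once around the cycle, starting at $w$. It contains no other vertices, because off-cycle vertices delegating to the cycle are never reached from $v$ once $v$ has entered it.

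\textbf{Step 2: substitute into \cref{eq:utility}.} The $\ell=0$ term is $(\tau-\text{dist}(v,v))p_v=p_v\tau$. For $\ell=s+1$ with $0\le s\le m-1$, the term is
\[
(\tau-\text{dist}(v,c_s))\cdot p_{c_s}(1-p_v)\prod_{r=0}^{s-1}(1-p_{c_r}).
\]
Define $\mathcal{P}_w(c_s)=p_{c_s}\prod_{r=0}^{s-1}(1-p_{c_r})$. This is the probability that $c_s$ is the first casting voter when walking around the cycle from $w$, conditioned on $v$ not voting. It depends only on $w$ and the cycle, not on $v$, which justifies dropping the superscript as the paper does. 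By definition $\mathcal{Q}^v_w(c_s)=(1-p_v)\mathcal{P}_w(c_s)$. Summing over $s$ gives both claimed expressions. Here I use a common tolerance $\tau$, as in the statement; for general tolerances the same computation goes through with $\tau_v$ in place of $\tau$.

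\textbf{Main obstacle.} The only real point is the claim in Step 1 that the walk from $w$ stays in $\mathcal{C}(W)$ and traverses it exactly once in cycle order. This rests on \cref{obs:NEcycles} (unique cycle per component) together with out-degree one. It also requires that $v$ itself lies off the cycle, which holds by the definition of $\mathcal{L}(W)$ and $\mathcal{R}(W)$, so the path cannot close back at $v$. The rest is bookkeeping.
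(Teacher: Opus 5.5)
Your proposal is correct and takes the same approach as the paper. The paper simply asserts that the lemma follows from the definition of expected utility together with the fact that no voter in $\mathcal{C}(W)$ delegates outside $\mathcal{C}(W)$; you have written out that expansion explicitly along the path $(v, w, c_1, \dots, c_{m-1})$.
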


Suppose, without loss of generality, that there are two distinct voters $v, v'\in \mathcal{L}(W)$ that are delegating to a vertex in $\mathcal{C}(W)$, under $\de$. Assume further that $x_v < x_{v'}$. First, 
notice that since both $v$ and $v'$ are positioned to the left of all voters in $\mathcal{C}(W)$, from \Cref{lem:CycleSimpleUti}, we have that the set of voters in $\mathcal{C}(W)$ maximizing the expected utility if delegated to is the same for $v$ and $v'$ and, since $\d$ is a NE, both of them delegate to such a voter, say $w$.

We will show now that $\d$ is not a NE by showing that $v$ would benefit from switching their delegation to $v'$ instead of $w$. In that case, the utility of $v$ would be

\begin{equation*}
\tau p_v + (1- p_v)\left(p_{v'} (\tau - \text{dist}(v, v')) + \sum_{i \in \mathcal{C}(W)} \mathcal{P}_{w}(i)(1- p_{v'}) (\tau - \text{dist}(v,i))\right).
\end{equation*}
Now, recall that the expected utility of $v$ under $\d$ amounts to 
$
 \tau p_v + (1-p_v)\sum_{i \in \mathcal{C}(W)} \mathcal{P}_{w}(i) (\tau - \text{dist}(v,i)).
$
It suffices to show 

 \begin{align*}
p_{v'} (\tau - \text{dist}(v, v')) + \sum_{i \in \mathcal{C}(W)} \mathcal{P}_{w}(i)(1- p_{v'}) (\tau - \text{dist}(v,i)) 
&> 
\sum_{i \in \mathcal{C}(W)} \mathcal{P}_{w}(i) (\tau - \text{dist}(v,i)) \Leftrightarrow\\
(\tau - \text{dist}(v, v'))  
&>
\sum_{i \in \mathcal{C}(W)} \mathcal{P}_{w}(i) (\tau - \text{dist}(v,i)),
\end{align*}
which is true because

\begin{align*}
    \sum_{i \in \mathcal{C}(W)} \mathcal{P}_{w}(i) (\tau - \text{dist}(v,i)) &<   \sum_{i \in \mathcal{C}(W)} \mathcal{P}_{w}(i) (\tau - \text{dist}(v,v')) 
    = \\ (\tau - \text{dist}(v,v')) \sum_{i \in \mathcal{C}(W)} \mathcal{P}_{w}(i) &\leq \tau - \text{dist}(v,v'),
\end{align*}
where the last inequality holds as $\sum_{i \in \mathcal{C}(W)}\mathcal{P}_{w}(i) \leq 1$.

Finally, because $W$ is a connected component containing exactly one cycle (by \cref{obs:NEcycles}), it directly follows that $\mathcal{L}(W)$ and $\mathcal{R}(W)$ induce directed (in-)trees, which are directed towards $\mathcal{C}(W)$.
\hfill $\qed$

\subsection*{Proof of \cref{SCno-cycle}}

Consider the following symmetric instance of three voters, namely $A$, $B$, and $C$, with $\tau=0.4$:

 \begin{align*}
    x=(0.12, 0.5, 0.88), \qquad p=(0.1, 0.9, 0.1).
\end{align*}

Notice first that in a profile that maximizes total utility $A$ and $C$ do not delegate to each other, as $\text{dist}(A,C) > 0.4$, and hence changing a delegation of one of them to $B$ would strictly improve the total utility.
The social welfare of the rest possible delegation profiles is shown in \cref{tab:delegation-socialwelfare}.
\begin{table}[h]
\centering
\begin{tabular}{@{}cc@{}}
\toprule
\textbf{Delegation Profile} & \textbf{Social Welfare} \\ \midrule
$[A, A, B]$ & 0.4532 \\
$[A, A, C]$ & 0.4402 \\
$[A, B, B]$ & 0.4562 \\
$[A, B, C]$ & 0.4400 \\
$[A, C, B]$ & 0.4564 \\
$[A, C, C]$ & 0.4402 \\
$[B, A, B]$ & 0.4694 \\
$[B, A, C]$ & 0.4564 \\
$[B, B, B]$ & \textbf{0.4724} \\
$[B, B, C]$ & 0.4562 \\
$[B, C, B]$ & 0.4694 \\
$[B, C, C]$ & 0.4532 \\
\bottomrule
\end{tabular}
\caption{Social Welfare achieved by profiles in which $A$ doesn't delegate to $C$ and vice versa, in the instance examined in the proof of \cref{SCno-cycle}. The maximizing value of social welfare appears in bold.}
\label{tab:delegation-socialwelfare}
\end{table}

\noindent It follows that $\d=[B,B,B]$ maximizes total utility; its delegation graph appears below.

    \begin{figure}[h!]
\centering
\scalebox{.6}{\begin{tikzpicture}[node distance=0.2cm,semithick]

\node at  (0,0.5) {$0$};
\node at  (1.2,0.5) {$0.12$};
\node at  (5,0.5) {$0.5$};
\node at  (8.8,0.5) {$0.88$};
\node at  (10,0.5) {$1$};

\node (A) at  (1.2,0.7-1) {$A$};
\node (B) at  (5,0.7-1) {$B$};
\node (C) at  (8.8,0.7-1) {$C$};

\path[thick] (0,0.2) edge (10,0.2)
             (0,0.25) edge (0,0.1)
             (1.2,0.2) edge (1.2,0.15)
             (5,0.2) edge (5,0.15)
             (8.8,0.2) edge (8.8,0.15)
             (10,0.25) edge (10,0.1);

\draw[thick, -latex] (A) edge[bend right] (B); 
\draw[thick, -latex] (C) edge[bend left] (B); 
\draw[thick, -latex] (B) edge[loop below] (B); 
\end{tikzpicture}}%
\end{figure}

\hfill \qed

\subsection*{Proof of \cref{thm: poa}}

We describe a family of instances $\inst_n$, parameterized by the number of
voters $n$, for which the (unique) Nash equilibrium $\de$ satisfies
$SW(\de) \to 0$ as $n \to \infty$, while the social-welfare-optimal profile
satisfies $SW(\de_{\texttt{SW}}) = \Omega(1)$; hence
$\textit{PoA}(\inst_n) \to \infty$.

Fix a constant $\lambda \in (0,1)$. Consider an instance $\inst_n$ of $n$
voters, where for $i \in \{2,\dots,n\}$, voter $i$ is positioned at a distance
of $1/n^3$ after voter $i-1$, and voter $1$ is positioned at $x_1=0$. For
voter $v_1$ it holds that $\tau_1 = \lambda$, whereas for the rest of the
voters it holds that $\tau_i = 0$. All voters vote with
probability $1/n$. Note that all voters lie in $[0,\frac{n-1}{n^3}]
\subseteq [0,\frac1{n^2}]$, so $\dist(i,j) < \frac1{n^2}$ for every pair
$i,j$; in particular $\mathcal{A}_1(\x,\Btau) = V$ for $n$ large enough, while
$\mathcal{A}_i(\x,\Btau) = \{i\}$ for every $i \ge 2$. For an illustration see \Cref{fig:thm10}.

\begin{figure}[h!]
\centering
\resizebox{0.6\columnwidth}{!}{
\centering

\begin{tikzpicture}

\def\spacing{2}

\node[circle, fill=black, inner sep=1pt, label=below:$1$] (v1) at (0, 0) {};
\node at (0, 0.8) {$x_1 = 0$};
\node at (0, 1.3) {$p_1 = \nicefrac{1}{n}$};
\node at (0, 1.8) {$\tau_1 = \lambda$};

\node[circle, fill=black, inner sep=1pt, label=below:$2$] (v2) at (\spacing, 0) {};
\node at (\spacing, 0.8) {$x_2 = \nicefrac{1}{n^3}$};
\node at (\spacing, 1.3) {$p_2 = \nicefrac{1}{n}$};
\node at (\spacing, 1.8) {$\tau_2 = 0$};

\node[circle, fill=black, inner sep=1pt, label=below:$3$] (v3) at (2*\spacing, 0) {};
\node at (2*\spacing, 0.8) {$x_3 = \nicefrac{2}{n^3}$};
\node at (2*\spacing, 1.3) {$p_3 = \nicefrac{1}{n}$};
\node at (2*\spacing, 1.8) {$\tau_3 = 0$};

\node at (3*\spacing, 0.8) {$\dots$};
\node at (3*\spacing, 1.3) {$\dots$};
\node at (3*\spacing, -0) {$\dots$};
\node at (3*\spacing, 1.8) {$\dots$};

\node[circle, fill=black, inner sep=1pt, label=below:$1$] (v11) at (0, -1) {};

\node[circle, fill=black, inner sep=1pt, label=below:$2$] (v21) at (\spacing, -1) {};

\node[circle, fill=black, inner sep=1pt, label=below:$3$] (v31) at (2*\spacing, -1) {};

\node[circle, fill=black, inner sep=1pt, label=below:$n-1$] (vn2) at (4*\spacing, -1) {};
\node[circle, fill=black, inner sep=1pt, label=below:${n}$] (vn11) at (5*\spacing, -1) {};

\node at (3*\spacing, -1) {$\dots$};

\node[circle, fill=black, inner sep=1pt, label=below:$n-1$] (vn) at (4*\spacing, 0) {};
\node at (4*\spacing, .8) {$x_{n-1} = \nicefrac{(n-2)}{n^3}$};
\node at (4*\spacing, 1.3) {$p_{n-1} = \nicefrac{1}{n}$};
\node at (4*\spacing, 1.8) {$\tau_{n-1} = 0$};

\node[circle, fill=black, inner sep=1pt, label=below:${n}$] (vn1) at (5*\spacing, 0) {};
\node (v5)at (5.18*\spacing, .8) {$x_{n} = \nicefrac{n-1}{n^3}$};
\node (v6) at (5*\spacing, 1.3) {$p_{n} = \nicefrac{1}{n}$};
\node at (5*\spacing, 1.8) {$\tau_{n} = 0$};





\draw[->] (v11) to (v21) ;
\draw[->] (v21) to (v31) ;
\draw[->] (vn2) to (vn11) ;
\draw[->] (vn11) edge [loop below, out=150, in=30,  looseness=25]  (vn11);



\draw[->] (v1) to (v2);
\draw[->] (v2) edge [loop below, out=150, in=30,  looseness=25, yshift=-85cm]  (v2);
\draw[->] (v3) edge [loop below, out=150, in=30,  looseness=25, yshift=-85cm]  (v3);
\draw[->] (vn) edge [loop below, out=150, in=30,  looseness=25, yshift=-85cm]  (vn);
\draw[->] (vn1) edge [loop below, out=150, in=30,  looseness=25, yshift=-85cm]  (vn1);








\node[circle,  inner sep=1pt, 
left of =v1, xshift=.65cm, yshift=-0cm](d){\small $\de:$} ;
\node[circle,  inner sep=1pt, 
below of =d, xshift=0cm, yshift=.05cm](d1){\small $\de':$} ;


\end{tikzpicture}}
\caption{Illustration of the instance used in the proof of \Cref{thm: poa}.
}
\label{fig:thm10}
\end{figure}

The profile $\de$ where voter $1$ delegates to voter $2$ and all others delegate to
themselves is a Nash equilibrium. Indeed, for any voter $i \ge 2$ and any
fixed delegations of the others, the term of $u_i$ contributed by $i$
casting their own ballot equals $p_i \tau_i = 0$ regardless of $i$'s
delegation choice. If $i$ delegates to
some other casting voter, say voter $y$ with $\dist(i,y) > 0$, this is contributing a strictly
negative term with positive probability. Hence self-delegation is a strictly
dominant strategy for every $i \ge 2$, regardless of the delegations of the other
voters. Given that voters $2,\dots,n$ self-loop, voter $1$ prefers
delegating to $2$ rather than to any other voter, since all voters vote
with the same probability and voter $2$ is the nearest to voter $1$ and belongs in their acceptability set.
We now compute the total utility gained in this profile. It holds that
$u_i(\de) = 0$ for each $i \in \{2,\dots,n\}$. Therefore:

\[
  SW(\de) = u_1(\de)
    = \frac{\lambda}{n} + \Big(1-\frac1n\Big)\frac1n\Big(\lambda - \frac1{n^3}\Big)
    \;\le\; \frac{2\lambda}{n} + \frac{1}{n^5}
    \;\xrightarrow{\,n \to \infty\,}\; 0
\]

Next, consider the profile $\de'$ such that voter $i$ delegates to voter $i+1$ for
$i \in \{1,\dots,n-1\}$, and voter $n$ to themselves. Clearly, this is not an
equilibrium, since there exists a voter that has an incentive to deviate and
self-loop. In order to show that PoA is unbounded, it suffices to show that
the total utility obtained from this profile is bounded away from $0$.
We start by computing the utility of voter $1$. Writing $r = 1 - \frac1n$ and
using the sum of geometric series together with
the fact that the distance between voters $1$ and $\ell+1$ for $\ell = 0,1,2\dots,n-1$ is equal to $\ell/n^3$ we have that

\[
  u_1(\de') = \sum_{\ell=0}^{n-1}\Big(\lambda - \frac{\ell}{n^3}\Big)
    \frac1n\, r^\ell
    = \lambda\big(1 - r^n\big) - \frac1{n^4}\sum_{\ell=0}^{n-1}\ell\, r^\ell .
\]
Since $\sum_{\ell=0}^{n-1}\ell\, r^\ell \le \sum_{\ell = 0}^{\infty}\ell\, r^\ell =
\frac{r}{(1-r)^2} < n^2$, the second term is bounded by $1/n^2 \to 0$; and
$r^n = (1-\frac1n)^n \to e^{-1}$. Hence

\[
  u_1(\de') \xrightarrow{\,n \to \infty\,} \lambda\big(1 - e^{-1}\big).
\]
We next turn to the utility of the remaining voters. 
For voter $n$ it holds that $u_n(\de')=0.$
For $i \in \{2,3,\dots,n-1\}$, $\tau_i =
0$ gives $u_i(\de') = -\sum_{y=i+1}^{n} \dist(i,y)\Pr[y],$
where $\Pr[y]$ denotes the probability that $y$ votes on behalf of $i,$ that is, no one between \(i\) and \(y\) votes while \(y\) does.
It also holds that 
$-\sum_{y=i+1}^{n} \dist(i,y)\Pr[y]
\ge -\frac1{n^2}$, since every
voter $y$ reachable from $i$ satisfies $\dist(i,y) < 1/n^2$ and the
probabilities sum to at most $1$. Summing over all such voters, we get

\[
  \Big|\sum_{i=2}^{n-1} u_i(\de')\Big|
    \;\le\; \frac{n-2}{n^2} \;<\; \frac1n
    \;\xrightarrow{\,n \to \infty\,}\; 0 .
\]
As a result, the total utility in this profile satisfies
\[
  SW(\de') = u_1(\de') + \sum_{i=2}^{n-1} u_i(\de') + u_n(\de') 
    \xrightarrow{\,n \to \infty\,} \lambda\big(1 - e^{-1}\big) \;>\; 0.
\]
So, in particular $SW(\de_{\texttt{SW}}) \ge SW(\de') = \Omega(1)$. Combining
this with the bound on $SW(\de)$ concludes the proof.
\hfill \qed
 
 \subsection*{Proof of \cref{thm:poa upper}}

    We begin by noting that the maximum utility a voter $i$ can get is $\tau_i$. 
This is because, by \cref{cor:optimal-utility-per-voter} (see \cref{app:optimal-paths}), we know how each voter's best delegation path looks like and since distances from a voter are decreasing further along the path, the best case for $i$ is that their vote is being cast with probability 1 at a distance of 0.

Now we will argue that the minimum utility a voter $i$ can get in an equilibrium profile is $\tau_ip_i$. Consider a profile $\de$ that is a Nash equilibrium and towards a contradiction, say that for a voter $i$ it holds $u_i(\de)<\tau_ip_i$. But then $d(i)\neq i$. We now compute the utility that voter $i$ will experience under the profile $(\de_{-i},i)$. This is simply equal to $\tau_ip_i$, contradicting the fact that the considered profile $\de$ is a NE.

As a result, the ratio between the maximum achievable utility and the minimum utility in an equilibrium is at most

\begin{equation*}
    \frac{\sum_{i\in \N}\tau_i}{\sum_{i\in \N}\tau_ip_i} \leq  
        \frac{\sum_{i\in \N}\tau_i}{\sum_{i\in \N}\tau_i \cdot \min_{i\in \N}\{p_i\}}
        =        \frac{1}{\min_{i\in \N}\{p_i\}}.
\end{equation*}



   We now focus on measuring the upper bound of $PoA^+$. Following the same arguments as before, we have that the difference between the welfare in the socially optimal solution and the one in the worst in terms of utility Nash equilibrium is at most 
   
   $$\sum_{i\in\N}\tau_i - \sum_{i\in\N}\tau_ip_i 
   \leq
   \sum_{i\in\N}\tau_i - \sum_{i\in\N}\tau_i\min_{i \in \N}\{p_i\} = (1-\min_{i \in \N}\{p_i\}) \sum_{i\in\N}\tau_i,$$
   which proves the statement.
\hfill $\qed$

\newpage
\section{Additional Concepts and Results}


\subsection{Further Insights On the Structure of Nash Equilibria}

\label{abst:negativestructure}


Consider a mutual acceptance instance $\inst$ with no deterministic voters, a profile $\de$ that is a Nash equilibrium of $\inst$ and a weakly connected component $W$ of $G_{\de}$ that consists of more than a single vertex. Then, the following hold:
\begin{itemize}[leftmargin=*,topsep=0.1cm]  \setlength\itemsep{-0.05cm}
    \item If there is a vertex in $\mathcal{C}(W)$ with $\text{in-deg}=2$, it is not necessarily the left-most or right-most vertex of $\mathcal{C}(W)$. \item Paths appearing in $G_{\de}$ are not necessarily following an order of increasing or decreasing position. 
    \item Vertices in $\mathcal{L}(W)$ or $\mathcal{R}(W)$ do not necessarily form a path in $G_{\de}$.
\end{itemize}

The following instance $\inst,$ where $\N=\{1,2,\cdots,8\}$ proves the first two statements, specifically that the entry points of a cycle are not necessarily the left-most and right-most vertices and that a path ending to a
cycle need not to be between consecutive vertices:

 \begin{align*}
    x&=(0.1, 0.15, 0.2,0.4,0.46,0.63,0.66,0.88),\\
p&=(0.91, 0.15, 0.16,0.33,0.09,0.69,0.47,0.12),\\ \tau_4&=0.5, \tau_i=0.88 \text{ for } i\neq 4.
\end{align*} 
The delegation profile $\de=(5,1,2,7,4,5,6,7)$ is a Nash equilibrium for $\inst$ as \cref{table:prop12tab1} shows. More precisely, in \cref{table:prop12tab1} we can see that no voter has a profitable deviation from  $\de$.

\begin{table*}[h]
\centering
\begin{tabular}{cccccccc}
\toprule
0.8008  & 0.8120  &  0.8214   & 0.8353  &   \textbf{0.8360}  & 0.8313   &  0.8291  & 0.8268
 \\ 
\textbf{0.8073} &  0.1320  &  0.2449   & 0.4959  &  0.5023 &   0.4579   & 0.4377 &   0.4163
 \\ 
0.7734  &  \textbf{0.7831}  & 0.1408   & 0.5382   & 0.5446 &   0.5007  &  0.4807   & 0.4601
 \\ 
0.2893  &  0.2958   & 0.3070 &   0.1650 &   0.1915   & 0.2980  &  \textbf{0.3111}  & 0.2952
 \\ 
0.5098  &  0.5230  &  0.5423   & \textbf{0.6786}  & 0.0792  &  0.5250  &  0.6063  &  0.5933
 \\ 
0.7200  &  0.7216 &   0.7257  &  0.7567   & \textbf{0.7630} &  0.6072  &  0.7310  &  0.7396
 \\ 
0.5797 &   0.5842  &  0.5925  &  0.5220 &   0.5447  &  \textbf{0.7651} & 0.4136  &  0.4556
 \\ 
0.2246  &  0.2265   & 0.2353   & 0.5445  &  0.5378   & 0.5838  &  \textbf{0.6047} &  0.1056
\\ 
\bottomrule
\end{tabular}
\caption{Representation of utilities for potential deviations for the instance proving the first two statements from \cref{abst:negativestructure}. Specifically, cell $(i,j)$ corresponds to the utility of voter $i$ after delegating to voter $j$, assuming that the rest of the delegations are given by $\de_{-i}$. The expected utility for each voter under $\de$ (maximal per row) appears in bold.}
\label{table:prop12tab1}
\end{table*}

The graph $G_{\de}$ is depicted in \cref{fig1} and its structure proves the corresponding statements.

\begin{figure}[h]
\centering
\begin{tikzpicture}[scale=10, every node/.style={circle, draw, minimum size=0.4cm, inner sep=0pt}]
    \node (1) at (0.1, 0) {1};
    \node (2) at (0.2, 0) {2};
    \node (3) at (0.3, 0) {3};
    \node (4) at (0.5, 0) {4};
    \node (5) at (0.65, 0) {5};
    \node (6) at (0.8, 0) {6};
    \node (7) at (0.9, 0) {7};
    \node (8) at (1.1, 0) {8};
    
    \draw[->, thick] (1) to[bend left] (5);
    \draw[->, thick] (2) -- (1);
    \draw[->, thick] (3) -- (2);
    \draw[->, thick] (4) to[bend right] (7);
    \draw[->, thick] (5) -- (4);
    \draw[->, thick] (6) -- (5);
    \draw[->, thick] (7) -- (6);
    \draw[->, thick] (8) -- (7);
\end{tikzpicture}
\caption{Delegation graph of the instance proving the first two statements from \cref{abst:negativestructure}.}
\label{fig1}
\end{figure}

 The third statement, namely that the vertices in
$\mathcal{L}(W)$ or $\mathcal{R}(W)$ do not necessarily form a path in $G_{\de},$ where $\de$ is a NE, holds due to the following instance, where $\N=\{1,2,\cdots,7\}$:

 \begin{align*}
    x&=(0.37,0.4,0.54,0.75,0.87,0.89,0.9),\\
p&=(0.05,0.76,0.03,0.75,0.53,0.93,0.73),\\
\tau&=0.84.
\end{align*}
Consider the profile $\de=(2,4,2,5,7,5,6),$ the delegation graph $G_{\de}$ of which is depicted in \cref{fig2}.

\begin{figure}[h]
\centering
\begin{tikzpicture}[scale=10, every node/.style={circle, draw, minimum size=0.4cm, inner sep=0pt}]
    \node (1) at (0.1, 0) {1};
    \node (2) at (0.3, 0) {2};
    \node (3) at (0.4, 0) {3};
    \node (4) at (0.6, 0) {4};
    \node (5) at (0.8, 0) {5};
    \node (6) at (0.9, 0) {6};
    \node (7) at (1.0, 0) {7};
    
    \draw[->, thick] (1) -- (2); 
    \draw[->, thick] (2) to[bend left] (4); 
    \draw[->, thick] (3) -- (2); 
    \draw[->, thick] (4) -- (5); 
    \draw[->, thick] (5) to[bend left] (7); 
    \draw[->, thick] (6) to (5); 
    \draw[->, thick] (7) to (6); 
\end{tikzpicture}
\caption{Delegation graph of the instance proving the third statement from \cref{abst:negativestructure}.}
\label{fig2}
\end{figure}


\noindent \cref{table:prop12tab2} proves that no voter has a profitable deviation from  $\de$. Notice that vertices in the left of the cycle form a tree.

\begin{table*}[h]
\centering
\begin{tabular}{@{}ccccccc@{}}
\toprule
0.0420 & \textbf{0.7240} & 0.7226 & 0.4468 & 0.3501 & 0.3438 & 0.3366 \\
0.6481 & 0.6384 & 0.6434 & \textbf{0.7478} & 0.7234 & 0.7218 & 0.7200 \\
0.6797 & \textbf{0.6799} & 0.0252 & 0.6030 & 0.5032 & 0.4968 & 0.4894 \\
0.7242 & 0.7231 & 0.7250 & 0.6300 & \textbf{0.8052} & 0.8036 & 0.8017 \\
0.6366 & 0.6383 & 0.6397 & 0.6990 & 0.4452 & 0.8036 & \textbf{0.8199} \\
0.8113 & 0.8117 & 0.8118 & 0.8305 & \textbf{0.8316} & 0.7812 & 0.8236 \\
0.7221 & 0.7235 & 0.7240 & 0.7819 & 0.7291 & \textbf{0.8297} & 0.6132 \\
\bottomrule
\end{tabular}
\caption{Representation of utilities for potential deviations for the instance proving the third statement from \cref{abst:negativestructure}. Specifically, cell $(i,j)$ corresponds to the utility of voter $i$ after delegating to voter $j$, assuming that the rest of the delegations are given by $\de_{-i}$. The expected utility for each voter under $\de$ (maximal per row) appears in bold.}
\label{table:prop12tab2}
\end{table*}

\subsection{Further Insights on the Structure of Optimal Delegation Profiles}\label{sec:structureOpDelegation}




Building on the negative result of \cref{SCno-cycle}, we now identify the condition under which a weakly connected component of $G_{\de_{SW}}$ contains a cycle. 
The next natural question is whether a single cycle passing through all vertices is the optimal solution in such cases, as it is for the delegation profile minimizing vote loss (see \cref{sec:votescast},\cref{sec:votescastexperiments} for details on this metric). We address this to the negative.


\begin{theorem}
\label{thm: sc-structure}
    In a mutual acceptance instance $\inst$ without deterministic voters, if $i\in \mathcal{A}_j(\x,\Btau)$ for every pair $i,j$ of voters of a weakly connected component of a socially optimal delegation profile, then, the component has a cycle. However, it is not the case that a cycle including every vertex corresponds to the delegation profile maximizing social welfare.
\end{theorem}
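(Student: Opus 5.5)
The statement has two parts, and I will handle them separately: a positive structural claim (a cycle must exist) and a negative one (the Hamiltonian cycle need not be optimal).

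\emph{Existence of a cycle.} Let $\de_{SW}$ be socially optimal, and let $W$ be a weakly connected component of $G_{\de_{SW}}$ with at least two vertices in which all voters pairwise accept each other. Suppose $W$ has no directed cycle. Then, exactly as in the proof of \cref{obs:NEcycles}, $W$ is an in-tree whose root $i$ self-loops, and some $j\in W$ has $d(j)=i$. I will consider the modified profile $\de'=(\de_{-i},j)$. The comparison rests on one observation: changing $d(i)$ only affects voters whose path $\pi(\de,\cdot)$ reaches $i$, i.e., every voter $v\in W$. For such a $v$ with path $(v=y_0,\dots,y_k=i)$, the new path appends the continuation from $j$ only at vertices not already visited. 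Concretely, it appends $j$ itself unless $j$ already lies on $v$'s path; if $j$ is on the path, nothing changes. So the utility change for $v$ is either $0$ or
\[
\prod_{r=0}^{k}(1-p_{y_r})\cdot p_j\,(\tau_v-\text{dist}(v,j)).
\]
Since $v,j\in W$, pairwise acceptance gives $\tau_v>\text{dist}(v,j)$ (strictly, by general position). The product is positive because no voter is deterministic. Hence every affected voter weakly gains, and $i$ strictly gains, since $j$ is not on $i$'s trivial path $(i)$. This gives $SW(\de')>SW(\de_{SW})$, a contradiction. Uniqueness of the cycle then follows from the out-degree-one counting argument at the end of the proof of \cref{obs:NEcycles}. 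The point I must check most carefully is that appending $j$ never removes or reorders earlier terms. Since the new edge only replaces the terminal self-loop, the prefix of every path is untouched, so this holds.

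\emph{Hamiltonian cycle is not optimal.} Here I will give an explicit small symmetric instance in which all voters pairwise accept each other, yet the welfare-maximizing profile is a cycle together with a tail rather than a cycle through all vertices. The natural candidate has three voters: two close voters $A,B$ with high probability and a far voter $C$ with moderate probability. The comparison is between the 3-cycle $A\to B\to C\to A$ (or its reverse) and the profile with $A\leftrightarrow B$ and $C\to B$ (or $C\to A$). Intuitively, in any 3-cycle one of $A$ or $B$ must route through $C$ before reaching the other, and $C$ is far away. So the tail profile should win when $p_C$ is not tiny and $\text{dist}(A,C)$ is close to $\tau$. I would pick, for example, $\tau=0.5$, $x=(0,0.05,0.45)$, $p=(0.5,0.5,0.5)$. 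I would then tabulate the social welfare of all relevant profiles, as in \cref{tab:delegation-socialwelfare}, and report that the optimum contains the 2-cycle with $C$ attached. The main obstacle is this numerical step: choosing parameters so that the Hamiltonian cycles (both orientations) are strictly beaten. If the three-voter instance fails, I would move to four voters forming two tight pairs.
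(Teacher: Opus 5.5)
\textbf{Verdict: correct.} The first part follows the paper's argument, and the second part uses a different counterexample that proves a stronger claim than the paper's.

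\textbf{Cycle existence.} You argue as the paper does: take the self-looping root $i$ of the acyclic component, redirect it to an in-neighbour $j$, and show that welfare strictly increases. Your bookkeeping is tighter than the paper's. The paper asserts that every voter $k$ in the component gains $\prod_{t\in\mathcal{M}}(1-p_t)\,p_j(\tau_k-\text{dist}(k,j))$. That is false for voters whose path already contains $j$, including $j$ itself; their utility is unchanged. As you note, the strict increase comes from $i$ alone, and all other voters weakly gain.

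\textbf{The negative part.} The paper uses the symmetric instance $x=(0.05,0.06,0.94,0.95)$, $p_i=0.5$, $\tau=0.1$, whose optimum is two separate 2-cycles. This shows only that a single cycle through \emph{all voters of the instance} need not be optimal. It also relies on the two pairs not accepting each other, so the first part's hypothesis fails for the instance as a whole. Your fallback of ``two tight pairs'' is exactly this construction.

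Your three-voter instance is stronger. All voters pairwise accept each other, so the optimum is a single component to which the first part applies, and the example shows that even then the cycle need not span the component.

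The numerical step you flagged as the obstacle does go through. With $p\equiv 0.5$, the second and third voters on a path carry weights $0.25$ and $0.125$.
\begin{itemize}
\item The cycle $A\to B\to C\to A$ gives $0.36875+0.33125+0.275=0.975$.
\item The reverse cycle gives $0.31875+0.375+0.28125=0.975$.
\item The profile $A\leftrightarrow B$, $C\to B$ gives $0.3625+0.3625+0.28125=1.00625$.
\end{itemize}
Enumerating all $27$ profiles shows this last value is the unique maximum; the runner-up is $A\leftrightarrow B$, $C\to A$ with $1.0$. The instance is also in general position, since the pairwise distances $0.05, 0.40, 0.45$ all differ from $\tau=0.5$.

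In comparison, the paper's example is quicker to justify informally but answers only the weaker reading of the claim. Yours shows that the optimal cycle need not span the component even when the first part's hypothesis holds for every voter.
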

\begin{proof}
    Towards a contradiction, assume that for an instance $\inst,$ there is a weakly connected component of at least $2$ vertices that doesn't have a directed cycle, therefore, there is a voter $i$ such that $d(i)=i$ in the utility maximizing solution $\de$. Since $i$ belongs to the component, there is another voter, say $j$, such that $d(j)=i$. Consider also the profile $\de'=(\de_{-i},j)$. The delegation graph $G_{\de'}$ contains a cycle in the component of $i$ and $j$. We will prove that, under $\de'$, the total utility of the voters of the component will be strictly greater than under $\de$, while the utility of the rest of the voters remains the same. 

By the fact that $i$ and $j$ both belong to the component, and hence, $j\in \mathcal{A}_i(\x,\Btau)$ it holds that $u_i(\de')=p_i\tau_i+(1-p_i)p_j(\tau_i-\text{dist}(i,j))>p_i\tau_i=u_i(\de).$ Consider now any other arbitrary voter of the examined weakly connected component of $G_{\de}$, say $k \neq i$. It holds that there is a path from $k$ to $i$ in $G_{\de}$, and say that $\mathcal{M}$ corresponds to the set of vertices in that path ($k$ and $i$ included). The change of $d(i)$ from $i$ to $j$ will increase the utility of $k$ by $\prod_{t\in \mathcal{M}}(1-p_t)p_j(\tau_k-\text{dist}(k,j)),$ because with a probability equal to the probability that none in $\mathcal{M}$ will vote and $j$ will vote, voter $k$ will be represented by $j$. By assumptions, this quantity is strictly positive. Therefore, every voter in the component will have a strictly greater utility under $\de'$ compared to their utility under $\de$.

Finally, it is not hard to observe that in the following symmetric instance, the graph $G_{\de},$ where $\de$ is the social welfare maximizing delegation profile, consists of two weakly connected components: One containing the first two and another containing the last two vertices.

 \begin{align*}
    x &= [0.05, 0.06, 0.94, 0.95],\\
p &= [0.5, 0.5, 0.5, 0.5],\\
\tau &= 0.1.
\end{align*}
This is because, due to each voter's tolerance value, a delegation from one of the first two voters to one of the last two, or vice versa, will not appear in the social welfare maximizing profile.
\end{proof}


 \subsection{Individually Optimal Delegation Profiles}
\label{app:optimal-paths}


For an individual voter $i \in \N$, we define $i$'s optimal delegation profile as the profile that maximizes $i$'s expected utility. This profile will be denoted by $\de^{i*}$. It represents the choice $i$ would make, among all possible profiles, if they could fix all voters' delegations to solely maximize their own expected utility. This concept serves as a key metric for evaluating the quality of a delegation profile in terms of social welfare (e.g., whether or not it constitutes a Nash equilibrium).
Interestingly, the cyclic structure also plays a crucial role here. Specifically, we prove that for any voter $i \in \N$, there is no profile that results in strictly higher utility for $i$ than one where the corresponding delegation graph includes a cycle consisting of all voters in $\mathcal{A}_i(\x,\tau)$, ordered by increasing distance to $i$, and closing back to $i$. The utility $i$ obtains from such a profile is equal to the utility they achieve when the edge returning to $i$ is removed.



The main result in this regard is stated below. The individually optimal delegation profile for a voter is referred to as $i$'s optimal delegation path. In principle, specifying this path is sufficient, as delegations involving voters outside this path do not influence $i$'s utility.
Specifically, we show that the most beneficial strategy for a voter is to sequentially delegate their vote to the closest voter, continuing step by step to the farthest voter within a $\tau$ distance.

\begin{theorem}
\label{cor:optimal-utility-per-voter}
    For any $i \in \N$ and instance $\inst$, we have $\mathbb{E}(i,\de) \leq \mathbb{E}(i,\de^{i*})$, where 
    \begin{itemize}
        \item $\de$ is some delegation profile of $\inst$;
        \item $\de^{i*}$ is the profile for which $G_{\de^{i*}}$ includes a path starting at $i$ that traverses all voters in $\mathcal{A}_i(\x,\Btau)$ in order of their distance to $i$.
    \end{itemize}
    That is, $\de^{i*}$ is an optimal delegation profile for voter $i\in\N$.
\end{theorem}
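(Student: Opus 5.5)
The plan is to reduce the statement to an optimization over sequences of distinct voters, and then use an exchange (bubble-sort) argument. By \cref{eq:utility}, $u_i(\de)$ depends on $\de$ only through the path $\pi(\de,i)=(y_0,\dots,y_k)$, where $y_0=i$. Conversely, every sequence of distinct voters starting at $i$ arises as $\pi(\de,i)$ for some profile $\de$: set $d(y_{\ell-1})=y_\ell$, let $y_k$ self-loop, and fix the remaining delegations arbitrarily. So it suffices to show the following. Among all finite sequences $S=(i,y_1,\dots,y_k)$ of distinct voters, the value
\[
F(S)=\sum_{\ell=0}^{k} a_{y_\ell}\, p_{y_\ell}\prod_{r<\ell}(1-p_{y_r}), \qquad a_j:=\tau_i-\dist(i,j),
\]
is maximized by the sequence listing $\mathcal{A}_i(\x,\Btau)$ in order of increasing distance to $i$. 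Since $\dist(i,i)=0$, voter $i$ has the largest value $a_i=\tau_i$ and comes first automatically. By general position, $a_j\neq 0$ for every $j\neq i$, and $a_j>0$ holds exactly when $j\in\mathcal{A}_i(\x,\Btau)$.

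\textbf{Step 1 (adjacent swap).} Let $S$ contain two consecutive non-initial entries $y,z$ with $a_y<a_z$. Everything before the pair contributes identically under both orders. Everything after the pair is scaled by the same factor $(1-p_y)(1-p_z)$. So the only difference between the order $z,y$ and the order $y,z$ is
\[
\bigl[p_z a_z+(1-p_z)p_y a_y\bigr]-\bigl[p_y a_y+(1-p_y)p_z a_z\bigr]=p_yp_z(a_z-a_y)\ \ge 0,
\]
multiplied by the nonnegative common prefix factor. Hence, by bubble sort, reordering any sequence by decreasing $a$ (that is, increasing distance) never decreases $F$.

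\textbf{Step 2 (trimming and extending).} Take a sequence already sorted by decreasing $a$. Any voters with $a_j<0$ then form a suffix. Deleting the last entry $y_k$ removes a term $a_{y_k}p_{y_k}\prod_{r<k}(1-p_{y_r})$, which is $\le 0$ when $a_{y_k}<0$, so $F$ does not decrease. Repeating this removes all non-acceptable voters. Next, appending a missing acceptable voter $j$ at the end adds a term $a_jp_j\prod(1-p)\ge 0$. Re-sorting afterwards (Step 1) keeps $F$ from decreasing. Repeating this inserts all of $\mathcal{A}_i(\x,\Btau)$.

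\textbf{Conclusion.} Starting from an arbitrary $\de$, apply sort, trim, append and sort, in that order. This transforms $\pi(\de,i)$ into the sorted enumeration of $\mathcal{A}_i(\x,\Btau)$ without decreasing $F$. It remains to note that any profile $\de^{i*}$ whose graph contains this path gives $i$ exactly that value, whatever the last vertex delegates to. Either the walk closes into the path, or it continues beyond $\mathcal{A}_i$; in the latter case, the trimming argument says the continuation adds only nonpositive terms. So the statement should be read as the path ending there, e.g.\ by a self-loop or by closing back into the cycle as in the text; I would make this explicit.

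I expect the main care point to be the bookkeeping in Step 1 rather than any real difficulty. One has to check that the common prefix and suffix factors are indeed identical under the swap, and that voters with $p_j=0$ or $p_j=1$ cause only weak inequalities (ties), not violations.
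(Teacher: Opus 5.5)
Your argument is correct and is essentially the paper's proof. The adjacent-swap computation is \cref{le:order-del-by-dist}, and trimming negative entries from the end is \cref{le:only-del-within-alpha-dist}. Your append-then-re-sort step replaces the direct insertion \cref{le:include-all-in-alpha-range}; this is slightly cleaner, since that lemma silently needs the inserted voter to be acceptable, which your nonnegative appended term makes explicit, and your weak inequalities also cover $p_j\in\{0,1\}$. Your caveat that the path must end at the farthest acceptable voter is the intended reading, matching the paper's description of the path closing back to $i$. Note, though, that your sentence ``exactly that value, whatever the last vertex delegates to'' holds only when the walk does not continue beyond $\mathcal{A}_i(\x,\Btau)$, as you yourself observe immediately afterwards.
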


Before proving the result, we revisit our running example to illustrate its statement.
\begin{example1cont}
    By \Cref{cor:optimal-utility-per-voter}, it holds that no better delegation profile exists for voter $D$, than those including either the delegation cycle  $\{C,E,B,D\}$ or $\{E,C,B,D\}$. In both cases, voter $D$ has an optimal expected utility of $0.11196$.
\end{example1cont}

Now we move to proving \cref{cor:optimal-utility-per-voter}. We highlight that by providing a constructive proof for an optimal solution with respect to a single voter, our result contrasts with strategic choice selection in other areas of social choice theory, where the analogous problem is intractable.

\begin{proof}[Proof of \cref{cor:optimal-utility-per-voter}]

    The proof is a direct consequence of a series of three lemmas that follow.
    The first lemma demonstrates that any delegation path of a voter $v$ along which voters are sorted by ascending distance to $v$ yields a higher expected utility than the same delegation path with the order of voters permuted. For a voter $v$ we denote by $\mathbb{E}(v,\pi)$ the (expected) utility that $v$ gains from a delegation profile the graph of which has $\pi$ as the maximal path from $v$.

\begin{lemma}\label{le:order-del-by-dist}
    Let $\pi = (v_1, \cdots, v_k)$ be a delegation path for a voter $v_1$. Assume there exists an index $i$ such that $v_{i+1}$ is closer to $v_1$ than $v_i$, i.e., $\text{dist}(v_1,v_{i+1}) < \text{dist}(v_1,v_{i})$. Then the expected utility $v_1$ gains from $\pi$ is lower than the expected utility from path $\pi' = (v_1, \cdots, v_{i-1}, v_{i+1}, v_i, v_{i+2}, \cdots, v_k)$ where the delegation path first visits $v_{i+1}$ and then $v_i$.
\end{lemma}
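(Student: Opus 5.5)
The plan is an adjacent-swap (exchange) argument using \Cref{eq:utility}. Write $q_\ell = p_{v_\ell}$ and $a_\ell = \tau_{v_1} - \text{dist}(v_1,v_\ell)$. The expected utility is
\[
\mathbb{E}(v_1,\pi)=\sum_{\ell=1}^{k} a_\ell\, q_\ell \prod_{r<\ell}(1-q_r).
\]
Swapping $v_i$ and $v_{i+1}$ leaves the set of voters on the path unchanged, and we assume $i\ge 2$ since $v_1$ is the path's origin.

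We then compare terms. For $\ell<i$ the summand depends only on $v_1,\dots,v_{\ell}$, so it is unchanged. For $\ell>i+1$ the prefix product contains both factors $(1-q_i)(1-q_{i+1})$, which commute, so it is also unchanged. Hence only the two middle terms differ. Setting $P=\prod_{r<i}(1-q_r)\ge 0$, the difference $\mathbb{E}(v_1,\pi')-\mathbb{E}(v_1,\pi)$ equals
\[
P\Big[a_{i+1}q_{i+1}+a_i q_i(1-q_{i+1}) - a_i q_i - a_{i+1}q_{i+1}(1-q_i)\Big] = P\,q_i q_{i+1}\,(a_{i+1}-a_i).
\]
By hypothesis $\text{dist}(v_1,v_{i+1})<\text{dist}(v_1,v_i)$, so $a_{i+1}-a_i=\text{dist}(v_1,v_i)-\text{dist}(v_1,v_{i+1})>0$, and therefore the difference is nonnegative. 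This is exactly the computation in the proof of \cref{thm:NE-existence}(iii), steps (i)--(iii), now carried out with a common prefix.

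The main obstacle is strictness. The product $P q_i q_{i+1}$ vanishes if some $q_r=1$ for $r<i$, or if $q_i=0$ or $q_{i+1}=0$. In that case the swap gives only weak inequality ("lower" must then be read as "not higher"), strict whenever $P q_i q_{i+1}>0$, e.g., for non-deterministic voters. Weak inequality suffices for the downstream use in \cref{cor:optimal-utility-per-voter}. That corollary follows by repeated adjacent swaps, bubble-sort style, into distance order, which never decreases utility.

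A side concern is that $\pi'$ must be realizable as a maximal path. It is: set $d(v_{i-1})=v_{i+1}$, $d(v_{i+1})=v_i$, $d(v_i)=v_{i+2}$, and keep the final edge of $\pi$ (closing into the path or a self-loop), so $\pi'$ is again the longest simple path from $v_1$ over the same vertex set. Voters off the path do not affect $v_1$'s utility.
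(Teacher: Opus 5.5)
Your proof is correct and follows the paper's argument: both note that only the summands for $v_i$ and $v_{i+1}$ change, factor out the common prefix, and reduce the comparison to $\text{dist}(v_1,v_{i+1})<\text{dist}(v_1,v_i)$. Your strictness caveat is also right, since the paper's chain of equivalences silently divides by $p_{v_i}p_{v_{i+1}}\prod_{m<i}(1-p_{v_m})$; the strict inequality therefore holds only when this factor is positive, and only weakly otherwise.
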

\begin{proof}
    We compare the expected utility of $v_1$ with respect to  $\pi$
    
    \[\mathbb{E}(v_1,\pi) = \sum\limits_{\ell=1}^{k}(\tau_1- \text{dist}(v_1, v_\ell))p_{v_\ell}\prod\limits_{m=1}^{\ell-1}(1-p_{v_m})\]
    
\noindent    to the expected utility of $v_1$ with respect to  $\pi'$
    
     \begin{gather*}
        \mathbb{E}(v_1,\pi') =  \quad [\sum\limits_{\ell=1}^{i-1}(\tau_1- \text{dist}(v_1, v_\ell))p_{v_\ell}\prod\limits_{m=1}^{\ell-1}(1-p_{v_m})] + (\tau_1- \text{dist}(v_1, v_{i+1}))p_{v_{i+1}}\prod\limits_{m=1}^{i-1}(1-p_{v_m}) \\
         + (\tau_1- \text{dist}(v_1, v_{i}))p_{v_{i}}(1-p_{v_{i+1}})\prod\limits_{m=1}^{i-1}(1-p_{v_m})  + \sum\limits_{\ell=i+2}^{k}(\tau_1- \text{dist}(v_1, v_\ell))p_{v_\ell}\prod\limits_{m=1}^{\ell-1}(1-p_{v_m}).
    \end{gather*}
    
    These expected utilities only differ in terms that include the utilities of $i$ and $v_{i+1}$. Further, these terms have common factors. 
    
     \begin{gather*}         \mathbb{E}(v_1,\pi)<\mathbb{E}(v_1,\pi') 
        \Leftrightarrow  \\
        (\tau_1- \text{dist}(v_1, v_{i}))p_{v_{i}}+ (\tau_1- \text{dist}(v_1, v_{i+1}))p_{v_{i+1}}(1-p_{v_{i}}) 
         \\ <(\tau_1- \text{dist}(v_1, v_{i+1}))p_{v_{i+1}}+ (\tau_1- \text{dist}(v_1, v_{i}))p_{v_{i}}(1-p_{v_{i+1}})
        \Leftrightarrow  \\
          \tau_1 p_{v_{i}} - \text{dist}(v_1, v_{i})p_{v_{i}} + \tau_1 p_{v_{i+1}} - \tau_1 p_{v_{i+1}}p_{v_{i}} - \text{dist}(v_1, v_{i+1}) p_{v_{i+1}} + \text{dist}(v_1, v_{i+1}) p_{v_{i+1}} p_{v_{i}} <\\
           \tau_1 p_{v_{i+1}} - \text{dist}(v_1, v_{i+1})p_{v_{i+1}} + \tau_1 p_{v_{i}} - \tau_1 p_{v_{i}}p_{v_{i+1}} - \text{dist}(v_1, v_{i}) p_{v_{i}} + \text{dist}(v_1, v_{i}) p_{v_{i}} p_{v_{i+1}}
        \Leftrightarrow  \\
           \text{dist}(v_1, v_{i+1}) p_{v_{i+1}} p_{v_{i}} <  \text{dist}(v_1, v_{i}) p_{v_{i}} p_{v_{i+1}} 
        \Leftrightarrow          \text{dist}(v_1, v_{i+1}) <  \text{dist}(v_1, v_{i})  \qedhere
     \end{gather*}   
\end{proof}

Note that \Cref{le:order-del-by-dist} holds even for paths that may include voters outside their $\tau$-threshold, i.e., contribute negative utility.
However, as we see next, eliminating such voters from (the end of) a delegation path increases the expected utility.

\begin{lemma}\label{le:only-del-within-alpha-dist}
    Let $\pi = (v_1, \cdots, v_k)$ be a delegation path for a voter $v_1$ and let $\pi' = (v_1, \cdots, v_{k-1})$ be the shortened path by deleting the last voter $v_{k}$. Then the expected utility of $v_1$ with respect to  delegation path $\pi$ is lower than that for path $\pi'$, i.e., $\mathbb{E}(v_1,\pi) < \mathbb{E}(v_1,\pi')$, if and only if $\tau_1- \text{dist}(v_1, v_{k}) < 0$.
\end{lemma}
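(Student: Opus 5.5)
The plan is to write out $\mathbb{E}(v_1,\pi)$ and $\mathbb{E}(v_1,\pi')$ using \Cref{eq:utility} and observe that they coincide in every summand except the one associated with $v_k$. The two paths share the prefix $(v_1,\dots,v_{k-1})$. For every $\ell\le k-1$, the probability that $v_\ell$ is the ultimate delegate of $v_1$ is
\[
p_{v_\ell}\prod_{m=1}^{\ell-1}(1-p_{v_m})
\]
in both cases, because by \Cref{eq:prob} this probability depends only on the voters preceding $v_\ell$ on the path. The only difference is that $\pi$ contains the extra term
\[
(\tau_1-\text{dist}(v_1,v_k))\,p_{v_k}\prod_{m=1}^{k-1}(1-p_{v_m}),
\]
whereas under $\pi'$ this event (nobody among $v_1,\dots,v_{k-1}$ votes) yields a lost ballot and hence utility $0$. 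Therefore
\[
\mathbb{E}(v_1,\pi)-\mathbb{E}(v_1,\pi') = (\tau_1-\text{dist}(v_1,v_k))\,p_{v_k}\prod_{m=1}^{k-1}(1-p_{v_m}).
\]

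The equivalence then follows from the sign of this single product. If $\tau_1-\text{dist}(v_1,v_k)<0$ and the probability factor is positive, the difference is negative, giving $\mathbb{E}(v_1,\pi)<\mathbb{E}(v_1,\pi')$. Conversely, if the difference is negative, then the probability factor is nonnegative and nonzero, so $\tau_1-\text{dist}(v_1,v_k)<0$.

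The main obstacle is the degenerate case in which the factor $p_{v_k}\prod_{m<k}(1-p_{v_m})$ vanishes, i.e.\ some $p_{v_m}=1$ for $m<k$ or $p_{v_k}=0$. There the difference is $0$, and the ``if'' direction holds only weakly. I would handle this by stating the lemma under the standing assumption that this probability is positive, as holds for non-deterministic voters. Alternatively, I would note that the inequality becomes non-strict, which is all that is needed downstream in the proof of \cref{cor:optimal-utility-per-voter}: the statement there asserts only $\le$.

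A secondary point to check is that truncating $\pi$ is realizable as a delegation profile. This is achieved by setting $d(v_{k-1})$ to a voter already on the path, e.g.\ $v_1$. Then $\pi'$ is the maximal simple path from $v_1$, and by \Cref{eq:prob} closing a cycle contributes no additional ultimate delegates.
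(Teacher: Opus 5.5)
Your proposal is correct and takes the same approach as the paper: both reduce the comparison to the sign of the single extra term $(\tau_1-\text{dist}(v_1,v_k))\,p_{v_k}\prod_{m=1}^{k-1}(1-p_{v_m})$ contributed by $v_k$. Your caveat about the degenerate case is also right: the paper's final equivalence silently assumes this probability factor is positive, so the strict ``if'' direction fails when, e.g., $p_{v_1}=1$ or $p_{v_k}=0$, and only the non-strict version is needed for \cref{cor:optimal-utility-per-voter}.
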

\begin{proof}
    We compare the expected utility of $v_1$ with respect to  $\pi$ and $\pi'$: 
    
     \begin{align*}
         \mathbb{E}(v_1,\pi)  &<\mathbb{E}(v_1,\pi')  \Leftrightarrow \\
         \sum\limits_{\ell=1}^{k}(\tau_1- \text{dist}(v_1, v_\ell))p_{v_\ell}\prod\limits_{m=1}^{\ell-1}(1-p_{v_m}) 
         &< \sum\limits_{\ell=1}^{k-1}(\tau_1- \text{dist}(v_1, v_\ell))p_{v_\ell}\prod\limits_{m=1}^{\ell-1}(1-p_{v_m}) \Leftrightarrow \\
           (\tau_1- \text{dist}(v_1, v_{k}))p_{v_{k}}\prod\limits_{m=1}^{k-1}(1-p_{v_m})  &<  0 \Leftrightarrow
           \tau_1- \text{dist}(v_1, v_{k})  <  0,
    \end{align*}
    hence the result follows.
\end{proof}

Thus, to increase the expected utility, we can first sort voters along a delegation path in order of their distance and then repeatedly shorten the path to only contain voters with a distance smaller or equal $\tau$. 
We now show that inserting a voter into a path increases the expected utility if the subsequent voters have a larger distance.

\begin{lemma}\label{le:include-all-in-alpha-range}
    Let $\pi = (v_1, \cdots, v_k)$ be a delegation path for a voter $v_1$. Assume there exists an index $i$ such that $\text{dist}(v_1,v_{i}) < \text{dist}(v_1,v_{\ell})$ for all $\ell \in[i+1,k]$.
    Let $\pi' = (v_1, \cdots, v_{i-1}, v_{i+1}, \cdots, v_k)$ where $i$ is deleted. Then the expected utility of $v_1$ with respect to  delegation path $\pi$ is higher than that for path $\pi'$, i.e., $\mathbb{E}(v_1,\pi) > \mathbb{E}(v_1,\pi')$. 
\end{lemma}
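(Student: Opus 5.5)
The plan is a direct term-by-term comparison of $\mathbb{E}(v_1,\pi)$ and $\mathbb{E}(v_1,\pi')$, in the same style as the proofs of \Cref{le:order-del-by-dist} and \Cref{le:only-del-within-alpha-dist}. I write $d_\ell=\text{dist}(v_1,v_\ell)$ and $P=\prod_{m=1}^{i-1}(1-p_{v_m})$.

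First, the summands for $\ell\le i-1$ are identical in both expressions, so they cancel. Next, I split what is left into the contribution of $v_i$ and the tail. Let
\[
T=\sum_{\ell=i+1}^{k}(\tau_1-d_\ell)\,p_{v_\ell}\prod_{m=i+1}^{\ell-1}(1-p_{v_m}).
\]
In $\pi'$ the remaining part equals $P\cdot T$. In $\pi$ it equals $P\big[(\tau_1-d_i)p_{v_i}+(1-p_{v_i})T\big]$, because every later term carries the extra factor $(1-p_{v_i})$. Subtracting gives
\[
\mathbb{E}(v_1,\pi)-\mathbb{E}(v_1,\pi')=P\,p_{v_i}\big[(\tau_1-d_i)-T\big].
\]
So the claim reduces to showing $T<\tau_1-d_i$.

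To bound $T$, define the weights $q_\ell=p_{v_\ell}\prod_{m=i+1}^{\ell-1}(1-p_{v_m})$. They are nonnegative and satisfy $\sum_{\ell>i}q_\ell=1-\prod_{m=i+1}^{k}(1-p_{v_m})\le 1$. By hypothesis $d_\ell>d_i$ for every $\ell>i$, hence $T<(\tau_1-d_i)\sum_\ell q_\ell$, with strict inequality as soon as some $q_\ell>0$. This is the same chain of inequalities as in the proof of \Cref{obs:entrypoints}. Then $(\tau_1-d_i)\sum_\ell q_\ell\le\tau_1-d_i$ follows from $\sum_\ell q_\ell\le 1$ provided $\tau_1-d_i\ge 0$, i.e.\ $v_i\in\mathcal{A}_{v_1}(\x,\Btau)$.

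The main obstacle is the non-generic cases, and I will state them explicitly as side conditions. The sign condition $\tau_1-d_i\ge 0$ is where the argument needs care: without it, and with $\sum q_\ell<1$, the inequality can fail. Since the lemma is applied only to voters inside the acceptability set, and general position makes this strict, I will add this assumption; it matches the intended use in \Cref{cor:optimal-utility-per-voter}. For strictness I also need $P\,p_{v_i}>0$. If $T$'s weights all vanish, then $T=0<\tau_1-d_i$ still holds, by general position. Degenerate probabilities ($p_{v_i}=0$ or some earlier $p_{v_m}=1$) only give equality, so these cases are either excluded or weakened to ``$\ge$'', which suffices for the optimality argument.
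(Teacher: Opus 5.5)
Your proposal is correct and follows essentially the paper's own argument: cancel the first $i-1$ summands, factor out $P\,p_{v_i}$ to reduce the claim to $T<\tau_1-\text{dist}(v_1,v_i)$, and bound $T$ using that the tail weights sum to at most $1$ and that $\text{dist}(v_1,v_\ell)>\text{dist}(v_1,v_i)$ for $\ell>i$. Your side conditions are what the paper's proof uses tacitly in its divisions and its final inequality. These are $P\,p_{v_i}>0$ for strictness and $v_i\in\mathcal{A}_{v_1}(\x,\Btau)$, and the latter is genuinely needed: for $i=k$ with $\text{dist}(v_1,v_k)>\tau_1$ and nondegenerate probabilities, the stated inequality contradicts \Cref{le:only-del-within-alpha-dist}.
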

\begin{proof}
    We compare the expected utility of $v_1$ with respect to  $\pi$
    
    \[\mathbb{E}(v_1,\pi) = \sum\limits_{\ell=1}^{k}(\tau_1- \text{dist}(v_1, v_\ell))p_{v_\ell}\prod\limits_{m=1}^{\ell-1}(1-p_{v_m})\]
    
\noindent    to the expected utility of $v_1$ with respect to  $\pi'$
    
     \begin{gather*}
        \mathbb{E}(v_1,\pi') =  
\sum\limits_{\ell=1}^{i-1}(\tau_1- \text{dist}(v_1, v_\ell))p_{v_\ell}\prod\limits_{m=1}^{\ell-1}(1-p_{v_m})  + \sum\limits_{\ell=i+1}^{k}(\tau_1- \text{dist}(v_1, v_\ell))p_{v_\ell}\prod\limits_{m=1}^{i-1}(1-p_{v_m})\prod\limits_{m=i+1}^{\ell-1}(1-p_{v_m}).
    \end{gather*}
    These expected utilities only differ in the term that includes utilities of $i$ and some terms that have an additional factor $(1-p_{i})$. Thus, we have: 
    
     \begin{align*}
         \mathbb{E}(v_1,\pi) & >\mathbb{E}(v_1,\pi') \Leftrightarrow \\
          \sum\limits_{\ell=i}^{k}(\tau_1{-} \text{dist}(v_1, v_\ell))p_{v_\ell}\prod\limits_{m=1}^{\ell-1}(1{-}p_{v_m}) 
        & > \sum\limits_{\ell=i+1}^{k}(\tau_1- \text{dist}(v_1, v_\ell))p_{v_\ell}\prod\limits_{m=1}^{i-1}(1-p_{v_m})\prod\limits_{m=i+1}^{\ell-1}(1-p_{v_m}) \Leftrightarrow \\
        \sum\limits_{\ell=i}^{k}(\tau_1{-} \text{dist}(v_1, v_\ell))p_{v_\ell}\prod\limits_{m=i}^{\ell-1}(1{-}p_{v_m}) 
        & > \sum\limits_{\ell=i+1}^{k}(\tau_1- \text{dist}(v_1, v_\ell))p_{v_\ell}\prod\limits_{m=i+1}^{\ell-1}(1-p_{v_m}) \Leftrightarrow\\
          (\tau_1- \text{dist}(v_1, i))p_{i} + (1-p_{i})
        & > \sum\limits_{\ell=i+1}^{k}(\tau_1- \text{dist}(v_1, v_\ell))p_{v_\ell}\prod\limits_{m=i+1}^{\ell-1}(1-p_{v_m}) \Leftrightarrow\\
         (\tau_1- \text{dist}(v_1, i))p_{i}
        & > (1-(1-p_{i}))\sum\limits_{\ell=i+1}^{k}(\tau_1{- }\text{dist}(v_1, v_\ell))p_{v_\ell}\prod\limits_{m=i+1}^{\ell-1}(1{-}p_{v_m}) \Leftrightarrow\\
      \tau_1- \text{dist}(v_1, i)
        & > \sum\limits_{\ell=i+1}^{k}(\tau_1- \text{dist}(v_1, v_\ell))p_{v_\ell}\prod\limits_{m=i+1}^{\ell-1}(1-p_{v_m})
    \end{align*}
    To see that the last inequality holds, we first show the following:
    
     \begin{align*}
        & \sum\limits_{\ell=i+1}^{k}p_{v_\ell}\prod\limits_{m=i+1}^{\ell-1}(1-p_{v_m})= p_{v_{i+1}} + (1-p_{v_{i+1}})\sum\limits_{\ell=i+2}^{k}p_{v_\ell}\prod\limits_{m=i+2}^{\ell-1}(1-p_{v_m})=\\
        & p_{v_{i+1}} + (1-p_{v_{i+1}})(p_{v_{i+2}} + (1-p_{v_{i+2}})\sum\limits_{\ell=i+3}^{k}p_{v_\ell}\prod\limits_{m=i+3}^{\ell-1}(1-p_{v_m})) =       \\
        &\cdots\\
        & p_{v_{i+1}} + (1-p_{v_{i+1}})(p_{v_{i+2}} + (1-p_{v_{i+2}})(\cdots (p_{v_{k-1}} + (1-p_{v_{k-1}})p_{v_k}))) \leq        \\
        & p_{v_{i+1}} + (1-p_{v_{i+1}})(p_{v_{i+2}} + (1-p_{v_{i+2}})(\cdots (p_{v_{k-1}} + (1-p_{v_{k-1}})\cdot 1)))        \\
     \end{align*}

\noindent         The previous bound was obtained by bounding the last factor: $p_{v_k}\leq 1$. This leaves this multiplicative term to be: $p_{v_{k-1}} + (1-p_{v_{k-1}})=1$. Observe that this step will now repeat throughout the terms for decreased subscripts. We see the final steps of this recursive process in the following lines. 
         
     \begin{align*}
        \cdots
        = p_{v_{i+1}} + (1-p_{v_{i+1}})(p_{v_{i+2}} + (1-p_{v_{i+2}})\cdot 1) =                p_{v_{i+1}} + (1-p_{v_{i+1}})\cdot 1    =1    
    \end{align*}
    
\noindent    Together with our assumption that $\text{dist}(v_1,v_{i}) < \text{dist}(v_1,v_{\ell})$ for all $i<\ell$, this yields the desired inequality:
    
     \begin{align*}
         \sum\limits_{\ell=i+1}^{k}(\tau_1- \text{dist}(v_1, v_\ell))p_{v_\ell}\prod\limits_{m=i+1}^{\ell-1}(1-p_{v_m}) 
        < &  \sum\limits_{\ell=i+1}^{k}(\tau_1- \text{dist}(v_1, i))p_{v_\ell}\prod\limits_{m=i+1}^{\ell-1}(1-p_{v_m}) =\\
          (\tau_1- \text{dist}(v_1, i))\sum\limits_{\ell=i+1}^{k}p_{v_\ell}\prod\limits_{m=i+1}^{\ell-1}(1-p_{v_m}) 
        \leq &  \tau_1- \text{dist}(v_1, i).\qedhere
    \end{align*}
\end{proof}

Combining the three lemmas above, one can show that for an individual voter $v$, the maximal expected utility can be achieved by a delegation path that traverses all voters within $\tau_v$-distance and in order by their distance to $v$.
\end{proof}

From \cref{cor:optimal-utility-per-voter} we can directly get the following result, providing an upper bound for the optimal social welfare.
\begin{corollary}\label{lem:best-path-higher-EU}
    For any instance $\inst=\langle\x, \p, \Btau\rangle$, if $\de_{SW}= \argmax_{\de \in \N \times \cdots \times \N} SW(\de)$ is an optimal profile and $\de^{i*}$ is the delegation profile with voter $i$'s highest expected utility, then $SW(\de_{SW}) \leq \sum_{i \in \N} u_i(\de^{i*})$.
\end{corollary}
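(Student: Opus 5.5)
The corollary should follow from \cref{cor:optimal-utility-per-voter} by a termwise comparison. I would bound each summand of the social welfare separately and then add up.

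First, fix an instance $\inst$ and let $\de_{SW}$ be any welfare-maximizing profile. Since the set of profiles $\N\times\cdots\times\N$ is finite, such a maximizer exists. By definition,
\[
SW(\de_{SW}) = \sum_{i\in\N} u_i(\de_{SW}).
\]
Second, for each voter $i\in\N$, I apply \cref{cor:optimal-utility-per-voter} with $\de=\de_{SW}$. This gives $u_i(\de_{SW}) = \mathbb{E}(i,\de_{SW}) \le \mathbb{E}(i,\de^{i*}) = u_i(\de^{i*})$. Here I identify the notation $\mathbb{E}(i,\cdot)$ with $u_i(\cdot)$. Both denote $i$'s expected utility in \Cref{eq:utility}, which depends only on the maximal path $\pi(\de,i)$.

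Third, I sum these $|\N|$ inequalities to get $SW(\de_{SW}) \le \sum_{i\in\N} u_i(\de^{i*})$. The point to stress is that the right-hand side uses a different profile $\de^{i*}$ for each voter. So it need not equal the welfare of any single feasible profile, and the argument never requires the $\de^{i*}$ to be compatible with one another. This is exactly why the sum is only an upper bound and matches how $\ODP(\inst)$ is used in the experiments.

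There is no real obstacle here. The only care needed is to check that \cref{cor:optimal-utility-per-voter} holds for an arbitrary profile $\de$, including profiles where $i$'s path leaves $\mathcal{A}_i(\x,\Btau)$ or ends in a cycle or self-loop. The lemmas behind that theorem cover these cases: reordering by distance, truncating voters at distance greater than $\tau_i$, and inserting missing acceptable voters. The tie-free general-position assumption ensures the ordering in $\de^{i*}$ is well defined.
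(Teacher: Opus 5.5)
Your proof is correct and is the paper's argument: the corollary is read off directly from \cref{cor:optimal-utility-per-voter} (in fact from the definition of $\de^{i*}$ as $i$'s utility maximizer alone) by bounding $u_i(\de_{SW}) \le u_i(\de^{i*})$ for each voter and summing. One small inaccuracy in your closing remark: general position does not rule out two voters at equal distance from $i$ on opposite sides (e.g., $C$ and $E$ relative to $D$ in the running example), so the distance ordering need not be unique, but such ties leave $u_i(\de^{i*})$ unchanged and the argument does not depend on uniqueness.
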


The result of \Cref{lem:best-path-higher-EU} provides an upper bound on the highest possible social welfare for any delegation profile in an instance. 
Let $\ODP(\inst)$ denote the sum of expected utilities across the \emph{optimal delegation profiles} of all voters in instance $\inst$, i.e., $\ODP(\inst) = \sum_{i \in \N} u_i(\bf{d^{i*}})$. This approximation is employed in experiments involving socially optimal profiles, where finding the exact solution is computationally expensive (see \Cref{sec:PoA-exp}).
One natural question is how far away is $\ODP(\inst)$ from $SW(\de_{SW})$. We conducted some simulations on small random instances where the delegation profile with the highest social welfare can be effectively found by brute force. Our preliminary experiments showed that $\ODP(\inst)$ is just a few percentiles larger than its corresponding  $SW(\de_{SW})$. Details can be found in \Cref{app:ApproximatingSWex}.

\subsection{Expected Number of Votes Cast}\label{sec:votescast}

The concept of lost votes is crucial in liquid democracy research, as it has been a key motivation behind the scheme since its inception. In our work, among others, we are evaluating delegation profiles, both theoretically and experimentally, based on their potential to mitigate the loss of voting power over all elections held in the system. 
By \emph{votes lost}, we refer to the expected number of votes not cast due to voters abstaining from voting in a given election, which in turn results in the loss of the voting power of those who delegated to them as well.
For simplicity, consider a profile $\de$ where each weakly connected component of $G_{\de}$ contains a cycle. Fix such a delegation cycle $\mathcal{C} = (v_1, \cdots, v_k)$. Since the out-degree of every vertex in $G_{\de}$ is 1, $\mathcal{C}$ has no outgoing edges. 
Consequently, no votes from voters corresponding to vertices in the component of $\mathcal{C}$ are lost as long as at least one voter in $\mathcal{C}$ casts a ballot. Equivalently, votes within the connected component can only be lost if all voters in $\mathcal{C}$ abstain. The expected number of votes lost thus depends on the probability that the delegation cycle remains unbroken and the expected number of incoming votes. A similar analysis applies to components that terminate in self-loops. It holds that

 \begin{align*}
\mathds{E}[\# \text{votes lost}] = \Pi_{i = 1, \cdots, k} (1-p_{i}) \cdot  
 \left(k{+}\sum_{i=1,\cdots,k}\mathds{E}[\# \text{votes (from outside C) delegated to } v_i]\right).
\end{align*}
Note that here, the term $\mathds{E}[\# \text{ votes delegated to } i]$ is a computation over a tree in $G_{\de}$. 
That is, we can compute the expected number of votes delegated to some voter recursively via the expected number of votes delegated to their direct predecessors. 

 \begin{align*}
\mathds{E}[\# \text{ votes delegated to } v] = 
\sum_{w \in V: d(w)=v} (1-p_w) \cdot 
 \left(1 + \mathds{E}[\# \text{ votes delegated to } w]\right)
\end{align*}
This is computationally feasible via a recursion starting from voters with no in-delegations.
Observe that, finally, the expected number of votes cast is $n-\mathds{E}[\# \text{ votes lost}]$. 
Given the previous definitions, the following structural result regarding the minimization of lost voting power is straightforward.

\begin{proposition}
The delegation graph of the profile that minimizes the number of lost votes consists of a single cycle passing through every vertex in arbitrary order.
\end{proposition}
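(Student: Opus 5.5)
The plan is to show that a single Hamiltonian cycle loses the minimal possible expected number of votes, namely $n\prod_{i\in\N}(1-p_i)$. We then argue that every other profile loses at least this much.

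\emph{Lower bound.} Fix an arbitrary profile $\de$ and a voter $v$. The vote of $v$ is lost exactly when every voter on $\pi(\de,v)$ abstains. Since $\pi(\de,v)$ is a subset of $\N$, this event contains the event that all voters abstain. Hence
\[
\Pr[v\text{'s vote is lost}] = \prod_{y\in\pi(\de,v)}(1-p_y) \ge \prod_{i\in\N}(1-p_i).
\]
Summing over $v$ by linearity of expectation gives $\mathds{E}[\#\text{votes lost}]\ge n\prod_{i\in\N}(1-p_i)$ for every $\de$. This step is more direct than the component-wise recursion stated just before the proposition, and it agrees with that formula.

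\emph{Attainment.} Take any cyclic order $v_1\to v_2\to\cdots\to v_n\to v_1$ of all voters. Every longest simple path $\pi(\de,v)$ then visits all $n$ voters, so each vote is lost with probability exactly $\prod_{i}(1-p_i)$. The total is $n\prod_i(1-p_i)$, which matches the lower bound, and the order of the voters on the cycle plays no role.

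\emph{Main obstacle.} The delicate part is reading ``consists of a single cycle'' correctly, since it need not be the unique minimizer. When some $p_i=1$, or more generally when the product $\prod_i(1-p_i)$ is already attained by shorter paths, other profiles tie. An example is a voter $v$ with $p_v=1$: if everyone's path reaches $v$, nothing is lost.

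I would state the proposition as ``a single cycle through all vertices is a minimizer''. I would add that it is the unique minimizer (up to order) when all $p_i\in(0,1)$ and $n\ge2$. In that case any $\de$ with some $\pi(\de,v)\neq\N$ gives $v$ a strictly larger loss probability, because of a missing factor $(1-p_j)<1$. So every path must cover all of $\N$. With out-degree $1$, this forces $G_{\de}$ to be one spanning cycle: the last vertex of $\pi(\de,v_1)$ must point back to $v_1$, since otherwise the path from $v_1$'s successor would miss $v_1$.
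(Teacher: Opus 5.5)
Your proof is correct, and it is more explicit than the paper. The paper gives no separate argument: it only sets up the component-wise formula (the probability that every voter on the cycle abstains, times $k$ plus the recursively computed expected number of tree votes reaching the cycle) and then calls the proposition ``straightforward''. That formula is convenient for evaluating a fixed profile, but on its own it does not compare profiles with different component structures. Your per-voter route bounds each loss probability $\prod_{y\in\pi(\de,v)}(1-p_y)$ from below by $\prod_{i\in\N}(1-p_i)$ and sums via linearity of expectation. This gives the global lower bound $n\prod_{i\in\N}(1-p_i)$ immediately and sidesteps the recursion, so it is the natural way to make the paper's claim rigorous. Your caveat about the wording is also justified: ``the profile that minimizes'' is only accurate as ``a minimizer'' in general. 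With some $p_v=1$, a star into a self-looping $v$ also loses nothing; with some $p_j=0$, other voters' paths may skip $j$ without changing their loss. Your uniqueness argument for $p_i\in(0,1)$ and $n\ge 2$ is sound: a missing factor $1-p_j<1$ strictly increases the loss when all other factors are positive, and the in-degree of $v_1$ then forces the edge that closes the spanning cycle.
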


\newpage

\section{Experimental Analyses of Our Model}\label{app:experi}

In this section of the Appendix, we present the various experimental analyses conducted to support the theoretical results of our model. Note that there are files in the supplementary material corresponding to each of the following sections

 \subsection{Best-Response Dynamics}\label{app:BRD}

With a definition of Nash equilibria in our model, we now need a procedure to find them when they exist. We naturally consider a best-response dynamic, which works by repeatedly checking if any voter has a better delegation choice based on the current profile of delegations. A detailed pseudocode of the best-response dynamic (BR) is being presented as \cref{Alg:BRlongterm}. The process starts with an arbitrary profile of delegations, and then updates the delegations of voters one by one. Each voter updates their delegation to their best response when one exists, with ties broken arbitrarily. We will refer to each instance where the protocol checks if a voter has a best response as a \emph{round}. The algorithm will stop only after $n$ rounds without any voter finding a best response. This ensures that the resulting delegations form a Nash equilibrium, as no voter can improve their expected utility by unilaterally changing their delegation.

\begin{algorithm}[h!]
\caption{BR Protocol}\label{Alg:BRlongterm}
\begin{algorithmic}[1]
 \State Input:  $\p$, $\x$, $\Btau$ and $n$
  \State\emph{Initialize a random delegation vector $\de$}
  \State $it=0$ 
  \While{$it \leq n$}
    \State{$\de^1= \de$} 
      \For{each $i\in \N$}\label{alg:BRFor}
      \State $it=it+1$ 
        \For{each $j\in \N$}
            \State $\de'=(\de_{-i}^1,j)$
            \State $exp(j)= u_i(\de')$
        \EndFor
        \If {$\max_{j\in \N} (exp(j)) > u_i(\de^1)$}
        \State $\de(i)= \arg\max_{j\in \N}(exp(j))$
        \State $it=0$\label{line:it0} 
        \ElsIf{$it \geq n$} 
        \State  \Return{$\de$}
        \EndIf
     \EndFor
 \EndWhile
\end{algorithmic}
\end{algorithm}

\paragraph{Experimental Analysis of Symmetric Instances.}We ran the BR protocol on 20,000 instances with symmetric $\Btau$ such that the parameters were chosen uniformly at random as follows:
\begin{itemize}
    \item $n$ chosen randomly from $\{1,\cdots, 100\}$ 
    \item $\tau\in [0,\frac{2}{3}]$ rounded to two decimals.
    \item $\x\in[0,1]^n$ rounded to two decimals.
    \item $\p\in[0,1]^n$ rounded to two decimals.
\end{itemize}
We then ran the BR protocol starting from a random initial profile of delegations. For each of these instances, the BR protocol found a NE. 

\begin{example}\label{ex:BRprotocol}
Consider six voters $\N=\{A,B,C,D,E,F\}$ whose opinions can be placed on a line such that $\x=(0.2,0.25, 0.4, 0.4, 0.6, 0.8)$, probabilities $\p=(0.5, 0.5, 0.9, 0.3, 0.5, 0.3)$, and tolerance $\Btau=(0.25, 0.25, 0.25, 0.25, 0.25, 0.25)$. 

We will apply the best response protocol to our running example, first presented in \Cref{Longterm:ex:start}. We assume the randomly initiated delegations are $\de=(A,D,E,B,F,E)$. Following the BR protocol, we start with voter $A$ and check if there is a BR. $A$'s BR   is delegating to voter $B$. We then check for voter $B$, who has a BR to delegate to $A$. We then inspect $C$ and see that their BR is to delegate to $D$. This continues until we arrive at a NE $\de''=(B,A,D,C,C,F)$ after $6$ iterations. In these first $6$~steps, the counter $it$ is set to $0$ each time a voter updates their delegations with a BR. The protocol will continue for another $6$~iterations with no more updates (as we have reached a NE). The protocol terminates when $it=6$. Moreover, we note that for our example, there are only two NE, $\de'$ and $\de''$. 
\end{example}


\subsection{Creating General Instances of our Model}\label{app:setup}


\paragraph{Varying the number of voters.}
Our first set of instances varies the number of voters $n \in \{20, 50, 100, 200\}$. 
For each value of $n$, we created {$100$} instances, randomly selecting $\x\in [0,1]^n$ to three decimal places, $\p\in [0,1]^n$ rounded to two decimal places,  and $\Btau\in[0,1]^n$ rounded to one decimal place. 
Then for each instance $\inst=(\x,\p,\Btau)$, we find a delegation profile $\dne$ which is a Nash equilibrium that is found via the best response protocol (given as \cref{Alg:BRlongterm}, in \Cref{app:BRD}). 

\paragraph{Varying the size of $\Btau$.}
The second set of instances evaluates the impact of the size of tolerance vectors on the various measures examined in the remainder of the experiments.   
We take the $100$ previous instances when $n=50$ and then create $5$ tolerance vectors $\Btau\in [0,1]^n$; we will denote these vectors as $\Btau_{\leq 1}$ (no restriction on the number of decimal places). 
We then modify each of the $5$ vectors $\Btau_{\leq 1}$ by scaling each $\Btau_{\leq 1}(i)$ by $0.75$ and $0.5$, i.e., $\Btau_{\leq 0.75}(i)=0.75\times\Btau_{\leq 1}(i)$ and $\Btau_{\leq 0.5}(i)=0.5\times\Btau_{\leq 1}(i)$. 
Resulting in $5$ vectors $\Btau_{\leq 0.75}$ and $5$ vectors $\Btau_{\leq 0.5}$ for each of the $100$ instances when $n=50$. 
Therefore, $500$ instances of $\inst=(\x, \p, \Btau_{\leq 1})$, $\inst=(\x, \p, \Btau_{\leq 0.75})$, and $\inst=(\x, \p, \Btau_{\leq 0.5})$. 
For each of these instances, we again employ the best response protocol to determine a delegation profile that constitutes a Nash equilibrium. 

\paragraph{Delegation profiles for different voting models.}
When analyzing the expected number of votes lost (\Cref{sec:votescastexperiments}) and the proportion of SW achieved (\Cref{sec:quality}), we wanted to compare an arbitrary NE $\dne$ with delegation profiles that reflect different voting models. 
The first is acyclic liquid democracy \texttt{acyc}, which replicates the models of liquid democracy that do not permit cycles. We create a delegation profile $\de_{\texttt{acyc}}$ for each of the instances mentioned previously in the section that modifies the corresponding $\dne$ by breaking each cycle at a random point and replacing the delegation within the cycle with a self-loop. 
The second delegation profile models the direct democracy voting model without delegations. We model this with the delegation profile $\de_{\texttt{dir}}$ where every voter delegates to themselves.

\paragraph{Technical Specifications}
The simulations were run on a PC with $32$ GB of RAM, an Intel(R) Core(TM) i7-14700 processor, and a $128$ MB Intel(R) UHD Graphics 770 card.

We used the following library in our simulations (their versions are included in brackets): numpy (1.26.4), tqdm (4.66.4), matplotlib (3.8.4), pandas (2.2.2), networkx (3.2.1), sys (3.12.4, packaged by Anaconda).

\subsection{Experimental Analysis of the Structure of NE}
\label{sec:exp-structure}


\begin{table}[t]
    \centering
    \begin{tabular}{lcccc}
    \toprule
    Number of voters  & 20 & 50   &  100 &200   \\
    \midrule
      Number of SCCs 
    &  3.83 & 6.4& 11.5& 20.93\\
      Number of cycles 
      &2.5& 3.97& 6.47& 10.97\\
      Number of self loops
      &1.33& 2.43& 5.03& 9.97\\
      Number of WCCs of size $1$   
      &1.2& 2.17& 4.4& 9.7\\
      Number of paths into SCC&
      3.03& 6.5& 11.4& 19.37\\
     Average width of cycle & 
     0.09& 0.04& 0.03& 0.02\\
     Average width of WCC &0.29& 0.19& 0.12& 0.08\\
     Average size of cycle & 3.89& 4.68& 5.14& 5.73\\
     \bottomrule
    \end{tabular}
    \caption{ Reporting various average measurements of the structure of its Nash equilibria taken over the {$100$} instances of each size (i.e., $\N\in \{20, 50, 100, 200\}$). All values are rounded to 2 decimal places. SCC denotes strongly connected components (i.e., cycles and self-loops), and WCC denotes weakly connected components (i.e., an SCC and the paths entering the SCC).  }
    \label{tab:standardStructure}
\end{table}

We complement our theoretical results with simulations illustrating how delegation graphs of Nash equilibria, found using our best-response dynamic (\Cref{app:BRD}), can appear in synthetic instances. 
We create instances at random of various sizes: $n \in \{20, 50, 100, 200\}$.
For each instance size, we create {$100$} instances comprised of $\x, \p$ and $\Btau$ chosen uniformly at random such that for each $i\in  \N$,  $p_i, \tau_i\in [0,1]$ (rounded to 2 d.p.),  and $x_i\in [0,1]$ (rounded to 3 d.p.). 
In total, we have {$400$} instances for which we ran our best-response protocol and found a NE, analyzing its structure according to average measurements for various metrics (see \cref{tab:standardStructure}). 
First, observe that all the values in the table represent the average values over the $100$ instances with the same number of voters. For example, on average, each of the 100 instances when $n=20$ have $3.83$ cycles, and when $n=50$, the average width of a cycle, averaged over all {$100$} instances, is $0.04$.  
We observe that the number of voters in a component grows with $n$, as well as the number of cycles. 
Both the width of cycles and components (i.e., maximum distance between voters therein) decrease as the number of voters increases, with voters in the same component, especially in cycles, having very close positions. The proportion of voters who prefer self-delegation over participating in a cycle remains steady at approximately $5\%$, similar to the proportion in weakly connected components of~size~$1$.

\begin{table}
    \centering
    \begin{tabular}{lccc}
    \toprule
         &$\Btau_{\leq 1}$  & $\Btau_{\leq 0.75}$  & $\Btau_{\leq 0.5}$ \\
         \midrule
      Average tolerance & 0.50  &0.37  & 0.25 \\
     Number of SCCs & 5.55& 6.32 &7.60\\
     Number of cycles & 
     4.59 &5.07& 5.78\\
     Number of self loops  & 
     0.96 &1.25 &1.82\\
     Number of WCCs of size $1$ & 
     0.86 &1.10& 1.54\\
     Average number of paths into SCC& 
     6.73 &7.15 &7.79\\
     Average width of SCC   & 
     0.59 &0.57 &0.53\\
     Average width of WCC   & 
     0.77&0.74 &0.70\\
     Average size of cycle & 
     4.54 &4.25 &3.91\\
     \bottomrule
    \end{tabular}
    \caption{Reporting various average measurements of the structure each of the $500$ equilibria found for each tolerance vector $\Btau_{\leq k }$ with $k\in \{0.5, 0.75, 1\}$. Note that SCC denotes strongly connected components (i.e., cycles and self-loops), and WCC denotes weakly connected components (i.e., an SCC and the paths entering the SCC).
    }
    \label{tab:taustacture}
\end{table}
The second way we analyze the structure of a NE within the default delegation model is via the impact of the randomly chosen values of $\Btau$. Thus, we restrict the following experiments to our previous {$100$} instances when $n=50$, and we then study the impact of scaling the tolerance vectors by $0.75$ and $0.5$. 
We take $5$ vectors $\Btau_{\leq 1}\in[0,1]^n$ chosen uniformly at random for each pair $\x, \p$.  We let $\Btau_{\leq 0.75}= 0.75\times \Btau$ and $\Btau_{\leq 0.5}= 0.5\times \Btau$, scaling each value in the vector by either $0.75$ or $0.5$.
For $\x, \p$ and each $\Btau_{\leq k}$ with $k\in \{1,0.75, 0.5\}$, a NE is found via our best response protocol (see \Cref{Alg:BRlongterm} in \Cref{app:BRD}). We then take the average measurements over the 500 equilibria found from a certain type of $\Btau$. These values can be found in \Cref{tab:taustacture}. 



Observe that the number of connected components increases while the values in the tolerance measures decrease, and we see that the number of connected components ending in a self-loop increases slightly as well. Moreover, the number of connected components of size one, i.e., voters delegating to themselves without receiving any delegations, also increases as the tolerance vector decreases.
Another measure we examined was the width of a cycle and a weakly connected component, i.e., the largest distance between any two voters within it. The width of cycles and weakly connected components and the size of cycles all decrease slowly as the tolerance vectors decrease, consistently maintaining notably small widths. 


\subsection{Experimental Analyses of Approximating SW}\label{app:ApproximatingSWex}

Our experimental setup created 400 instances for each \( n \in \{5, 6, 7, 8\} \), where each \( i \in \N\) had a distinct position \( x_i \in [0, 1] \) rounded to three decimal places where no two voters are at the same position, a randomly chosen \( \tau_i \in (0, 1) \) without rounding, and a \( p_i \in (0, 1) \) rounded to two decimal places. 
For each of these instances, we identified the delegation profile \( \de_{SW} \) that maximized \( SW(\de_{SW}) \) and computed \( \ODP(\inst) \). We then examined the values of \( \nicefrac{SW(\de_{SW})}{n} \) and \( \nicefrac{\ODP(\inst)}{n} \), which allowed us to calculate the increase in the average expected utility of a voter in \( \de_{SW} \) compared to their optimal path. This led to \( \nicefrac{\ODP(\inst)}{n} \) being 3.0\%, 3.0\%, 2.8\%, and 2.6\% higher than \( \nicefrac{SW(\de_{SW})}{n} \) for \( n = 5, 6, 7, 8 \), respectively. Thus, our upper bound \( \ODP(\inst) \) is close to the highest possible social welfare \( SW(\de_{SW}) \). However, there appears to be a trend of decreasing distance as the number of voters grows.


\begin{table}[t!]
    \centering
\begin{tabular}{cccc}
\toprule
     & $\dne$ &$\de_{\texttt{acyc}}$&$\de_{\texttt{dir}}$\\
     \midrule
   20 voters  & 0.95&0.86&0.50\\
   50 voters  &0.97 &0.90&0.49\\
   100 voters  &0.97 &0.93&0.50\\
   200 voters  &0.97&0.93&0.50\\
   
    $\Btau_{\leq 1 }$ &0.90&0.80&0.38\\
    $\Btau_{\leq 0.75 }$ &0.88 &0.78&0.38\\
   $\Btau_{\leq 0.5 }$  &0.85&0.75&0.38\\
   \bottomrule
\end{tabular}
    \caption{The average percentage of the votes cast across the {$100$} instances when varying $n\in \{20,50, 100, 200\}$ and $500$ instances when varying $\Btau_{\leq k }$ for $k \in \{1, 0.75, 0.5\}$. The three models we compare the percentage of votes cast are the equilibria within our default delegation model (\texttt{NE}), acyclic liquid democracy (\texttt{acyc}), and direct democracy (\texttt{dir}).}
    \label{tab:Votecast}
\end{table}

\subsection{Expected Number of Votes Cast in Equilibria}\label{sec:votescastexperiments}

    
 
We now revisit our instances with varying $n$ and $\Btau_{\leq k }$ (from \cref{app:setup}), which include $100$ and $500$ instances for each variant, respectively. To contextualize our model, we compare the expected percentage of votes cast under three scenarios: a NE default delegation model $\texttt{NE}$, acyclic liquid democracy $\texttt{acyc}$, and direct democracy \texttt{dir}. We use the delegation profiles for the three models described in \Cref{app:setup}. 
We computed the average expected percentage of votes cast in each instance for all three models, and the results are presented in \Cref{tab:Votecast}. 

When varying the number of voters, we observe that in the default delegation model, a random NE achieves a very high percentage of votes being cast, starting at $95\%$ for 20 voters and steadily increasing as the number of voters grows. A similar pattern is seen in the \texttt{acyc} setting, though the increase is sharper, and, even at its peak (with 200 voters) the percentage of votes cast in the \texttt{acyc} setting remains below the lowest value achieved in the default delegation model (achieved for 20 voters). In stark contrast, the direct democracy setting consistently loses around half of the votes, which corresponds to the average probability of abstaining. These results highlight the significant advantage of our framework in preserving voting power. 
When varying the values of $\Btau_{\leq k }$ for $k \in \{1,0.75, 0.5\}$ leads to $90\%$, $88\%$, and $85\%$ expected votes cast in the default delegation model.


\end{document}

\end{document}